\documentclass[a4paper,11pt,reqno]{article}
\usepackage{etoolbox,xstring}
\providebool{csub}
\providebool{jour}
\setbool{csub}{false}
\setbool{jour}{true}

\usepackage[utf8]{inputenc}
\usepackage{amsmath, amssymb, amsfonts, amsthm}
\usepackage{bm}
\renewcommand{\mathbf}[1]{\boldsymbol{#1}}

\usepackage{nicefrac}

\usepackage{libertine}
\usepackage{microtype}
\usepackage[T1]{fontenc}

\usepackage{booktabs,siunitx}


\usepackage{comment}
\usepackage{marginnote}
\usepackage{xcolor}
\usepackage{framed}

\usepackage[noadjust]{cite}
\newcommand{\arxiv}[1]{\href{http://arxiv.org/abs/#1}{arxiv:#1}}
\usepackage[colorlinks=true,citecolor=blue,bookmarksnumbered=true,hyperfootnotes=false]{hyperref}

\newtheorem{theorem}{Theorem}[section]

\newtheorem{condition}[theorem]{Condition}

\newtheorem{lemma}[theorem]{Lemma}
\newtheorem{corollary}[theorem]{Corollary}
\theoremstyle{definition}
\newtheorem{definition}{Definition}[section]
\newtheorem{remark}{Remark}[section]



\newcommand{\abs}[1]{\ensuremath{\left|#1\right|}}
\newcommand{\norm}[2][]{\ensuremath{\Vert #2 \Vert_{#1}}}

\newcommand{\diff}[2]{\frac{\text{d}#1}{\text{d}#2}}
\newcommand{\pdiff}[2]{\frac{\partial #1}{\partial #2}}


\renewcommand{\Pr}[2][]{\ensuremath{\mathbb{P}_{#1}\insq{#2}}}
\newcommand{\E}[2][]{\ensuremath{\mathbb{E}_{#1}\insq{#2}}}

\newcommand{\ina}[1]{\left<#1\right>}
\newcommand{\inb}[1]{\left\{#1\right\}}
\newcommand{\inp}[1]{\left(#1\right)}
\newcommand{\insq}[1]{\left[#1\right]}

\makeatletter
\newcommand*{\defeq}{\mathrel{\rlap{%
                     \raisebox{0.3ex}{$\m@th\cdot$}}%
                     \raisebox{-0.3ex}{$\m@th\cdot$}}%
                    =}
\makeatother



\newcommand{\Z}[0]{\ensuremath{\mathbb{Z}}}
\newcommand{\R}[0]{\ensuremath{\mathbb{R}}}




\renewcommand{\vec}[1]{\bm{#1}}

\newcommand{\Ts}[1]{\ensuremath{T_{SAW}\inp{#1}}}


\newcommand{\G}[0]{\mathcal{G}}
\newcommand{\br}[1]{\mathrm{br}\inp{#1}}

\newcommand{\family}[1]{{\mathcal{#1}}}

%
%
%

%
%
%
%
\usepackage[paperwidth=8.5in,paperheight=11.0in,margin=1.0in]{geometry}

%
%
%
%


\pagestyle{plain}
\begin{document}
\title{Spatial mixing and the connective constant: Optimal bounds}
\newcommand{\sincgrant}{NSF grant CCF-1016896}
\newcommand{\calgrant}{NSF grant CCF-1319745}
{\renewcommand{\thefootnote}{} \footnotetext{Weaker versions of some
    of the results in this paper appeared in  \emph{Proceedings of the
    IEEE Symposium on the Foundations of Computer Science
      (FOCS)}, 2013, pp. 300-309~\cite{sinclair13:_spatial}. This
    version strengthens the main result (Theorem 1.3) of
    \cite{sinclair13:_spatial} to obtain an optimal setting of
    the parameters, and adds new results for the monomer-dimer
    model.}  }

  \author{
  {Alistair Sinclair}\thanks{Alistair Sinclair, Computer
    Science Division, UC Berkeley. Email:
    \texttt{sinclair@cs.berkeley.edu}. Supported in part by
    \sincgrant~and by the Simons Institute for the Theory of
    Computing.} \and
    {Piyush Srivastava}\thanks{Piyush Srivastava, Center for
      the Mathematics of Information, Caltech. Email:
      \texttt{piyushsriva@gmail.com}. Supported by \calgrant.  This
      work was done while this author was a graduate student at UC
      Berkeley and was supported by \sincgrant.} \and
    {Daniel {\v{S}tefankovi\v{c}}}\thanks{Daniel
      {\v{S}tefankovi\v{c}}, Department of Computer Science, University
      of Rochester. Email: \texttt{stefanko@cs.rochester.edu}.
      Supported in part by \sincgrant. Part of this work was done
      while this author was visiting the Simons Institute for the Theory
      of Computing.} \and
    {Yitong Yin}\thanks{Yitong Yin, State Key Laboratory for
      Novel Software Technology, Nanjing University, China. Email:
      \texttt{yinyt@nju.edu.cn}. Supported by NSFC grants 61272081 and
      61321491. Part of this work was done while this author was
      visiting UC Berkeley.}} \date{}
  \maketitle
{
  \begin{abstract}
  We study the problem of deterministic approximate counting of
  matchings and independent sets in graphs of bounded \emph{connective
    constant}.  More generally, we consider the problem of evaluating
  the partition functions of the \emph{monomer-dimer model} (which is
  defined as a weighted sum over all matchings where each matching is
  given a weight $\gamma^{|V| - 2 |M|}$ in terms of a fixed parameter
  $\gamma$ called the \emph{monomer activity}) and the \emph{hard core
    model} (which is defined as a weighted sum over all independent
  sets where an independent set $I$ is given a weight $\lambda^{|I|}$
  in terms of a fixed parameter $\lambda$ called the \emph{vertex
    activity}).  The \emph{connective constant} is a natural measure of
  the average degree of a graph which has been studied extensively in
  combinatorics and mathematical physics, and can be bounded by a
  constant even for certain unbounded degree graphs such as those
  sampled from the sparse Erdős–Rényi model $\mathcal{G}(n, d/n)$.

  Our main technical contribution is to prove the best possible rates
  of decay of correlations in the natural probability distributions
  induced by both the hard core model and the monomer-dimer model in
  graphs with a given bound on the connective constant.  These results
  on decay of correlations are obtained using a new framework based on
  the so-called \emph{message} approach that has been extensively used
  recently to prove such results for bounded degree graphs.  We then
  use these optimal decay of correlations results to obtain FPTASs for
  the two problems on graphs of bounded connective constant.

  In particular, for the monomer-dimer model, we give a deterministic
  FPTAS for the partition function on all graphs of bounded connective
  constant for any given value of the monomer activity.  The best
  previously known deterministic algorithm was due to Bayati,
  Gamarnik, Katz, Nair and Tetali~[STOC 2007], and gave the same
  runtime guarantees as our results but only for the case of bounded
  degree graphs.  For the hard core model, we give an FPTAS for graphs
  of connective constant $\Delta$ whenever the vertex activity
  $\lambda < \lambda_c(\Delta)$, where $\lambda_c(\Delta) \defeq
  \frac{\Delta^\Delta}{(\Delta - 1)^{\Delta + 1}}$; this result is
  optimal in the sense that an FPTAS for any $\lambda >
  \lambda_c(\Delta)$ would imply that NP=RP [Sly, FOCS 2010].  The
  previous best known result in this direction was a recent paper by a
  subset of the current authors~[FOCS 2013], where the result was
  established under the sub-optimal condition $\lambda <
  \lambda_c(\Delta + 1)$.

  Our techniques also allow us to improve upon known bounds for decay
  of correlations for the hard core model on various regular lattices,
  including those obtained by Restrepo, Shin, Vigoda and Tetali~[FOCS
  11] for the special case of $\mathbb{Z}^2$ using sophisticated
  numerically intensive methods tailored to that special case.
\end{abstract}
}
\thispagestyle{empty}

\newpage
\setcounter{page}{1}
\section{Introduction}
\label{sec:introduction}

\subsection{Background}
\label{sec:background}

This paper studies the problem of approximately counting independent
sets and matchings in sparse graphs.  We consider these problems
within the more general formalism of \emph{spin systems}.  In this
setting, one first defines a natural probability distribution over
\emph{configurations} (e.g., independent sets or matchings) in terms of
local \emph{interactions}.  The counting problem then corresponds to
computing the normalization constant, known as the \emph{partition
  function} in the statistical physics literature.  The partition
function can also be seen as a generating function of the
combinatorial structures being considered and is an interesting graph
polynomial in its own right.

The first model we consider is the so called \emph{hard core model},
which is defined as follows.  We start with a graph $G = (V, E)$, and
specify a \emph{vertex activity} or \emph{fugacity} parameter $\lambda
> 0$.  The configurations of the hard core model are the independent
sets of the graph, and the model assigns a \emph{weight} $w(I) =
\lambda^{|I|}$ to each independent set $I$ in $G$.
The weights in turn determine a natural probability distribution
$\mu(I) = \frac{1}{Z}w(I)$
over the independent sets known as the \emph{Gibbs
    distribution}. Here,
\[
Z = Z(\lambda) \defeq \sum_{I: \text{independent set}} w(I)
\]
is the \emph{partition function}.  Clearly, the problem of counting
independent sets is the special case $\lambda = 1$.

Our next model is the \emph{monomer-dimer model}, which has as its
configurations all matchings of a given graph $G = (V, E)$. For a
specified \emph{dimer activity} $\gamma > 0$, the model assigns a
weight $w(M) = \gamma^{|M|}$ to each matching $M$ of the graph.  As
before, the weights define the Gibbs distribution $\mu(M) =
\frac{1}{Z} w(M)$ over matchings, where
\[
Z = Z(\gamma) \defeq \sum_{M: \text{matching}} w(M)
\]
is the partition function.  The problem of counting matchings
again corresponds to the special case $\gamma = 1$.

The problem of approximating the partition function has received much
attention, both as a natural generalization of counting and because of
its connections to sampling.\footnote{For the large class of
  self-reducible problems, it can be shown that approximating the
  partition function is polynomial-time equivalent to approximate
  sampling from the Gibbs distribution~\cite{jervalvaz86}.}
Recent progress in relating
the complexity of approximating the partition function to phase
transitions, which we now describe, has provided further impetus to
this line of research.

The first such result was due to
Weitz~\cite{Weitz06CountUptoThreshold}, who exploited the properties
of the Gibbs
measure of the hard core model on the infinite $d$-ary tree.  It was
well known that this model exhibits the following phase transition:
there exists a \emph{critical activity} $\lambda_c(d)$ such that the
total variation distance between the marginal probability
distributions induced at the root of the tree by \emph{any} two
fixings of the independent set on all the vertices at distance $\ell$
from the root decays exponentially in $\ell$ when $\lambda <
\lambda_c(d) \defeq \frac{d^d}{(d-1)^{d+1}}$, but remains bounded away
from $0$ even as $\ell\rightarrow \infty$ when $\lambda >
\lambda_c(d)$. (The former condition is also referred to as
\emph{correlation decay}, since the correlation between the
configuration at the root of the tree and a fixed configuration at
distance $\ell$ from the root decays exponentially in $\ell$; it is
also called \emph{spatial mixing}.)
Weitz showed that for all $\lambda <
\lambda_c(d)$ (i.e., in the regime where correlation decay holds on
the $d$-ary tree), there exists a deterministic FPTAS for the
partition function of the hard core model on all graphs of degree at
most $d + 1$. (Note that the condition on $\lambda$ is only in terms
of the $d$-ary tree, while the FPTAS applies to all graphs.)  This
connection to phase transitions was further strengthened by
Sly~\cite{Sly2010CompTransition} (see
also~\cite{sly12,Vigoda-hard-core-11}), who showed that an FPRAS for
the partition function of the hard core model with $\lambda >
\lambda_c(d)$ on graphs of degree $d+1$ would imply NP = RP.

In addition to establishing a close connection between the complexity
of a natural computational problem and an associated phase transition, Weitz's
algorithm had the further interesting feature of not being based on
Markov chain Monte Carlo (MCMC) methods; rather, it used a deterministic
procedure based on proving that decay of correlations on the
$d$-ary tree implies decay of correlations on all graphs of degree at
most $d+1$.  To date, no MCMC algorithms are known for the
approximation of the partition function of the hard core model on
graphs of degree at most $d+1$ which run in polynomial time for all
$\lambda < \lambda_c(d)$.

Weitz's algorithm led to an exploration of his approach for other
problems too.  For example, in the case of the monomer-dimer
model---unlike that of the hard core model---there does exists a
\emph{randomized} polynomial time algorithm (based on MCMC) for
approximating the partition function which works for every $\gamma >
0$, without any bounds on the degree of the
graph~\cite{jerrum_approximating_1989}.  However, finding a
deterministic algorithm for the problem remains open.  Bayati,
Gamarnik, Katz, Nair and Tetali~\cite{bayati_simple_2007} made
progress on this question by showing that Weitz's approach could be
used to derive a deterministic algorithm that runs in polynomial time
for bounded degree graphs, and is sub-exponential on general graphs.

The algorithms of both Weitz and Bayati \emph{et al.} are therefore
polynomial time only on bounded degree graphs, and in particular, for
a given value of the parameter $\lambda$ (or $\gamma$ in the case of the
monomer-dimer model) the running time of these algorithms on graphs of
maximum degree $d+1$ depends upon the rate of decay of correlations on
the infinite $d$-ary tree.  Further, these results are obtained by
showing that decay of correlations on the $d$-ary tree implies a
similar decay on all graphs of maximum degree $d+1$.

There are two important shortcomings of such results.  First, in
statistical physics one is often interested in special classes of
graphs such as regular lattices.  One can reasonably expect that the
rate of decay of correlations on such graphs should be better than
that predicted by their maximum degree.  Second, these results have
no non-trivial consequences even in very special classes of sparse
unbounded degree graphs, such as graphs drawn from the
Erd\H{o}s-R\'{e}nyi model $\mathcal{G}(n, d/n)$ for constant $d$.

This state of affairs leads to the following natural question: is
there a finer notion of degree that can be used in these results in
place of the maximum degree?  Progress in this direction was made
recently for
the case of the hard core model in~\cite{sinclair13:_spatial}, where
it was shown that one can get decay of correlation results in terms of
the \emph{connective constant,} a natural and well-studied notion of
average degree.  The connective constant of a regular lattice of
degree $d+1$ is typically substantially less than $d$; and it is
bounded even in the case of sparse random graphs such as those drawn
from $\mathcal{G}(n, d/n)$, which have unbounded maximum degree. By
analogy with the bounded degree case, one might hope to get
correlation decay on graphs with connective constant at most $\Delta$
for all $\lambda < \lambda_c(\Delta)$.  In~\cite{sinclair13:_spatial},
such a result was proven under the stronger condition $\lambda <
\lambda_c(\Delta + 1)$.  The latter bound is tight asymptotically as
$\Delta \rightarrow \infty$ (because
$\lambda_c(\Delta+1)/\lambda_c(\Delta) \rightarrow 1$ as $\Delta
\rightarrow \infty$), but is sub-optimal in the important case of small
$\Delta$.

\subsection{Contributions}
\label{sec:our-results}

In this paper, we show that one can indeed replace the maximum degree
by the connective constant in the results of both
Weitz~\cite{Weitz06CountUptoThreshold} and Bayati \emph{et
  al.}~\cite{bayati_simple_2007}.  In particular, we show that for
both the hard core and the monomer-dimer models, decay of correlations
on the $d$-ary tree determines the rate of decay of correlations---as
well as the complexity of deterministically approximating the
partition function---in all graphs of \emph{connective constant} at
most $d$, without any dependence on the maximum degree.  The specific
notion of decay of correlations that we establish is known in the
literature as \emph{strong spatial
  mixing}~\cite{Weitz06CountUptoThreshold, goldberg_strong_2005,
  martinelli_approach_1994, martinelli_approach_1994-1}, and
stipulates that the correlation between the state of a vertex $v$ and
another set $S$ of vertices at distance $\ell$ from $v$ should decay
exponentially in $\ell$ even when one is allowed to fix the state of
vertices close to $v$ to arbitrary values (see
Section~\ref{sec:strong-spat-mixing} for a precise definition).  Prior to
the role it played in the design of deterministic approximate counting
algorithms in Weitz's work~\cite{Weitz06CountUptoThreshold}, strong
spatial mixing was already a widely studied notion in computer science
and mathematical physics for its utility in analyzing the mixing time
of Markov chains~\cite{martinelli_approach_1994,
  martinelli_approach_1994-1, goldberg_strong_2005}, and hence
an improved understanding of conditions under which it holds is of
interest in its own right.

We now give an informal description of the connective
constant~\cite{hammersley_percolation_1957,madras96:_self_avoid_walk};
see Section~\ref{sec:connective-constant} for precise definitions.
Given a graph $G$ and a vertex $v$ in $G$, let $N(v, \ell)$ denote the
number of self avoiding walks in $G$ of length $\ell$ starting at $v$.
A graph family $\mathcal{F}$ is said to have connective constant
$\Delta$ if for all graphs in $\mathcal{F}$, the number of
self-avoiding walks of length at most $\ell$ for large $\ell$ grows as
$\Delta^\ell$, i.e., if $\ell^{-1}\log \sum_{i=1}^\ell N(v, i)\sim
\log \Delta$ (the definition can be applied to both finite and
infinite graphs; see Section~\ref{sec:connective-constant}). Note that
in the special case graphs of maximum degree $d+1$, the connective
constant is at most $d$.  It can, however, be much lower that this
crude bound: for any $\epsilon > 0$, the connective constant of graphs
drawn from $G(n, d/n)$ is at most $d (1 + \epsilon)$ with high
probability (w.h.p.)~(see, e.g., \cite{sinclair13:_spatial}), even
though their maximum degree is $\Omega\inp{\frac{\log n}{\log \log n}}$
w.h.p.

Our first main result can now be stated as follows.
\begin{theorem}[\textbf{Main, Hard core model}]
  \label{thm:main-hard-core}
  Let $\mathcal{G}$ be a family of
  finite graphs of connective constant at most $\Delta$, and let
  $\lambda$ be such that $\lambda < \lambda_c(\Delta)$.  Then there is
  an FPTAS for the partition function of the hard core model with
  vertex activity $\lambda$ for all graphs in $\mathcal{G}$.  Further,
  even if $\mathcal{G}$ contains locally finite infinite graphs, the
  model exhibits strong spatial mixing on all graphs in $\mathcal{G}$.
\end{theorem}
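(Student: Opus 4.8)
The plan is to follow the standard two-step recipe that underlies Weitz's approach, but with the connective constant replacing the maximum degree throughout. First I would reduce computation of the partition function $Z_G(\lambda)$ on a graph $G$ to computing marginal probabilities $\mu_G(v \in I)$ via the standard self-reducibility identity: fixing vertices one at a time, $Z_G = \prod Z_{G_i}/Z_{G_{i+1}}$, where each ratio is expressible in terms of a single-vertex marginal in a subgraph with some vertices pinned. Then I would invoke Weitz's reduction from the graph $G$ to the self-avoiding walk tree $T_{\mathrm{SAW}}(G,v)$: the marginal of $v$ in $G$ (with arbitrary boundary pinning) equals the marginal of the root in the tree $T_{\mathrm{SAW}}(G,v)$ with an induced boundary condition, and crucially the tree has the same self-avoiding-walk growth profile as $G$, hence connective constant at most $\Delta$. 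This converts the whole problem to establishing strong spatial mixing (SSM) with an exponential rate on trees whose SAW-count grows like $\Delta^\ell$.

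The heart of the matter, then, is the SSM statement on such trees, and here I would use the message/potential-function framework advertised in the abstract. One sets up the tree recursion for the ratio (or log-ratio) of marginals, $R_v = \lambda \prod_{u \sim v} \frac{1}{1+R_u}$, passes to a well-chosen potential $\Phi$ so that the recursion becomes contractive in the $\Phi$-metric, and obtains a per-edge contraction factor. The key quantitative input is the ``amortization'' over a path of length $\ell$: rather than bounding each step's contraction by the worst case (which would only give $\lambda < \lambda_c(d_{\max})$), one shows that the \emph{product} of contraction factors along any root-to-leaf path, summed over all such paths, is controlled by $\sum_\ell N(v,\ell) \cdot (\text{geometric factor})^\ell$, so that a bound of $\Delta$ on the SAW growth suffices provided $\lambda < \lambda_c(\Delta)$. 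Getting the \emph{optimal} threshold $\lambda_c(\Delta)$ (as opposed to $\lambda_c(\Delta+1)$ in the earlier FOCS version) is exactly where the right choice of potential and a careful, possibly degree-dependent, accounting of the contraction at each node is needed; this is the main obstacle, and I would expect it to require an inequality of the form ``$\sum_{u \sim v} (\text{step contraction at edge } vu) \le (\text{something depending only on } \Delta)$'' that is tight at the $d$-ary tree with $d = \Delta$, together with a telescoping/averaging argument over the (possibly irregular, unbounded-degree) tree.

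Finally, once SSM with rate $1 - \Omega(1)$ per step is in hand on $T_{\mathrm{SAW}}(G,v)$, the FPTAS follows by the usual truncation argument: compute the tree recursion only to depth $O(\log(n/\epsilon))$, which by SSM incurs only a multiplicative $(1 \pm \epsilon/\mathrm{poly}(n))$ error in each marginal, and multiply the $n$ such ratios to get a $(1 \pm \epsilon)$-approximation to $Z_G$ in time $n^{O(1)}$ — here one must check that the truncated tree has only $\mathrm{poly}(n/\epsilon)$ vertices, which holds because depth is logarithmic and, although degrees are unbounded, the relevant truncated subtree of $T_{\mathrm{SAW}}$ has size bounded by $\sum_{\ell \le L} N(v,\ell) \le n \cdot (\Delta+o(1))^L = \mathrm{poly}(n/\epsilon)$ by the connective-constant hypothesis. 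For the infinite-graph assertion, SSM on $T_{\mathrm{SAW}}$ is a statement about finite truncations and passes to the limit directly, giving SSM on all locally finite infinite graphs in $\mathcal{G}$ with no extra work. I expect the algorithmic and limiting parts to be essentially routine given the earlier sections; the genuinely new content, and the place where the ``optimal bounds'' of the title are won, is the contraction analysis on irregular SAW trees yielding precisely the $\lambda_c(\Delta)$ threshold.
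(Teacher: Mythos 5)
Your outer architecture coincides with the paper's: self-reducibility to marginals, Weitz's reduction to the self-avoiding walk tree (whose depth-$\ell$ vertices are in bijection with length-$\ell$ SAWs in $G$, so the connective constant bound transfers), a per-vertex contraction analysis whose path-products are summed against $\sum_\ell N(v,\ell)$, and a logarithmic-depth truncation whose cost is $\sum_{i\le\ell}N(v,i)=\mathrm{poly}(n/\epsilon)$. The passage to infinite graphs is likewise handled in the paper exactly as you say. All of this is Sections 2--4 of the paper, and your account of it is accurate.

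The gap is in the step you yourself flag as ``the main obstacle'': you correctly guess that one needs a degree-dependent accounting, tight at the $\Delta$-ary tree, of the form ``total contraction over the children of a vertex of arity $d$ is at most $1/\Delta$ per child,'' but you do not supply the idea that makes this true all the way up to $\lambda_c(\Delta)$. The paper's mechanism is specific: take the message $\phi(x)=\sinh^{-1}(\sqrt{x})$, bound the one-step error via the mean value theorem and H\"older's inequality with a \emph{$\lambda$-dependent} exponent pair $(a,q)$, where $\tfrac1q = 1-\tfrac{\Delta_c-1}{2}\log\bigl(1+\tfrac{1}{\Delta_c-1}\bigr)$ and $\Delta_c=\Delta_c(\lambda)$ solves $\lambda_c(t)=\lambda$; prove the message is ``symmetrizable'' with exponent $a\ge 2$ so that the worst-case gradient configuration is the symmetric one; and then verify (Lemmas 4.2--4.3) that the resulting decay factor $\xi_q(d)$ is maximized at $d=\Delta_c$ with value exactly $1/\Delta_c<1/\Delta$. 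The choice of norm is not a technicality: measuring the children's errors in $\ell_\infty$ recovers only the max-degree threshold, and measuring them in $\ell_2$ (the approach of the earlier FOCS version, where simple concavity suffices) provably stalls at $\lambda_c(\Delta+1)$. So any instantiation of your plan with a fixed, $\lambda$-independent norm would fail to reach the claimed optimal threshold; the adaptive $\ell_q$ analysis, together with the symmetrizability argument that replaces the convexity reasoning, is the missing ingredient that the theorem actually turns on.
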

\begin{remark}
  \label{rem:hard-core}
  In \cite{sinclair13:_spatial}, the above result was proved under the
  stronger hypothesis $\lambda < \lambda_c(\Delta + 1)$.  The above
  result therefore subsumes the main results of
  ~\cite{sinclair13:_spatial}.  It is also optimal in the following
  sense: there cannot be an FPRAS for graphs of connective constant at
  most $\Delta$ which works for $\lambda > \lambda_c(\Delta)$, unless
  NP = RP.  This follows immediately from the hardness results for the
  partition function of the hard core model on bounded degree
  graphs~\cite{sly12,Sly2010CompTransition} since graphs of degree at
  most $d+1$ have connective constant at most $d$.
\end{remark}
An immediate corollary of Theorem~\ref{thm:main-hard-core} is the
following.
\begin{corollary}
  \label{cor:hard-core-gndn}
  Let $\lambda < \lambda_c(d)$.  Then, there is an algorithm for
  approximating the partition function of graphs drawn from
  $\mathcal{G}(n, d/n)$ up to a factor of $(1 \pm \epsilon)$ which,
  with high probability over the random choice of the graph, runs in
  time polynomial in $n$ and $1/\epsilon$.
\end{corollary}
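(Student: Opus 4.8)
The plan is to deduce the corollary from Theorem~\ref{thm:main-hard-core} together with the known fact that sparse Erd\H{o}s-R\'enyi graphs have bounded connective constant; there is essentially no new difficulty here, and the argument is just a soft combination of these two inputs. The starting observation is the continuity of the map $\Delta \mapsto \lambda_c(\Delta) = \frac{\Delta^\Delta}{(\Delta-1)^{\Delta+1}}$ on $(1,\infty)$: since $\lambda < \lambda_c(d)$, there is some $\epsilon' > 0$, depending only on $d$ and $\lambda$, for which $\lambda < \lambda_c\inp{d(1+\epsilon')}$ still holds. Fix such an $\epsilon'$ and write $\Delta \defeq d(1+\epsilon')$.

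Next I would recall the estimate already used in~\cite{sinclair13:_spatial}: with probability $1-o(1)$ over the choice of $G \sim \mathcal{G}(n, d/n)$, the number of self-avoiding walks of each length out of every vertex of $G$ obeys the bounds witnessing connective constant at most $\Delta$, with the implied constants depending only on $d$ and $\epsilon'$ and --- crucially --- not on $n$. Let $\mathcal{F}$ denote the resulting family of finite graphs of connective constant at most $\Delta$ with these fixed constants, so that $\Pr{G \in \mathcal{F}} = 1-o(1)$.

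The algorithm is then the obvious one: on input the sampled graph $G$ and the accuracy parameter $\epsilon$, run the deterministic FPTAS that Theorem~\ref{thm:main-hard-core} supplies for the family $\mathcal{F}$ --- it applies precisely because $\lambda < \lambda_c(\Delta)$ --- and abort it if it ever exceeds its declared running-time bound of $\poly{n, 1/\epsilon}$, in which case output an arbitrary value. On the event $G \in \mathcal{F}$, which has probability $1-o(1)$, the FPTAS halts within this bound and returns a number within a factor $(1\pm\epsilon)$ of $Z(\lambda)$; hence the resulting procedure (deterministic once $G$ has been drawn) meets the stated guarantee with high probability over the choice of $G$, in time polynomial in $n$ and $1/\epsilon$, the hidden constants depending on $d$ and $\lambda$.

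The one place where genuine content enters is the high-probability bound on the connective constant of $\mathcal{G}(n, d/n)$ with constants uniform in $n$, which I would import wholesale from~\cite{sinclair13:_spatial}; once it is in hand the corollary is immediate. It is worth pointing out that it is the \emph{optimal} threshold $\lambda_c(\Delta)$ in Theorem~\ref{thm:main-hard-core} --- rather than the $\lambda_c(\Delta+1)$ available in~\cite{sinclair13:_spatial} --- that allows the limit $\epsilon' \to 0$ to cover the entire range $\lambda < \lambda_c(d)$; with the weaker hypothesis the same reasoning would reach only $\lambda < \lambda_c(d+1)$.
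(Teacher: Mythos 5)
Your proposal is correct and follows essentially the same route as the paper: gain slack via the monotonicity of $\lambda_c$ so that $\lambda < \lambda_c(d(1+\epsilon'))$, show that $\mathcal{G}(n,d/n)$ has connective constant at most $d(1+\epsilon')$ with high probability (with constants uniform in $n$), and then invoke Theorem~\ref{thm:main-hard-core}. The only difference is that you import the high-probability connective-constant bound from~\cite{sinclair13:_spatial}, whereas the paper reproves it inline with a short first-moment computation ($\E{\sum_{i=1}^\ell N(v,i)} \leq d^\ell\frac{d}{d-1}$, Markov's inequality, and a union bound over vertices and lengths $\ell \in [a\log n, n]$); either is acceptable since the cited fact is indeed established there.
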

Similar results for $\G(n,d/n)$ have appeared in the literature in the
context of rapid mixing of Glauber dynamics for the ferromagnetic
Ising model~\cite{mossel_exact_2013}, and also for the hard core
model~\cite{mossel_gibbs_2010,efthymiou13:_mcmc_g}.  Although the
authors of~\cite{mossel_gibbs_2010} do not supply an explicit range
of~$\lambda$ for which their rapid mixing results hold, an examination
of their proofs suggests that necessarily $\lambda<O\inp{1/d^2}$.
Similarly, the results of \cite{efthymiou13:_mcmc_g} hold when
$\lambda < 1/(2d)$.  In contrast, our bound approaches (and is always
better than) the conjectured optimal value $e/d$. Further, unlike
ours, the results of~\cite{mossel_gibbs_2010,
  mossel_exact_2013,efthymiou13:_mcmc_g} are restricted to $\G(n,
d/n)$ and certain other classes of sparse graphs.

A second consequence of Theorem~\ref {thm:main-hard-core} is a further
improvement upon the spatial mixing bounds obtained
in~\cite{sinclair13:_spatial} for various lattices, as shown in
Table~\ref {fig:1}.  For each lattice, the table shows the best known
upper bound for the connective constant and the strong spatial mixing
(SSM) bounds we obtain using these values in
Theorem~\ref{thm:main-hard-core}.  In the table, a value~$\alpha$
in the ``$\lambda$'' column means that SSM is shown to hold for the
appropriate lattice whenever $\lambda \leq \alpha$.  As expected,
improvements over our previous results in~\cite{sinclair13:_spatial}
are the most pronounced for lattices with smaller maximum degree.

The table shows that except in the case of the 2D integer lattice
$\mathbb{Z}^2$, our general result immediately gives improvements on
the best known SSM
bounds for all lattices using only previously known estimates of the
connective constant.  Not unexpectedly, our bound for $\mathbb{Z}^2$
using the connective constant as a black-box still improves upon
Weitz's bound but falls short of the bounds obtained by Restrepo
\emph{et al.}~\cite{restrepo11:_improv_mixin_condit_grid_count} and
Vera \emph{et al.}~\cite{vera13:_improv_bound_phase_trans_hard} using
numerically intensive methods tailored to this special case.
However, as we noted in~\cite{sinclair13:_spatial}, any improvement in
the bound on the connective constant would immediately yield an
improvement in our SSM bound. %
Indeed, in Appendix~\ref{sec:descr-numer-results},
we use a tighter analysis of the connective constant of a suitably
constructed self-avoiding walk tree of $\mathbb{Z}^2$ to show that SSM
holds on this lattice whenever $\lambda < 2.538$, which improves upon
the specialized bound $\lambda < 2.48$, obtained in the
papers~\cite{vera13:_improv_bound_phase_trans_hard,
restrepo11:_improv_mixin_condit_grid_count}.  We note that this
improvement would not be possible using only our earlier results
in~\cite{sinclair13:_spatial}.

\newbool{table}
\setbool{table}{true}
\ifbool{table}{
\begin{table}[t]
\begin{minipage}[t]{\textwidth}
  \centering
  \renewcommand{\thefootnote}{$\star$}
  \begin{tabular}[h]{c
      S[table-format=2]
      S[table-format=0.4]
      S[table-format=3.5]
      l
      c
      l%
    }
    \toprule
    &\multicolumn{1}{c}{Max.}&\multicolumn{1}{c}{Previous SSM bound}&
    \multicolumn{2}{c}{Connective Constant}&
    SSM bound in~\cite{sinclair13:_spatial}&
    \multicolumn{1}{c}{Our SSM bound}\\
    \cmidrule(r){3-3} \cmidrule(r){4-5} \cmidrule(r){6-6} \cmidrule(r){7-7}
    Lattice &{degree}
    & {$\lambda$}
    & \multicolumn{2}{c}{$\Delta$} & \multicolumn{1}{c}{$\lambda$}
    & \multicolumn{1}{c}{$\lambda$} \\

    \midrule

    $\mathbb{T}$  &  6  &  0.762 \cite{Weitz06CountUptoThreshold} &
    4.251 419 & \cite{alm_upper_2005}  &
    0.937 & 0.961  \\

    $\mathbb{H}$  &  3  &  4.0 \cite{Weitz06CountUptoThreshold} &
    1.847 760 & \cite{duminil-copin_connective_2012}  &
    4.706 & 4.976  \\

    $\Z^2$  &  4  &  2.48
    \cite{restrepo11:_improv_mixin_condit_grid_count,vera13:_improv_bound_phase_trans_hard}
    &  2.679 193 & \cite{poenitz00:_improv_z} &
    2.007 &  2.082 (2.538\footnotemark)  \\

    $\Z^3$  &  6  & 0.762 \cite{Weitz06CountUptoThreshold} &  4.7387 &
    \cite{poenitz00:_improv_z}  &
    0.816 & 0.822  \\

    $\Z^4$   &  8  &  0.490 \cite{Weitz06CountUptoThreshold} &  6.804
    0 & \cite{poenitz00:_improv_z}  &
    0.506 & 0.508  \\

    $\Z^5$  &  10  &  0.360 \cite{Weitz06CountUptoThreshold} &  8.860
    2 & \cite{poenitz00:_improv_z}  &
    0.367 & 0.367  \\

    $\Z^6$  &  12  &    0.285 \cite{Weitz06CountUptoThreshold} &
    10.888 6 & \cite{weisstein:_self_avoid_walk_connec_const} &
    0.288 & 0.288  \\

    \bottomrule
  \end{tabular}
  \footnotetext{$^\star$ See Appendix~\ref{sec:descr-numer-results}
for a description of how this improved bound is obtained.}
\end{minipage}
  \caption{Strong spatial mixing bounds for various lattices.  ($\mathbb{Z}^D$
    is the $D$-dimensional Cartesian lattice; $\mathbb{T}$ and $\mathbb{H}$
    denote the triangular and honeycomb lattices respectively.)}
  \label{fig:1}
\end{table}
}{
\begin{table}[t]
\begin{minipage}[t]{\textwidth}
  \centering
  \renewcommand{\thefootnote}{$\star$}
  \begin{tabular}[h]{l
      S[table-format=5]
      S[table-format=0.4]
      S[table-format=2.6]
      c
      l
    }
    \toprule
    &&\multicolumn{1}{c}{Previous best SSM bound}&\multicolumn{2}{c}{Connective Constant}&\multicolumn{1}{c}{Our SSM bound}\\
    \cmidrule(r){3-3} \cmidrule(r){4-5} \cmidrule(r){6-6}
    Lattice &{Max. degree}
    & {$\lambda$}
    & \multicolumn{2}{c}{$\Delta$} & \multicolumn{1}{c}{$\lambda$}  \\

    \midrule

    $\mathbb{T}$  &  6  &  0.762 \cite{Weitz06CountUptoThreshold} &  4.251 419 & \cite{alm_upper_2005}  &  0.961  \\

    $\mathbb{H}$  &  3  &  4.0 \cite{Weitz06CountUptoThreshold} &  1.847 760 & \cite{duminil-copin_connective_2012}  &  4.976  \\

    $\Z^2$  &  4  &  2.48
    \cite{restrepo11:_improv_mixin_condit_grid_count,vera13:_improv_bound_phase_trans_hard}
    &  2.679 193 & \cite{poenitz00:_improv_z}  &  2.082
    (2.538\footnotemark)  \\

    $\Z^3$  &  6  & 0.762 \cite{Weitz06CountUptoThreshold} &  4.7387 & \cite{poenitz00:_improv_z}  &  0.822  \\

    $\Z^4$   &  8  &  0.490 \cite{Weitz06CountUptoThreshold} &  6.804 0 & \cite{poenitz00:_improv_z}  &   0.508  \\

    $\Z^5$  &  10  &  0.360 \cite{Weitz06CountUptoThreshold} &  8.860 2 & \cite{poenitz00:_improv_z}  &  0.367  \\

    $\Z^6$  &  12  &    0.285 \cite{Weitz06CountUptoThreshold} &  10.888 6 & \cite{weisstein:_self_avoid_walk_connec_const}  &  0.288  \\

    \bottomrule
  \end{tabular}
  \footnotetext{$^\star$ See Appendix~\ref{sec:descr-numer-results}
for a description of how this improved bound is obtained.}
\end{minipage}
  \caption{Strong spatial mixing bounds for various lattices.  ($\mathbb{Z}^D$
    is the $D$-dimensional Cartesian lattice; $\mathbb{T}$ and $\mathbb{H}$
    denote the triangular and honeycomb lattices respectively.)}
  \label{fig:1}
\end{table}
}
We also apply our techniques to the study of uniqueness of the Gibbs
measure of the hard core model on general trees.  Uniqueness is a
weaker notion than spatial mixing, requiring correlations to decay to
zero with distance but not necessarily at an exponential rate (see
Section~\ref{sec:branch-numb-uniq} for a formal definition).  We
relate the phenomenon of the uniqueness of the Gibbs measure of the
hard core model on a general tree to the \emph{branching factor} of
the tree, another natural notion of average arity that has appeared in
the study of uniqueness of Gibbs measure for models such as the
ferromagnetic Ising model~\cite{lyons_ising_1989}.  The details of
these results can be found in Section~\ref{sec:branch-numb-uniq}.  %

Our second main result concerns the monomer-dimer model.
\begin{theorem}[\textbf{Main, Monomer-dimer model}]
  \label{thm:main-monomer-dimer}
  Let $\mathcal{G}$ be a family of finite graphs of connective constant at
  most $\Delta$, and let $\gamma > 0$ be any fixed edge activity.
  Then there is an FPTAS for the partition function of the
  monomer-dimer model with edge activity $\gamma$ for all graphs in
  $\mathcal{G}$.  More specifically, the running time of the FPTAS for
  producing an $(1\pm \epsilon)$ factor approximation is
  $\inp{n/\epsilon}^{O( \sqrt{\gamma \Delta} \log \Delta)}$.
\end{theorem}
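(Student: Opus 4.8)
The plan is to adapt the correlation-decay paradigm of Weitz~\cite{Weitz06CountUptoThreshold} and Bayati \etal~\cite{bayati_simple_2007}, replacing the maximum degree by the connective constant throughout by means of the message framework. First I would reduce the estimation of the partition function to the estimation of $\Theta(n)$ marginal probabilities. Fix an ordering $v_1,\dots,v_n$ of $V$ and let $G_i$ be the subgraph of $G$ induced on $\{v_i,\dots,v_n\}$; by self-reducibility, $Z(G,\gamma)=\prod_{i=1}^n \bigl(x_{v_i}^{(G_i)}\bigr)^{-1}$, where $x_v^{(H)}$ denotes the probability that $v$ is left unmatched in the monomer--dimer Gibbs distribution on $H$. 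Crucially, each $G_i$, being an induced subgraph of $G$, again has connective constant at most $\Delta$, even though it need not lie in $\mathcal{G}$. By Godsil's classical identity relating the matching polynomial of a graph to that of its path tree, $x_v^{(H)}$ equals the analogous quantity at the root of the self-avoiding walk tree $T=\Ts{H,v}$, on which it obeys the recursion $x_v=\bigl(1+\gamma\sum_u x_u\bigr)^{-1}$, the sum running over the children of $v$ in $T$. It therefore suffices to (i) establish strong spatial mixing for this recursion on $T$ at a rate controlled by $\gamma$ and $\Delta$ alone, and (ii) evaluate the truncated recursion efficiently.

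For step (i) I would symmetrize the recursion by passing to the variable $y=\Phi(x)$ for a suitable potential $\Phi$, chosen (in the spirit of the bounded-degree analysis) so that the one-step amplification of a perturbation, measured in the $y$-coordinate and aggregated over the children of a node, is well controlled; the improvement over the naive $\gamma$-dependence ultimately stems from the fact that a node with many children is forced to have a correspondingly small value of $x$. Propagating a perturbation of the values at depth $\ell$ up to the root and telescoping the $\Phi'$-factors along each root-to-leaf path, the resulting perturbation of the root value is bounded by a sum, over all self-avoiding walks $v_0,\dots,v_\ell$ of length $\ell$ from $v$ in $H$, of a product of per-vertex factors. These factors decay fast enough relative to the branching they permit that the defining estimate of the connective constant on the (appropriately weighted) number of such walks yields a root perturbation of at most $\exp\bigl(-\Omega(\ell/\sqrt{\gamma\Delta})\bigr)$, up to a mild polynomial prefactor. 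Since the matching polynomial has no real roots (Heilmann--Lieb), the recursion in fact contracts for \emph{every} $\gamma>0$ and only the rate degrades with $\gamma$, which is why the theorem carries no upper bound on the activity.

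Given this, step (ii) and the error accounting are routine. Taking $\ell=O\bigl(\sqrt{\gamma\Delta}\,\log(n/\epsilon)\bigr)$, and using that each $x_{v_i}^{(G_i)}$ is bounded below by $1/(1+\gamma(n-1))$, which is $\mathrm{poly}(1/n)$ for fixed $\gamma$, truncating each path tree at depth $\ell$ perturbs each factor $\bigl(x_{v_i}^{(G_i)}\bigr)^{-1}$ by a multiplicative factor $1\pm O(\epsilon/n)$, hence the whole product by $1\pm\epsilon$. The depth-$\ell$ truncated path tree of $G_i$ has at most $\sum_{j=0}^\ell N(v_i,j)=\Delta^{O(\ell)}=(n/\epsilon)^{O(\sqrt{\gamma\Delta}\log\Delta)}$ nodes (by the connective-constant bound on the number of self-avoiding walks), it can be constructed from $G$ and its recursion evaluated exactly in a single bottom-up sweep in time polynomial in $n$ and in its size, and forming the product of the $n$ resulting estimates gives the claimed running time. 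The same decay-of-correlations bound, now applied to a locally finite infinite graph with an arbitrary configuration imposed on the vertices beyond distance $\ell$ (and arbitrary pinnings allowed near $v$), is precisely the statement of strong spatial mixing on such graphs.

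The main obstacle is step (i): making the potential-function contraction interact correctly with the connective constant. In the bounded-degree world one simply multiplies a uniform per-edge contraction factor by a uniform branching number; here both the contraction factor and the local branching depend on the (possibly unbounded) degree of each node, and one must show that the sharper contraction at high-degree nodes exactly offsets the extra branching they introduce, uniformly along all self-avoiding walks. This balancing is the heart of the message framework, and extracting the bound specifically in the form $\sqrt{\gamma\Delta}$ --- rather than a weaker $\gamma\Delta$, or a bound valid only for small $\gamma$ --- requires a careful choice of $\Phi$ and careful bookkeeping in the telescoping.
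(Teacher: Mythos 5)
Your outer architecture matches the paper's: self-reducibility reduces $Z$ to $n$ monomer probabilities (and you are right that induced subgraphs inherit the connective constant bound, since their self-avoiding walks are a subset of those of $G$); Godsil's identity moves the computation to the SAW tree; the truncation depth $\ell=O(\sqrt{\gamma\Delta}\,\log(n/\epsilon))$, the lower bound $p_v\ge 1/(1+\gamma n)$, and the running time $\Delta^{O(\ell)}$ are all exactly as in the paper. The gap is in step (i), which you yourself flag as the main obstacle but then leave entirely unresolved, and the mechanism you sketch for it would not work as stated. You propose to telescope the per-edge amplification factors $\frac{\Phi(f_d(\vec z))}{\Phi(z_i)}\bigl|\partial f_d/\partial z_i\bigr|$ along each root-to-leaf path and then sum over self-avoiding walks. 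But for the monomer-dimer recurrence the per-edge factor at a node with a \emph{single} child can be as large as $1-\Theta(1/\sqrt{\gamma})$ (with no $1/\Delta$ gain), so bounding the root perturbation by $(\text{number of walks})\times(\text{max path product})\le \Delta^\ell\inp{1-\Theta(1/\sqrt{\gamma})}^\ell$ does not decay for $\Delta\ge 2$. The "sharper contraction at high-degree nodes offsets the extra branching" only when the children's errors are aggregated in a norm that distributes the contraction among them, and making this precise is the paper's actual content: one applies H\"older's inequality with a conjugate pair $(a,q)$ \emph{tuned to $\gamma$ and $\Delta$} (specifically $q=\sqrt{1+4\gamma D}$ with $D=\max(\Delta,3/(4\gamma))$), proves a symmetrizability lemma showing the worst-case child configuration for the resulting optimization has all nonzero entries equal, and then computes that the per-level decay factor satisfies $\alpha\Delta\le\inp{1-\frac{2}{1+\sqrt{1+4\gamma D}}}^{q}$, which is where $\sqrt{\gamma\Delta}$ enters. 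None of the choice of message $\phi(x)=\frac12\log\frac{x}{2-x}$, the choice of exponent, or the symmetrization argument appears in your proposal, so the claimed rate $\exp(-\Omega(\ell/\sqrt{\gamma\Delta}))$ is asserted rather than proved.

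A secondary point: invoking Heilmann--Lieb to explain why contraction holds for all $\gamma>0$ is not a substitute for the quantitative estimate. The absence of an upper bound on $\gamma$ in the theorem comes out of the explicit decay-factor computation (the rate merely degrades like $1-\Theta(1/\sqrt{\gamma\Delta})$ as $\gamma$ grows); zero-freeness of the matching polynomial plays no role in the argument.
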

The previous best deterministic approximation algorithm for the
partition function of the monomer-dimer model was due to Bayati
\emph{et al.}~\cite{bayati_simple_2007}, and ran in time
$\inp{n/\epsilon}^{O( \sqrt{\gamma d} \log d)}$ for graphs of degree
at most $d+1$.  Thus, our algorithm replaces the maximum degree
constraint of Bayati \emph{et al.} by a corresponding constraint on
the connective constant, without requiring any bounds on the maximum
degree.  In particular, for graphs such as $\mathcal{G}(n, d/n)$
which have bounded connective constant and unbounded degree, our
analysis yields a polynomial time algorithm (for any fixed value of
the edge activity $\gamma$) in contrast to the sub-exponential time
algorithm obtained by Bayati \emph{et
  al.}~\cite{bayati_simple_2007}.  Using an observation of Kahn and
Kim~\cite{kahn_random_1998}, Bayati \emph{et al.} also pointed out
that the $\sqrt{d}$ factor in the exponent of their running time was
optimal for algorithms which are based on Weitz's framework and which use only
the fact that the maximum degree of the graph is at most $d+1$.  A
similar observation shows that the $\sqrt{\gamma\Delta}$ factor in
the exponent of our running time is optimal for algorithms in the
Weitz framework which use bounds on the connective constant (see the
remark at the end of Section~\ref{sec:spec-mess-monom} %
for a more detailed
discussion of this point).  As an aside, we also note that when no
bounds on the connective constant are available our FPTAS degrades
to a sub-exponential algorithm, as does the algorithm of Bayati
\emph{et al.} in the case of unbounded degree graphs.

\subsection{Techniques}
\label{sec:techniques}

The analyses by Weitz~\cite{Weitz06CountUptoThreshold} and Bayati
\emph{et al.}~\cite{bayati_simple_2007} both begin with the standard
observation that obtaining an FPTAS for the marginal probabilities of the
Gibbs distribution is sufficient in order to obtain an FPTAS
for the partition function.  The next non-trivial step is to show that
this computation of marginal probabilities at a given vertex $v$ in a
graph $G$ can be carried out on the tree $\Ts{v, G}$ of
\emph{self-avoiding walks} in $G$ starting at $v$.  Transferring the
problem to a tree allows one to write down a recurrence for the
marginal probabilities of a node in the tree in terms of the marginal
probabilities of its children.  However, since the tree is of
exponential size, one needs to truncate the tree at a small
(logarithmic) depth in order to obtain a polynomial time algorithm.
Such a truncation in turn introduces an ``error'' at the leaves.  The
challenge then is to show that this error contracts exponentially as
the recurrence works its way up to the root.

The approach of~\cite{Weitz06CountUptoThreshold,bayati_simple_2007}
(and similar results in~\cite{li_correlation_2011,
sinclair_approximation_2012,li_approximate_2012}) for establishing
this last condition takes the following general
form: one shows that at each step of the recurrence, the correlation
decay condition implies that the error at the parent node is less than
a constant factor (less than 1) times the maximum ($\ell_\infty$ norm)
of the errors at the children of the node.  Intuitively, this strategy
loses information about the structure of the tree by explaining the
error at the parent in terms of \emph{only one} of its children, and
hence it is not surprising that the results obtained from it are only
in terms of a local parameter such as the maximum degree.

The main technical contribution of \cite{sinclair13:_spatial} was to
show that, by analyzing the decay in terms of the $\ell_2$
norm---rather than the $\ell_\infty$ norm---of the errors at the
children, one can get past this limitation and obtain a result in terms of
the connective constant. Nevertheless, as stated above, the results
obtained in~\cite{sinclair13:_spatial} did not hold over the best
possible range of parameters.  Our main innovation in the present
paper is to analyze instead a norm \emph{adapted to} the parameters of
the model, rather than a fixed norm such as $\ell_2$ or $\ell_\infty$.
Specifically, we show that optimal results can be obtained by
analyzing the decay in terms of a carefully picked $\ell_q$ norm where $q$ is chosen
as a function of the connective constant and the model parameters (the
fugacity $\lambda$ in the hard core model and the edge activity
$\gamma$ in the monomer-dimer model).  At a technical level, the
use of these adaptive norms implies that we can no longer employ the
relatively simpler convexity arguments used
in~\cite{sinclair13:_spatial} in order to bound the propagation of
errors; characterizing the ``worst case'' error vectors now requires
solving a more involved optimization problem, which is the main new
technical challenge in this paper.
In Section~\ref{sec:decay-corr-saw}, we give a general framework for
tackling this problem.  Our model specific main results are then
obtained as direct applications of this framework.  We conjecture that
our framework may find applications to other approximate counting
problems as well.

\subsection{Related work}
\label{sec:related-work}

MCMC based algorithms for approximating the partition function of the
hard core model on graphs of bounded degree $d+1$ were obtained under
the condition $\lambda < 1/(d-2)$ by Luby and
Vigoda~\cite{luby_approximately_1997}, and later under the weaker
condition $\lambda < 2/(d-1)$ by Dyer and
Greenhill~\cite{dyer_markov_2000} and Vigoda~\cite{vigoda_note_2001}.
Weitz~\cite{Weitz06CountUptoThreshold} obtained an FPTAS under the
much weaker condition $\lambda < \lambda_c(d)$ by establishing a tight
connection between the algorithmic problem and the decay of
correlations on the $d$-ary tree.  This connection was further
tightened by Sly~\cite{Sly2010CompTransition} (see also Galanis
\emph{et al.}~\cite{Vigoda-hard-core-11} and Sly and Sun~\cite{sly12})
who showed that approximating the partition function of the hard core
model on $(d+1)$-regular graphs is NP-hard when $\lambda >
\lambda_c(d)$.  Weitz's algorithm was also one of the first
deterministic algorithms for approximating partition functions (along
with the contemporaneous work of Bandhopadhyay and
Gamarnik~\cite{BG08:_count_without_sampl}) which exploited decay of
correlations directly---in contrast to earlier algorithms which were
mostly based on MCMC techniques.  To date, no MCMC
based algorithms for the partition function of the hard core model
are known to have as large a range of applicability as Weitz's algorithm.

Weitz's approach has also been used to study the correlation decay
phenomenon on specific lattices.  Restrepo \emph{et
  al.}~\cite{restrepo11:_improv_mixin_condit_grid_count} supplemented
Weitz's approach with sophisticated computational arguments tailored
to the special case of $\mathbb{Z}^2$ to obtain strong spatial mixing
on $\mathbb{Z}^2$ under the condition $\lambda < 2.38$. Using even
more extensive numerical work, this bound was improved to $\lambda <
2.48$ by Vera \emph{et
  al.}~\cite{vera13:_improv_bound_phase_trans_hard}.  In contrast, a
direct application of Weitz's results yields the result only under the
condition $\lambda < \lambda_c(3) = 1.6875$.  Sampling from the hard
core model on special classes of unbounded degree graphs has also been
considered in the literature.  Mossel and Sly~\cite{mossel_gibbs_2010}
gave an MCMC based algorithm for sampling from the hard core model on
graphs drawn from $\mathcal{G}(n, d/n)$.  However, their algorithm is
only applicable in the range $\lambda <
O(1/d^2)$. Efthymiou~\cite{efthymiou13:_mcmc_g} gave an MCMC based
sampler under the much weaker condition $\lambda < 1/(2d)$.  %
{%
  Hayes and Vigoda~\cite{hayes_coupling_2006} also considered the
  question of sampling from the hard core model on special classes of
  unbounded degree graphs.  They showed that for regular graphs on $n$
  vertices of degree $d(n) = \Omega(\log n)$ and of girth greater than
  $6$, the Glauber dynamics for the hard core model mixes rapidly for
  $\lambda < e/d(n)$.  These results are incomparable to
  Theorem~\ref{thm:main-hard-core}.  The latter neither requires the
  graph to be regular, nor any lower bounds on
  its degree or girth, but it does require additional information about
  the graph in the form of its connective constant.  However, when the
  connective constant is available, then irrespective of the maximum
  degree of the graph or its girth, the theorem affords an FPTAS for
  the partition function.  %

  In contrast to the case of the hard core model, much more progress
  has been made on relating spatial mixing to notions of average
  degree in the case of the zero field \emph{ferromagnetic} Ising
  model.  Lyons~\cite{lyons_ising_1989} showed that on an arbitrary
  tree, a quantity similar in flavor to the \emph{connective
    constant}, known as the \emph{branching factor}, exactly
  determines the threshold for uniqueness of the Gibbs measure for
  this model. For the ferromagnetic Ising model on general graphs,
  Mossel and Sly~\cite{mossel_rapid_2009,mossel_exact_2013} proved
  results analogous to our Theorem \ref{thm:main-hard-core}. An
  important ingredient in the arguments in
  both~\cite{lyons_ising_1989} and
  ~\cite{mossel_rapid_2009,mossel_exact_2013} relating correlation
  decay in the zero field Ising model to the branching factor and the
  connective constant is the symmetry of the ``$+$'' and ``$-$'' spins
  in the zero field case. In work related to \cite{lyons_ising_1989},
  Pemantle and Steif~\cite{pemantle_robust_1999} defined the notion of
  a \emph{robust phase transition (RPT)} and related the threshold for
  RPT for various ``symmetric'' models such as the zero field Potts
  model and the Heisenberg model on general trees to the branching
  factor of the tree.  Again, an important ingredient in their
  arguments seems to be the existence of a symmetry on the set of
  spins under whose action the underlying measure remains invariant.
  In contrast, in the hard core model, the two possible spin states of
  a vertex (``occupied'' and ``unoccupied'') do not admit any such
  symmetry. %
} %

{%
A preliminary version of this paper~\cite{sinclair13:_spatial}
investigated decay %
of correlations in the hard core model in graphs with possibly
unbounded degree but bounded connective constant.  There, it was shown
that when $\lambda < \lambda_c(\Delta+1)$, all graphs of connective
constant at most $\Delta$ exhibit decay of correlations and also admit
an FPTAS for the partition function.  The observation that for any
$\epsilon > 0$, the connective constant of graphs drawn from
$\mathcal{G}(n, d/n)$ is at most $d(1+\epsilon)$ with high probability
was then used to derive a polynomial time sampler for the hard core
model on $\mathcal{G}(n, d/n)$ all the way up to $\lambda < e/d$
(which was the conjectured asymptotic bound).
Even for the case of bounded degree graphs, it was shown
in~\cite{sinclair13:_spatial} that estimates
on the connective constant could be used to improve the range of
applicability of Weitz's results.  This latter result was then used in
conjunction with well known upper bounds on the connective constants of various
regular lattices to improve upon the known
strong spatial mixing bounds for those lattices (these included $\Z^d$
for $d \geq 3)$.  However, in the special case of $\Z^2$, the bounds
in~\cite{sinclair13:_spatial} fell short of those obtained by Restrepo \emph{et
  al.}~\cite{restrepo11:_improv_mixin_condit_grid_count} and Vera
\emph{et al.}~\cite{vera13:_improv_bound_phase_trans_hard}.

The results of the present paper for the hard core model strengthen
and unify the two distinct results of~\cite{sinclair13:_spatial}
mentioned above, by replacing the
maximum degree constraint in Weitz's result completely by an exactly
corresponding constraint on the connective constant: in particular, we
show that graphs of connective constant $\Delta$ admit an FPTAS (and
exhibit strong spatial mixing) whenever $\lambda <
\lambda_c(\Delta)$. Thus, for example, our results extend the range of
applicability of the above quoted results of~\cite{sinclair13:_spatial}
for $\mathcal{G}(n, d/n)$ to $\lambda <
\lambda_c(d)$; since $\lambda_c(d) > e/d$ for all $d \geq 1$, this is
a strict improvement on~\cite{sinclair13:_spatial}.
Regarding the question of improved strong spatial mixing bounds on
specific lattices, our results show that connective constant
computations are sufficient to improve upon the results of Restrepo
\emph{et al.}~\cite{restrepo11:_improv_mixin_condit_grid_count} and
Vera \emph{et al.}~\cite{vera13:_improv_bound_phase_trans_hard} even
in the case of $\mathbb{Z}^2$, without taking recourse to any further
numerical or computational work.
}%

In contrast to the case of the hard core model, where algorithms with
the largest range of applicability are already deterministic, much
less is known about the deterministic approximation of the partition
function of the monomer-dimer model.  Jerrum and
Sinclair~\cite{jerrum_approximating_1989} gave an MCMC-based FPRAS for
the monomer-dimer model which runs in polynomial time on all graphs
(without any bounds on the maximum degree) for any fixed value of the
edge activity $\lambda$.  However, no deterministic algorithms with
this range of applicability are known, even in the case of specific
graph families such as $\mathcal{G}(n, d/n)$.  So far, the best result
in this direction is due to Bayati, Gamarnik, Katz, Nair and
Tetali~\cite{bayati_simple_2007}, who gave an algorithm which produces
a $(1\pm \epsilon)$ factor approximation for the monomer-dimer
partition function in time $\inp{{n}/{\epsilon}}^{\tilde{O}(\sqrt{\gamma d})}$
on graphs of maximum degree $d$.  Their result therefore yields
super-polynomial (though sub-exponential) algorithms in the case of
graphs such as those drawn from $\mathcal{G}(n, d/n)$, which have
unbounded maximum degree even for constant $d$.  In contrast, the
present paper shows that the same running time can in fact be obtained
for graphs of \emph{connective constant} $d$, irrespective of the
maximum degree. The running times obtained by applying the results of
this paper to the case of $\mathcal{G}(n, d/n)$ are therefore
polynomial (and not merely sub-exponential) when $d$ is a fixed
constant.

The first reference to the connective constant occurs in classical
papers by Hammersley and Morton~\cite{hammersley_poor_1954},
Hammersley and Broadbent \cite{broadbent_percolation_1957} and
Hammersley~\cite{hammersley_percolation_1957}.  Since then, several
natural combinatorial questions concerning the number and other
properties of self-avoiding walks in various lattices have been
studied in depth; see the monograph of Madras and
Slade~\cite{madras96:_self_avoid_walk} for a survey.  %
Much work has been devoted especially to finding rigorous upper and
lower bounds for the connective constant of various
lattices~\cite{alm_upper_2005, alm_upper_1993,
  jensen_enumeration_2004, kesten_number_1964, poenitz00:_improv_z}.
Heuristic techniques from physics have also been brought to bear upon
this question. For example, Nienhuis~\cite{nienhuis_exact_1982}
conjectured on the basis of heuristic arguments that the connective
constant of the honeycomb lattice $\mathbb{H}$ must be
$\sqrt{2+\sqrt{2}}$.  Nienhuis's conjecture was rigorously confirmed
in a recent paper of Duminil-Copin and
Smirnov~\cite{duminil-copin_connective_2012}.

\section{Preliminaries}
\label{sec:preliminaries}

\subsection{Probabilities and likelihood ratios}
\label{sec:prob-likel-rati}
In this section, we define some standard marginals of the
monomer-dimer and hard core distributions.  The importance of these
quantities for our work comes from the standard ``self-reducibility''
arguments which show that obtaining an FPTAS for these marginals is
sufficient in order to obtain an FPTAS for the partition function of
these models (see Section~\ref{sec:from-prob-part} for these
reductions).

\subsubsection{Monomer-dimer model}
\label{sec:monomer-dimer-model}
\begin{definition}[\textbf{Monomer probability}]
  Consider the Gibbs distribution of
  the monomer-dimer model with dimer activity $\gamma$ on a finite
  graph $G = (V,E)$, and let $v$ be a vertex in $V$.  We define the
  \emph{monomer probability} $p(v, G)$ as
  \begin{displaymath}
    p_v \defeq \Pr{v \not\in M},
  \end{displaymath}
  which is the probability that $v$ is unmatched (i.e., a
  \emph{monomer}) in a matching $M$ sampled from the Gibbs
  distribution.
\end{definition}
\begin{remark}
  The monomer-dimer model is often described in the literature in
  terms of a \emph{monomer activity} $\lambda$ instead of the
  \emph{dimer activity} $\gamma$ used here.  In this formulation, the
  weight of a matching $M$ is $\lambda^{u(M)}$, where $u(M)$ is the
  number of unmatched vertices (\emph{monomers}) in $M$.  The two
  formulations are equivalent: with dimer activity $\gamma$
  corresponding to monomer activity $\lambda = \frac{1}{\gamma^2}$.
\end{remark}

\subsubsection{Hard core model}
\label{sec:hard-core-model}

In the case of the hard core model, we will need to define the
appropriate marginals in a slightly more generalized setting.  Given a
graph $G = (V,E)$, a \emph{boundary condition} will refer to a
partially specified independent set in $G$; formally, a boundary
condition $\sigma = (S, I)$ is a subset $S \subseteq V$ along with an
independent set $I$ on $S$. (Boundary conditions may be seen as
special instances of initial conditions defined above.)

\begin{definition}[\textbf{Occupation probability and occupation
    ratio}]
  Consider the
  hard core model with vertex activity $\lambda > 0$ on a finite graph $G$,
  and let $v$ be a vertex in $G$.  Given a boundary condition $\sigma = (S,
  I_S)$ on $G$, the \emph{occupation probability} $p_v(\sigma, G)$ at
  the vertex $v$ is the probability that $v$ is included in an independent
  set $I$ sampled according to the hard core distribution conditioned
  on the event that $I$ restricted to $S$ coincides with $I_S$.  The
  \emph{occupation ratio} $R_v(\sigma,G)$ is then defined as
  \begin{displaymath}
    R_v(\sigma,G) = \frac{p_v(\sigma,G)}{1-p_v(\sigma,G)}.
  \end{displaymath}
\end{definition}

\subsubsection{Strong Spatial Mixing}
\label{sec:strong-spat-mixing}

We present the definition of strong spatial mixing for the special
case of the hard core model; the definition in the case of the monomer
dimer model is exactly analogous.  Our definition here closely follows
the version used by Weitz~\cite{Weitz06CountUptoThreshold}.
\begin{definition}
  \textbf{(Strong Spatial Mixing).}  The hard core model with a fixed
  vertex activity $\lambda > 0$ is said to exhibit \emph{strong
    spatial mixing} on a family $\family{F}$ of graphs if for any
  graph $G$ in $\family{F}$, any vertex $v$ in $G$, and any two
  boundary conditions $\sigma$ and $\tau$ on $G$ which differ only at
  a distance of at least $\ell$ from $v$, we have
  \begin{equation*}
    \abs{R_v(\sigma, G) - R_v(\tau, G)} = O(c^\ell).
  \end{equation*}
  for some fixed constant $0 \leq c < 1$.
\end{definition}
An important special condition of the definition is when the family
$\mathcal{F}$ consists of a single infinite graph (e.g.,
$\mathbb{Z}^2$ or another regular lattice).  The constant $c$ in the
definition is often referred to as the \emph{rate} of strong spatial
mixing.

\subsection{Truncated recurrences with initial conditions}
\label{sec:trunc-recurr-with}

As in the case of other correlation decay based algorithms~(e.g., in
\cite{gamarnik_correlation_2007, Weitz06CountUptoThreshold}), we will
need to analyze recurrences for marginals on rooted trees with various initial conditions.
We therefore set up some notation for describing such
recurrences.  For a vertex $v$ in a tree $T$, we will denote
by $|v|$ the distance of $v$ from the root of the tree.
Similarly, for a set $S$ of vertices, $\delta_S \defeq \min_{v \in S}
|v|$.
\begin{definition}[\textbf{Cutset}]
  Let $T$ be any tree rooted at $\rho$. A
  \emph{cutset} $C$ is a set of vertices in $T$ satisfying the
  following two conditions: %
  \begin{enumerate}
  \item Any path from $\rho$ to a leaf $v$ with $|v| \geq \delta_C$
    must pass through $C$.
  \item The vertices in $C$ form an antichain, i.e., for any vertices
    $u$ and $v$ in $C$, neither vertex is an ancestor of the other in
    $T$.
  \end{enumerate}

  A trivial example of a cutset is the set $L$ of all the leaves of
  $T$.  Another example we will often need is the set $S_\ell$ of all
  vertices at distance $\ell$ from $\rho$ in $T$.
\end{definition}
\begin{remark}
  In Section~\ref{sec:more-results-trees}, we will need to use cutsets
  on rooted locally finite infinite trees.  In this case, the first
  condition in the definition changes to ``any infinite path starting
  from the root $\rho$ must pass through $C$.''
\end{remark}
{ %
For a cutset $C$, we denote by $T_{\leq C}$ the subtree of $T$
obtained by removing the descendants of vertices in $C$ from $T$, and
by $T_{< C}$ the subtree of $T$ obtained by removing the vertices in
$C$ from $T_{\leq C}$.  Further, for a vertex $u$ in $T$, we denote by
$T_u$ the subtree of $T$ rooted at $u$, and by $T_{u, \leq C}$ and
$T_{u, < C}$ the intersections of $T_u$ with $T_{\leq C}$ and $T_{<
  C}$ respectively. %
}%
\begin{definition}[\textbf{Initial condition}]
  An \emph{initial condition} $\sigma = (S, P)$ is a set $S$ of
  vertices in $T$ along with an assignment $P: S \rightarrow [0,b]$ of
  bounded positive values to vertices in $S$.
\end{definition}
We are now ready to describe the tree recurrences.  Given an initial
condition $\sigma = (S, P)$ along with a default value $b_0$ for the
leaves, a family of functions $f_d:[0,b]^d
\rightarrow [0,b]$ for every positive integer $d \geq 1$, and a vertex
$u$ in $T$, we let $F_{u}(\sigma)$ denote the value obtained at $u$ by
iterating the tree recurrences $f$ on the subtree $T_{u}$ rooted at $u$ under the
initial condition $\sigma$.  Formally, we define $F_u(\sigma) = b_0$
when $u\not \in S$ is a leaf, and
\begin{equation}
  F_u(\sigma) =
  \begin{cases}
    P(u) & \text{when $u \in S$,}\\
    f_{d}\inp{F_{u_1}(\sigma),\dotsc,F_{u_d}(\sigma)}&\parbox[c]{0.3\textwidth}{when
      $u \not\in S$ is of arity $d\geq 1$ and has children $u_1, u_2,
      \dotsc, u_d$.}\label{eq:5}
  \end{cases}
\end{equation}
\subsection{The self-avoiding walk tree and associated recurrences}
\label{sec:tree-recurrence}

Given a vertex $v$ in a graph $G$, one can define a rooted tree
$\Ts{v,G}$ of self-avoiding walks (called the \emph{self-avoiding walk
  tree}, or \emph{SAW tree}) starting at $v$, as follows: the root of the
tree represents the trivial self-avoiding walk that ends at $v$, and
given any node $u$ in the tree, its children represent all possible
self-avoiding walks than can be obtained by extending the
self-avoiding walk represented by $u$ by exactly one step.  The
importance of the self-avoiding walk tree for computation stems from
the beautiful results of Godsil~\cite{godsil_matchings_1981} (for the
monomer-dimer model) and Weitz~\cite{Weitz06CountUptoThreshold} (for
the hard core model), which allow the derivation of simple recurrences
for the monomer probability $p_v(G)$ on general graphs.  We begin
with the case of the monomer-dimer model.
\subsubsection{Monomer-dimer model}
\begin{theorem}[\textbf{Godsil~\cite{godsil_matchings_1981}}]
\label{thm:saw-tree}
Let $v$ be a vertex in a graph $G$, and consider the monomer-dimer
model with dimer activity $\gamma > 0$ on the graphs $G$ and $\Ts{v,
  G}$. We then have
\begin{equation*}
p_v(G) = p_v(\Ts{v, G}).
\end{equation*}
\end{theorem}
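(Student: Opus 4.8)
The plan is to prove, by induction on the number of vertices of $G$, the stronger statement that the monomer probability at a vertex obeys a local recurrence whose unrolling visits exactly the self-avoiding walks out of that vertex; Godsil's identity will then fall out immediately.

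First I would pass from probabilities to the matching generating polynomial. Write $Z(H) \defeq \sum_{M \text{ matching of } H} \gamma^{|M|}$ for an arbitrary finite graph $H$; note $Z(H) \geq 1$ always, and $Z$ is multiplicative over connected components. Since the matchings of $G$ that leave $v$ unmatched are precisely the matchings of $G-v$ (with unchanged weight), we get $p_v(G) = Z(G-v)/Z(G)$; in particular $p_v(G)$ depends only on the connected component of $v$. Splitting the matchings of $G$ according to whether $v$ is unmatched or matched to one of its neighbors $u_1,\dots,u_d$ gives
\[
  Z(G) = Z(G-v) + \gamma \sum_{i=1}^{d} Z\inp{G - v - u_i},
\]
because a matching pairing $v$ with $u_i$ restricts to a matching of $G-v-u_i$ and carries one extra factor $\gamma$. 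Dividing by $Z(G-v)$ and using $Z(G-v-u_i)/Z(G-v) = p_{u_i}(G-v)$ yields the key recurrence
\[
  p_v(G) = \frac{1}{1 + \gamma \sum_{i=1}^{d} p_{u_i}(G - v)}.
\]
The same argument applies verbatim to a rooted tree $T$ with root $\rho$ of arity $d$ and children $\rho_1,\dots,\rho_d$: deleting $\rho$ disconnects $T$ into the subtrees $T_{\rho_i}$, so by multiplicativity $p_{\rho_i}(T-\rho) = p_{\rho_i}(T_{\rho_i})$, and hence $p_\rho(T) = \inp{1 + \gamma \sum_i p_{\rho_i}(T_{\rho_i})}^{-1}$.

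Now I would run the induction on $|V(G)|$. The base case $|V(G)|=1$ is immediate (both sides equal $1$). For the inductive step, the structural observation is that the subtree of $\Ts{v,G}$ hanging from the child corresponding to the edge $v\to u_i$ is canonically isomorphic to $\Ts{u_i, G-v}$: a self-avoiding walk $(v,u_i,w_1,\dots,w_k)$ in $G$, with its first vertex deleted, is exactly a self-avoiding walk in $G-v$ starting at $u_i$, and this correspondence is a tree isomorphism preserving all recurrence data. Applying the tree recurrence at the root of $\Ts{v,G}$, then this isomorphism, then the inductive hypothesis (legitimate since $|V(G-v)| < |V(G)|$), and finally the graph recurrence,
\[
  p_v(\Ts{v,G}) = \frac{1}{1 + \gamma \sum_{i=1}^{d} p_{u_i}\inp{\Ts{u_i, G - v}}} = \frac{1}{1 + \gamma \sum_{i=1}^{d} p_{u_i}(G - v)} = p_v(G).
\]

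The step needing care, rather than routine calculation, is the unrolling itself: one must check that the chain of vertex-deletions produced by iterating the recurrence genuinely tracks self-avoiding walks (no vertex is ever revisited, and the recursion tree is literally $\Ts{v,G}$), and that the matching-splitting identity and multiplicativity of $Z$ behave correctly on the possibly disconnected graphs $G-v-u_i$ that appear. I would also remark that no ``phantom boundary'' device is required here, in contrast to Weitz's hard-core construction, precisely because we delete vertices rather than condition on occupations; this is essentially Godsil's original argument (usually phrased via the matching polynomial and its coincidence with the characteristic polynomial on forests), recast in the probabilistic form we will use.
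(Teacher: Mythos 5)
Your proof is correct and complete. Note that the paper does not prove this statement at all: it is quoted as a black box from Godsil's 1981 paper, so there is no in-paper argument to compare against. Your derivation is the standard probabilistic rendering of Godsil's result: the identity $p_v(G)=Z(G-v)/Z(G)$, the split of $Z(G)$ according to the partner of $v$ giving $p_v(G)=\bigl(1+\gamma\sum_i p_{u_i}(G-v)\bigr)^{-1}$, the isomorphism between the subtree of $\Ts{v,G}$ below the child for the edge $v\to u_i$ and $\Ts{u_i,G-v}$, and induction on $|V(G)|$. All the steps check out, including the two you flag as needing care: the ratio $Z(G-v-u_i)/Z(G-v)=p_{u_i}(G-v)$ is just the definition of the monomer probability in $G-v$ and needs no connectivity assumption, and multiplicativity of $Z$ over components is only invoked where it is actually valid (to identify $p_{\rho_i}(T-\rho)$ with $p_{\rho_i}(T_{\rho_i})$ after deleting the root of a tree). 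Your argument also transparently yields the tree recurrence $p_\rho(T)=f_{d,\gamma}(p_1,\dots,p_d)$ that the paper states immediately after the theorem (citing Kahn and Kim), so the two facts the paper imports from the literature both fall out of your single induction. Godsil's original proof goes through the matching polynomial and the ``path tree,'' which is more algebraic but equivalent; your probabilistic recasting is the form actually used in the correlation-decay literature and is the cleaner route for the purposes of this paper.
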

The promised recurrence for $p_v(G)$ can now be derived using dynamic
programming on the tree $\Ts{v, G}$.  In particular, let $T$ be any
tree rooted at $\rho$, and let $\rho_i$, $1 \leq i \leq d$ be the
children of $\rho$.  Denoting by $p_i$ the monomer probability
$p_{\rho_i}(T_{\rho_i})$ at the root of the subtree $T_{\rho_i}$, one can then show that~(see, e.g.,
\cite{kahn_random_1998})
\begin{equation}
  p_\rho(T) = f_{d,\gamma}(p_1, p_2, \ldots, p_d) \defeq
  \frac{1}{1 + \gamma\sum_{i=1}^dp_i}.
\end{equation}
\begin{remark}
  In what follows, we will often suppress the dependence of
  $f_{d,\gamma}$ on $\gamma$ for convenience of notation.
\end{remark}
In terms of our notation for tree recurrences, we note that the actual
computation of $p_\rho(T)$ corresponds to computing
$F_\rho(\vec{1}_L)$, where the initial condition $\vec{1}_L$ assigns
the value $1$ to all vertices in $L$, the cutset comprising all the
leaves (and with the boundary value $b_0$ set to $1$), since the base
case of the recurrence comprises a single vertex which has monomer
probability $1$ by definition.

Note that the self-avoiding tree can be of exponential size, so that
Godsil's reduction does not immediately yield an efficient algorithm
for computing $p_\rho(G)$.  In order to obtain an algorithm, we would
need to consider truncated versions of the recurrence, obtained by
specifying initial conditions on the cutset $S_\ell$ comprising all
vertices at distance $\ell$ from $\rho$.  Since $f_{d,
  \gamma}$ is monotonically decreasing in each of its arguments, we have
\begin{equation}
  \begin{aligned}
    F_\rho(\vec{0}_\ell) \leq p_\rho(T) \leq F_\rho(\vec{1}_\ell)
    &\quad\text{when $\ell$ is even, and}\\
    F_\rho(\vec{0}_\ell) \geq p_\rho(T) \geq F_\rho(\vec{1}_\ell)
    &\quad\text{when $\ell$ is odd.}\\
  \end{aligned}
\end{equation}
Here, the initial condition $\vec{0}_\ell$ (respectively,
$\vec{1}_\ell$) assigns the value $0$ (respectively, 1) to every
vertex in $S_\ell$.  Given these conditions, it is sufficient to show
that the difference between $F_\rho(\vec{0}_\ell)$ and
$F_\rho(\vec{1}_\ell)$ decreases exponentially in $\ell$ in order to
establish that truncated versions of the recurrence converge to the
true answer $p_\rho(T)$ exponentially fast in the ``truncation
length'' $\ell$.

\subsubsection{Hard core model}
\label{sec:hard-core-model-1}
Weitz~\cite{Weitz06CountUptoThreshold} proved a reduction similar to
that of Godsil for the hard core model. However, in contrast to
Godsil's reduction for the monomer-dimer model Weitz's reduction
requires a boundary condition to be applied to the self avoiding walk
tree.
\begin{theorem}[\textbf{Weitz~\cite{Weitz06CountUptoThreshold}}]
\label{thm:weitz-saw-tree}
Let $v$ be a vertex in a graph $G$, and consider the hard core model
with vertex activity $\lambda > 0$ on the graphs $G$ and $\Ts{v,
  G}$. Then, there exists an efficiently computable boundary condition
$\mathcal{W}$ on $\Ts{v, G}$ such that for any boundary condition
$\sigma$ on $G$, we have
\begin{equation}
  R_v(\sigma, G) = R_v(\mathcal{W} \cup \sigma, \Ts{v, G}),\label{eq:2}
\end{equation}
where (1) the boundary condition $\sigma$ on the right hand side
denotes the natural translation of the boundary condition $\sigma$ on
$G$ to $\Ts{v, G}$, and (2) $\mathcal{W} \cup \sigma$ is the boundary
condition obtained by first applying the boundary condition
$\mathcal{W}$, and then $\sigma$ (that is, $\sigma$ overrides
$\mathcal{W}$ on vertices on which $\sigma$ and $\mathcal{W}$ are both
specified and disagree).
\end{theorem}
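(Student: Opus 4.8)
The plan is to reproduce Weitz's original argument, which rests on two elementary facts about the hard core partition function together with an inductive unwinding of $G$ into its self-avoiding walk tree. Writing $Z_G$ for the partition function of $G$, the first fact is the identity $Z_G = Z_{G\setminus u} + \lambda Z_{G\setminus N[u]}$, obtained by splitting independent sets according to whether they contain $u$; it gives $1-p_u(G)=Z_{G\setminus u}/Z_G$, so conditioning a vertex to be unoccupied is the same as deleting it, while conditioning it to be occupied is the same as deleting its closed neighbourhood (and multiplying $Z$ by $\lambda$). The second fact, a standard telescoping, is the one-step recurrence at $v$: if $w_1,\dots,w_d$ are the neighbours of $v$ and $G_i\defeq G\setminus\{v,w_1,\dots,w_{i-1}\}$ (so $G_1=G\setminus v$ and $G_{d+1}=G\setminus N[v]$), then chaining $Z_{G_{i+1}}/Z_{G_i}=Z_{G_i\setminus w_i}/Z_{G_i}=1-p_{w_i}(G_i)$ yields
\[
  R_v(\sigma,G) \;=\; \frac{\lambda\,Z_{G\setminus N[v]}}{Z_{G\setminus v}} \;=\; \lambda\prod_{i=1}^{d}\frac{1}{1+R_{w_i}(\sigma_i,G_i)},
\]
where $\sigma_i$ is $\sigma$ restricted to $G_i$, and with the conventions $R_v=+\infty$ (resp.\ $0$) whenever $\sigma$ pins $v$ to occupied (resp.\ unoccupied) or pins a neighbour of $v$ to occupied. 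The right-hand side is exactly the recurrence obeyed by the occupation ratio on a tree, so the heart of the matter is to recognise the trees $\Ts{w_i,G_i}$, hung below a common root $v$, as $\Ts{v,G}$, and to check that the induced boundary conditions agree.

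Next I would make the self-avoiding walk tree construction and the boundary condition $\mathcal{W}$ precise. Fix once and for all a total order on the edges incident to each vertex of $G$. Build $\Ts{v,G}$ as in Section~\ref{sec:tree-recurrence}; in addition, whenever a walk $v=x_0,x_1,\dots,x_k$ could be extended to a vertex $y=x_j$ with $j<k$ (i.e.\ the extension would close a cycle), adjoin a leaf for this extension and place it in $\mathcal{W}$, pinning it to ``unoccupied'' if the edge $\{x_k,y\}$ comes after the edge $\{x_j,x_{j+1}\}$ in the order at $y$, and to ``occupied'' otherwise. Since each such pin is determined by comparing two edges in a fixed local order, $\mathcal{W}$ is computable in polynomial time, indeed leaf by leaf. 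The guiding intuition is that an ``unoccupied'' pin on such a leaf simulates the deletion of that vertex, while an ``occupied'' pin simulates the deletion of the vertex one step earlier along the walk — precisely the two effects produced by the telescoping above.

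Finally I would prove the identity by strong induction on $|V(G)|$, the base case and the degenerate pinned cases being immediate from the conventions above. In the inductive step, apply the telescoping recurrence to $R_v(\sigma,G)$, invoke the inductive hypothesis on each triple $(w_i,\sigma_i,G_i)$, and unroll $\Ts{v,G}$ by one level; it then remains to show that the subtree of $\Ts{v,G}$ hanging below the child $v,w_i$, evaluated under $\mathcal{W}\cup\sigma$, computes the same occupation ratio as $\Ts{w_i,G_i}$ under $\mathcal{W}_i\cup\sigma_i$. This is the main obstacle: the two trees are \emph{not} isomorphic, because a walk $v,w_i,x_2,\dots$ in $G$ may revisit one of the deleted vertices $w_1,\dots,w_{i-1}$, which has no analogue in $G_i$. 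The resolution is to check that every such ``extra'' branch of $\Ts{v,G}$ is inert for the recurrence: a walk that first returns to a deleted $w_j$ ($j<i$) is, by the edge-ordering rule, terminated there with an unoccupied pin, which reproduces exactly the deletion of $w_j$ in $G_i$; and the occupied pins are generated consistently, since deleting $N[w_j]$ (the effect of an occupied pin) removes precisely the neighbour along which the order would otherwise have routed the walk. Pushing the definitions of $G_i$ and of the induced edge orders through the occupation-ratio recurrence then shows that these extra branches contribute either the ``empty-leaf'' value or the value $0$, matching $\Ts{w_i,G_i}$ term by term. Verifying that this matching is independent of the order in which the neighbours $w_1,\dots,w_d$ were peeled off — so that $\mathcal{W}$ is a well-defined function of $\Ts{v,G}$ alone — is the one genuinely delicate point, after which the induction closes and, in particular, $\mathcal{W}$ inherits the claimed efficient computability.
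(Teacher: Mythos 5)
The paper does not actually prove Theorem~\ref{thm:weitz-saw-tree}: it is imported from Weitz's paper and used as a black box, so there is no in-paper argument to compare against. Your reconstruction is essentially Weitz's original proof and is sound. The two partition-function identities (conditioning unoccupied $=$ deleting the vertex, conditioning occupied $=$ deleting its closed neighbourhood), the telescoping product giving $R_v(\sigma,G)=\lambda\prod_{i=1}^{d}\bigl(1+R_{w_i}(\sigma_i,G_i)\bigr)^{-1}$ with $G_i\defeq G\setminus\{v,w_1,\dots,w_{i-1}\}$, and the induction on $|V(G)|$ identifying $\Ts{w_i, G_i}$ with the subtree of $\Ts{v, G}$ below the child $w_i$ are exactly the right ingredients, and you correctly isolate where the real work lies: the two trees are not isomorphic, and one must check that the pinned cycle-closing leaves render the extra branches of $\Ts{v, G}$ inert. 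Three small remarks. First, your occupied/unoccupied convention (occupied when the closing edge precedes the starting edge) is the mirror image of the one described in Appendix~\ref{sec:descr-numer-results} of the paper; both yield valid theorems, but the convention is not free once the peeling order in the telescoping is fixed, and yours is the one consistent with your choice of $G_i$ — so you are internally coherent. Second, the sentence claiming that a walk reaching a deleted $w_j$ is ``terminated there with an unoccupied pin'' misstates the mechanism: the walk is not terminated at $w_j$ and no pin sits on $w_j$ itself; rather the cycle-closing copy of $v$ hanging \emph{below} that copy of $w_j$ is pinned occupied, which forces $R_{w_j}=0$ and thereby simulates the deletion of $w_j$ — the mechanism you do describe correctly in your second paragraph, so this is a slip of wording rather than of understanding. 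Third, the ``order-independence of $\mathcal{W}$'' that you single out as the genuinely delicate point is not actually required: $\mathcal{W}$ is a function of the fixed edge orderings alone, and the telescoping at each vertex is simply carried out in the order those orderings dictate, so nothing needs to be shown to be order-invariant.
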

\noindent We will often refer to a self-avoiding walk tree with
Weitz's boundary condition as ``a Weitz SAW tree''.

As in the case of the monomer-dimer model, the theorem allows the
computation of $R_v(\sigma, G)$ using natural recurrences on the tree.
Using the same notation as in the case of the monomer-dimer model, we
denote $R_{\rho_i}(\sigma, T_{\rho_i})$ as $R_i$.  It is well known
that (see, e.g., \cite{Weitz06CountUptoThreshold}) that
\begin{equation}
  R_{\rho}(\sigma, T) = f_{d,\lambda}(R_1, R_2, \ldots, R_d) \defeq
  \lambda\prod_{i=1}^d\frac{1}{1+R_i}.\label{eq:14}
\end{equation}
We now see that in terms of our
notation for tree recurrences, the computation of $R_\rho(\sigma, T)$
corresponds to computing $F_\rho(\vec{\lambda}_L \cup \sigma)$ (with
the boundary value $b_0$ for leaves set to $\lambda$), where the
initial condition $\vec{\lambda}_L \cup \sigma$ assigns the value
$\lambda$ to all vertices in the set $L$ of leaves, and then applies
the boundary condition $\sigma$ (possibly overriding previously
assigned values).  Note that the boundary condition $\sigma$ assigns
$R_v = \infty$ for vertices $v$ which are set to occupied by $\sigma$,
and hence, strictly speaking, violates the requirement that initial
conditions should only assign bounded values.  However, this can be
fixed easily by observing that an initial condition which assigns $R_v
= \infty$ is equivalent to the one which assigns $R_u = 0$ to the
parent $u$ of $v$.  Thus, we may assume without loss of generality
that our initial conditions only assign values from the interval $[0,
\lambda]$.

Again, as in the case of the monomer-dimer model, we will need to work
with truncated trees.  As before, we consider initial condition
specified on cutsets $S_\ell$ of vertices at distance $\ell$ from the
root $\rho$, and use the fact that $f_{d, \lambda}$ is monotonically
decreasing in each of its arguments to see that
\begin{equation}
  \begin{aligned}
    F_\rho(\vec{0}_\ell\cup \sigma) \leq R_\rho(\sigma, T) \leq
    F_\rho(\vec{\lambda}_\ell\cup \sigma)
    &\quad\text{when $\ell$ is even, and}\\
    F_\rho(\vec{0}_\ell\cup \sigma) \geq R_\rho(\sigma, T) \geq
    F_\rho(\vec{\lambda}_\ell\cup \sigma)
    &\quad\text{when $\ell$ is odd.}\\
  \end{aligned}
\end{equation}
Here, the initial condition $\vec{0}_\ell \cup \sigma$ (respectively,
$\vec{\lambda}_\ell \cup \sigma$) assigns the value $0$ (respectively,
$\lambda$) to every vertex in $S_\ell$, after which the boundary
condition $\sigma$ is applied, possibly overriding the earlier
assignments (from the previous discussion, we can assume that the
effect of $\sigma$ is limited o setting some more vertices to $0$).
As before, it is then sufficient to show that the difference between
$F_\rho(\vec{0}_\ell \cup \sigma)$ and $F_\rho(\vec{1}_\ell \cup
\sigma)$ decreases exponentially in $\ell$ in order to establish that
truncated versions of the recurrence converge to the true answer
$R_\rho(\sigma, T)$ exponentially fast in the ``truncation length''
$\ell$.

%
\subsection{From probabilities to the partition function}
\label{sec:from-prob-part}
In this section, we review some standard facts on how approximation
algorithms for the marginal probabilities translate into approximation
algorithms for the partition function~(see, e.g,
\cite{Weitz06CountUptoThreshold, gamarnik_correlation_2007}).  We
provide the calculations here for the case of the monomer-dimer model,
and refer to Weitz~\cite{Weitz06CountUptoThreshold} for similar
calculations for the hard core model.

Let $v_1, v_2, \dotsc, v_n$ be any arbitrary ordering of the vertices
of $G$.  Since the monomer-dimer partition function of the empty graph
is $1$, we then have
\begin{align}
  Z(G) &= \prod_{i=1}^n \frac{
    Z\inp{
      G - \inb{v_1, \dotsc, v_{i-1}}
    }
  }{
    Z\inp{
      G - \inb{v_1, \dotsc, v_{i}}
    }
  }\nonumber\\
  &=\prod_{i=1}^n \frac{
    1
  }{
    p_{v_i}\inp{
      G - \inb{v_1,\dotsc,v_{i-1}}
    }
  }.
\end{align}
Suppose, we have an FPTAS for the probabilities $p_\rho$ which runs in
time $t(n, 1/\epsilon)$ and produces an output $\hat{p}$ such that
$p_\rho/(1+\epsilon) \leq \hat{p} \leq p_\rho$. Now, given $\epsilon
\leq 1$, we use the FPTAS in time $t\inp{n, {2n}/{\epsilon}}$ to
compute an approximation $\hat{p}_i$ to the $p_{v_i}\inp{G -
  \inb{v_1,\dotsc,v_{i-1}}}$.  We then have for each $i$
\begin{displaymath}
  \frac{1}{p_{v_i}\inp{G - \inb{v_1,\dotsc,v_{i-1}}}}
  \leq \frac{1}{\hat{p}_i}
  \leq  \frac{1+ \epsilon/(2n)}{p_{v_i}\inp{G - \inb{v_1,\dotsc,v_{i-1}}}}.
\end{displaymath}
By multiplying these estimates. we obtain an estimate $\hat{Z}$ of the
partition function which satisfies
\begin{equation*}
  Z(G) \leq \hat{Z} \leq Z(G)\inp{1 + \frac{\epsilon}{2n}}^n \leq
  Z(G)e^{\epsilon/2} \leq Z(G)(1+\epsilon),
\end{equation*}
where we use the condition $\epsilon \leq 1$ in the last inequality.
Thus, the total running time is $O\inp{n\cdot t\inp{n,
    {2n}/{\epsilon}}}$, which is polynomial in $n$ and
$1/\epsilon$ whenever $t$ is.  Thus, it is sufficient to derive an
FPTAS for the marginal probabilities in order to obtain an FPTAS for
the partition function.

\subsection{The connective constant}
\label{sec:connective-constant}
We now recall the definition of the connective constant of a
graph. Given a vertex $v$ in a locally finite graph, we denote by
$N(v, l)$ the number of self-avoiding walks of length $l$ in the graph
which start at $v$.  The connective constant for infinite graphs is
then defined as follows.
\begin{definition}[\textbf{Connective constant: infinite
    graphs~\cite{madras96:_self_avoid_walk}}] Let $G=(V,E)$ be a
  locally finite infinite graph.  The \emph{connective constant}
  $\Delta(G)$ of $G$ is $\sup_{v\in
    V}\limsup_{\ell\rightarrow\infty}N(v,\ell)^{1/\ell}$.
\end{definition}
\begin{remark}
  The supremum over $v$ in the definition is clearly not required for
  vertex-transitive graphs such as Cartesian lattices. Further, in
  such graphs the $\limsup$ can be replaced by a
  limit~\cite{madras96:_self_avoid_walk}.
\end{remark}
The definition was extended in~\cite{sinclair13:_spatial} to families
of finite graphs parametrized by size. As observed there, such a
parametrization is natural for algorithmic applications.
\begin{definition}[\textbf{Connective constant: finite
    graphs~\cite{sinclair13:_spatial}}] Let $\family{F}$ be a family
  of finite graphs.  The connective constant of $\family{F}$ is at
  most $\Delta$ if there exist constants $a$ and $c$ such that for any
  graph $G = (V,E)$ in $\family{F}$ and any vertex $v$ in $G$, we have
  $\sum_{i=1}^\ell N(v, i) \leq c\Delta^\ell$ for all $\ell \geq a\log
  |V|$.
\end{definition}
It is easy to see that the connective constant of a graph of maximum
degree $d+1$ is at most $d$.  However, the connective constant can be
much smaller than the maximum degree. For example, though the maximum
degree of a graph drawn from the Erdős–Rényi model $\G\inp{n, d/n}$ is
$\Theta(\log n/\log\log n)$ w.h.p, it is not hard to show
(see~\cite{sinclair13:_spatial}) that for any fixed $\epsilon > 0$,
the connective constant of such a graph is at most $d(1+\epsilon)$
w.h.p.

\begin{remark}
  Note that the connective constant has a natural interpretation as
  the ``average arity'' of the SAW tree, since vertices in $\Ts{v, G}$
  at distance $\ell$ from the root are in bijection with self-avoiding
  walks of length $\ell$ starting at $v$.
\end{remark}

\section{Decay of correlations on the SAW tree}
\label{sec:decay-corr-saw}

In this section, we lay the groundwork for proving decay of
correlations results for the tree recurrences $F_{\rho}$
defined in eq.~(\ref{eq:5}) for both the hard core and monomer-dimer
models: such a result basically affirms that truncating the recurrence
at a small depth $\ell$ is sufficient in order to approximate
$F_\rho$ with good accuracy.  Our proof will use the
\emph{message
  approach}~\cite{restrepo11:_improv_mixin_condit_grid_count,
  li_correlation_2011, sinclair_approximation_2012}, which proceeds by
defining an appropriate function $\phi$ of the marginals being
computed and then showing a decay of correlation result for this
function.
\begin{definition}[\textbf{Message~\cite{restrepo11:_improv_mixin_condit_grid_count, sinclair_approximation_2012, li_correlation_2011}}] Given a
  positive real number $b$, a \emph{message} is a strictly increasing
  and continuously differentiable function $\phi: (0, b] \rightarrow
  \mathbb{R}$, with the property that the derivative of $\phi$ is
  bounded away from $0$ on its domain.  A message $\phi$ is guaranteed
  to admit a continuously differentiable inverse, which we will denote
  by $\psi$.
\end{definition}
In the rest of this section, we will work in the abstract
framework described in Section~\ref{sec:trunc-recurr-with}, to
illustrate how the message approach can be used to get strengthened decay of
correlation estimates as compared to those obtained from direct
analyses of one step of the recurrence. We will then instantiate our
framework with appropriately chosen messages for the monomer-dimer and
the hard core models in Sections~\ref{sec:spec-mess-hard}
and~\ref{sec:spec-mess-monom}.

We begin by fixing the boundary value $b_0$ for the leaves in our
recurrence framework, and assume that the initial conditions specify
values in the interval $[0, b]$.  We assume that we have a set of tree
recurrences $f_d:[0,b]^d \rightarrow [0,b]$ for every positive integer
$d \geq 1$.  The only constraints we put on the recurrences in this
section are the following (both of which are trivially satisfied by
the recurrences for the hard core and the monomer-dimer model).
\begin{condition}[\textbf{Consistency}]
  We say that a set of recurrences $\inb{f_d}_{d \geq 1}$, where $f_d$
  is $d$-variate, are \emph{consistent} if they obey the following two
  conditions:
  \begin{enumerate}
  \item If $\vec{x} \in \R^d$ is a permutation of $\vec{y}\in\R^d$,
    then $f_d(\vec{x}) = f_d(\vec{y})$.
  \item If all but the first $k$ co-ordinates of $\vec{x} \in \R^d$
    are $0$, then $f_d(\vec{x}) = f_k(x_1, x_2, x_3, \ldots, x_k)$.
  \end{enumerate}
\end{condition}
\noindent Given the message $\phi$ (and its inverse $\psi$), we
further define $f_d^\phi$ by
\begin{displaymath}
  f_d^\phi(x_1,x_2,\ldots,x_d) \defeq \phi\inp{f_d\inp{\psi(x_1),
      \psi(x_2),\ldots,\psi(x_d)}}.
\end{displaymath}
We then have the following simple consequence of the mean value
theorem~(a proof can be found in Appendix~\ref{sec:proof-lemma-refl}).
\begin{lemma}[\textbf{Mean value theorem}]
  \label{lem:mean-value}
  Consider two vectors $\vec{x}$ and $\vec{y}$ in $\phi([0,B])^d$.
  Then there exists a vector $\vec{z} \in [0, \infty)^d$ such that
  \begin{equation}
    \abs{f_{d}^\phi(\vec{x}) - f_{d}^\phi(\vec{y})} \leq
    \Phi\inp{f_{d}(\vec{z})}
    \sum_{i=1}^d
    \frac{\abs{y_i-x_i}}{\Phi(z_i)} \abs{\pdiff{f_{d}}{z_i}},
    \label{eq:10}
  \end{equation}
  where $\Phi := \phi'$ is the derivative of $\phi$, and by a slight
  abuse of notation we denote by $\pdiff{f_{d}}{z_i}$ the
  partial derivative of $f_{d}(R_1, R_2,\ldots,R_d)$ with
  respect to $R_i$ evaluated at $\vec{R} = \vec{z}$.
\end{lemma}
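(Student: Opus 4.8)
The plan is to deduce this multivariate estimate from the ordinary one-variable mean value theorem by restricting $f_d^\phi$ to the line segment joining $\vec x$ and $\vec y$. Since $\phi$ is continuous and strictly increasing, $\phi([0,B])$ is an interval, so the box $\phi([0,B])^d$ is convex; hence the parametrized segment $\vec w(t)\defeq \vec x + t(\vec y-\vec x)$, $t\in[0,1]$, stays in the domain of $f_d^\phi$. Set $g(t)\defeq f_d^\phi(\vec w(t))$. Because $\phi$ and its inverse $\psi$ are continuously differentiable and $f_d$ is differentiable, $g$ is $C^1$ on $[0,1]$, so the mean value theorem produces $t^\star\in(0,1)$ with $f_d^\phi(\vec x)-f_d^\phi(\vec y)=g(0)-g(1)=-g'(t^\star)$; since we take absolute values at the end, the sign is irrelevant.

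The next step is a chain-rule computation of $g'(t^\star)$. Writing $\vec w^\star\defeq \vec w(t^\star)$ and $z_i\defeq \psi(w_i^\star)$, and differentiating $f_d^\phi(\vec w)=\phi\bigl(f_d(\psi(w_1),\dots,\psi(w_d))\bigr)$ along the segment, one gets
\[
  g'(t^\star)=\phi'\bigl(f_d(\vec z)\bigr)\sum_{i=1}^d \pdiff{f_d}{z_i}\,\psi'(w_i^\star)\,(y_i-x_i),
\]
where $\pdiff{f_d}{z_i}$ is the notation of the statement (the $i$-th partial of $f_d$ evaluated at $\vec z=(z_1,\dots,z_d)$). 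Since $\psi=\phi^{-1}$ maps $\phi([0,B])$ back into $[0,B]\subseteq[0,\infty)$, indeed $\vec z\in[0,\infty)^d$, as required. Two elementary identities then finish the matching with the right-hand side of \eqref{eq:10}: $\phi'(f_d(\vec z))=\Phi(f_d(\vec z))$ by definition of $\Phi$, and the inverse-function rule gives $\psi'(w_i^\star)=1/\phi'(\psi(w_i^\star))=1/\Phi(z_i)$ — which is legitimate precisely because the definition of a message requires $\Phi=\phi'$ to be bounded away from $0$ on its domain. Substituting, taking absolute values, and using $\Phi>0$ (as $\phi$ is strictly increasing) yields
\[
  \abs{f_d^\phi(\vec x)-f_d^\phi(\vec y)}=\abs{g'(t^\star)}\leq \Phi\bigl(f_d(\vec z)\bigr)\sum_{i=1}^d \frac{\abs{y_i-x_i}}{\Phi(z_i)}\abs{\pdiff{f_d}{z_i}},
\]
which is exactly \eqref{eq:10}.

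I expect the only slightly delicate point — and hence the part to handle with a little care — to be the boundary behaviour: $\phi$ is defined only on a half-open interval, so if a coordinate of $\vec x$ or $\vec y$ sits at the left endpoint of $\phi$'s domain one should read $g'$ there as a one-sided derivative (harmless, since the mean value theorem only needs differentiability on the open interval $(0,1)$ and $t^\star$ is interior), and one should check that $\psi'$ and the partials $\pdiff{f_d}{z_i}$ stay finite so that nothing in the estimate degenerates. It is worth noting that the argument uses nothing model-specific beyond differentiability of $f_d$, so it applies verbatim to the hard core recurrence \eqref{eq:14} and to the monomer-dimer recurrence; this is the routine computation deferred to Appendix~\ref{sec:proof-lemma-refl}.
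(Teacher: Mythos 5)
Your proof is correct and is essentially identical to the paper's own argument in Appendix~\ref{sec:proof-lemma-refl}: both restrict $f_d^\phi$ to the segment joining $\vec{x}$ and $\vec{y}$, apply the scalar mean value theorem, and expand the derivative by the chain rule with $z_i=\psi$ of the intermediate point, using $\psi'=1/(\Phi\circ\psi)$. Your added remarks on the convexity of $\phi([0,B])^d$ and the one-sided derivative at the endpoint are harmless extra care that the paper leaves implicit.
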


The first step of our approach is similar to that taken in the
papers~\cite{restrepo11:_improv_mixin_condit_grid_count,
  sinclair_approximation_2012,
  li_approximate_2012,li_correlation_2011} in that we will use an
appropriate message---along with the estimate in
Lemma~\ref{lem:mean-value}---to argue that the ``distance'' between
two input message vectors $\vec{x}$ and $\vec{y}$ at the children of a
vertex shrinks by a constant factor at each step of the recurrence.
Previous works~\cite{restrepo11:_improv_mixin_condit_grid_count,
  sinclair_approximation_2012, li_approximate_2012,
  li_correlation_2011} showed such a decay on some version of the
$\ell_\infty$ norm of the ``error'' vector $\vec{x} - \vec{y}$: this
was achieved by bounding the appropriate dual $\ell_1$ norm of the
gradient of the recurrence.  Our intuition is that in order to achieve
a bound in terms of a global quantity such as the connective constant,
it should be advantageous to use a more global measure of the error
such as an $\ell_q$ norm for some $q < \infty$.

{%
  In line with the above plan, we will attempt to bound the right hand
  side of eq.~(\ref{eq:10}) in terms of $\norm[q]{\vec{x}-\vec{y}}$
  for an appropriate value of $q < \infty$ by maximizing the sum while
  keeping $f_{d}(\vec{z})$ fixed. In the special case $q=2$, it is in
  fact posible to carry out this maximization using relatively simple
  concavity arguments.  This was the approach taken
  in~\cite{sinclair13:_spatial}, but the restriction $q=2$ leads to
  sub-optimal results.  Here we get past this limitation by using a
  more flexible optimization than that used
  in~\cite{sinclair13:_spatial}.  To do this, we will seek to
  establish the following property for our messages (the exponent $a$
  will be the H\"{o}lder conjugate of the value of $q$ that we
  eventually use). %
}
\begin{definition}
  Given a consistent family of recurrences $\inb{f_d}_{d \geq 1}$, a
  message $\phi$ (with $\Phi \defeq \phi'$) is said to be
  \emph{symmetrizable with exponent $a$} with respect to the family if
  it satisfies the following two conditions:
\begin{enumerate}
\item Let $\mathcal{D}$ be the domain of the recurrence family. For
  every positive integer $d$ and every real $B > 0$ for which the
  program
  \begin{align}
    \max\qquad &\sum_{i=1}^d
    \inp{
      \frac{1}{\Phi(x_i)}
      \abs{\pdiff{f_{d}}{x_i}}
    }^a,
    \qquad \text{where}\nonumber\\
    &f_d(\vec{x}) = B\nonumber\\
    &x_i \in \mathcal{D},\qquad 1\leq i\leq d\nonumber
  \end{align}
  is feasible, it also has a solution $\vec{x}$ in which all the
  non-zero entries of $\vec{x}$ are equal. (We assume implicitly that
  $0 \in \mathcal{D}$.)
\item
  $\lim_{x_i\rightarrow{0^+}}\frac{1}{\Phi(x_i)}\abs{\pdiff{f_d}{x_i}}
  = 0$ for all $d \geq 1$, and for any fixed values of the $x_j$, $j
  \neq i$.
\end{enumerate}

\end{definition}
For symmetrizable messages, we will be able to bound the quantity
$\vert f_{d}^\phi(\vec{x}) - f_{d}^\phi(\vec{y})\vert$ in terms of
$\norm[q]{\vec{x} - \vec{y}}$, where $1/a + 1/q = 1$, and our improved
correlation decay bounds will be based on the fact that
symmetrizability can be shown to hold under a wider range of values of
$q$ than that required by the concavity conditions used
in~\cite{sinclair13:_spatial}. Our bounds will be stated in terms of
the following notion of decay.

\paragraph{\bf Notation.} Given a $d$-variate function $f_d$ and a scalar
$x$, we denote by $f_d(x)$ the quantity $f_d(x, x, \ldots, x)$.

\begin{definition}[\textbf{Decay factor $\alpha$}]
  Let $\phi$ be a message with
  derivative $\Phi$, and let $a$ and $q$ be positive reals such that
  $\frac{1}{a} + \frac{1}{q} = 1$.  We define the functions
  $\Xi_{\phi,q}(d, x)$ and $\xi_{\phi,q}(d)$ as follows:
  \begin{align*}
    \Xi_{\phi,q}(d, x) &\defeq
    \frac{1}{d}\inp{\frac{\Phi(f_d(x))\abs{f_d'(x)}}{\Phi(x)}}^q;\\
    \xi_{\phi,q}(d) &\defeq \sup_{x \geq 0}\Xi_{\phi,q}(d, x).
  \end{align*}%
  The \emph{decay factor} $\alpha$ is then defined as
  \begin{equation}
    \alpha \defeq \sup_{d\geq 1}\xi_{\phi,q}(d).\label{eq:8}
  \end{equation}
\end{definition}
Armed with the above definitions, we are now ready to prove
Lemma~\ref{lem:tech}, which provides the requisite decay bound for one
step of the tree recurrence. The main technical step in applying this
lemma is to find $a, q$ as in the definition and a message $\phi$
symmetrizable with exponent $a$ for which the decay factor $\alpha$ is
small; Lemma~\ref{lem:general-tree} below then shows how the decay
factor comes into play in proving exponential decay of correlations
over the tree.
\begin{lemma}
  \label{lem:tech}
  Let $\phi$ be a message with derivative $\Phi$, and let $a$ and $q$
  be positive reals such that $\frac{1}{a} + \frac{1}{q} = 1$. If
  $\phi$ is symmetrizable with exponent $a$, then for any two vectors
  $\vec{x}, \vec{y}$ in $\phi([0, b])^d$, there exists an integer $k
  \leq d$ such that
  \begin{displaymath}
    \abs{f_d^\phi(\vec{x}) - f_d^\phi(\vec{y})}^q \leq
    {\xi_{\phi,q}(k)}\norm[q]{\vec{x} - \vec{y}}^q.
  \end{displaymath}
\end{lemma}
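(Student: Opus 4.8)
The plan is to combine the mean value estimate of Lemma~\ref{lem:mean-value} with H\"older's inequality, and then use the symmetrizability hypothesis to collapse the resulting sum of scaled partial derivatives onto a configuration in which all nonzero coordinates are equal --- which is exactly the form recognized by $\Xi_{\phi,q}$ and $\xi_{\phi,q}$.

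First I would apply Lemma~\ref{lem:mean-value} to the two message vectors $\vec{x},\vec{y}\in\phi([0,b])^d$ to obtain a vector $\vec{z}$ (whose coordinates, by inspection of that lemma's proof, may be taken in the domain $\mathcal{D}=[0,b]$ of the recurrence family) such that
\[
  \abs{f_d^\phi(\vec{x})-f_d^\phi(\vec{y})}\le\Phi\inp{f_d(\vec{z})}\sum_{i=1}^d\frac{\abs{y_i-x_i}}{\Phi(z_i)}\abs{\pdiff{f_d}{z_i}}.
\]
To the sum I would apply H\"older's inequality with the conjugate exponents $q$ and $a$ (recall $\frac{1}{a}+\frac{1}{q}=1$), pairing $\abs{y_i-x_i}$ against $\frac{1}{\Phi(z_i)}\abs{\pdiff{f_d}{z_i}}$; raising the resulting inequality to the $q$-th power gives
\[
  \abs{f_d^\phi(\vec{x})-f_d^\phi(\vec{y})}^q\le\Phi\inp{f_d(\vec{z})}^q\,\norm[q]{\vec{x}-\vec{y}}^q\inp{\sum_{i=1}^d\inp{\frac{1}{\Phi(z_i)}\abs{\pdiff{f_d}{z_i}}}^a}^{q/a}.
\]

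Next, I would set $B\defeq f_d(\vec{z})$ (which is positive, so that $\Phi(B)$ is defined) and observe that $\vec{z}$ is a feasible point for the maximization program in the definition of symmetrizability, for this value of $d$ and $B$. Hence that program has an optimal solution $\vec{x}^*$ whose nonzero coordinates are all equal to a common value $t>0$; relabelling, let these be the first $k\le d$ coordinates (if $k=0$ the program's optimum, hence the sum over $\vec{z}$, is $0$ by the second symmetrizability condition, and the claim is trivial). Since $\vec{x}^*$ is optimal its objective value bounds the sum over $\vec{z}$ from above, and to evaluate that objective I would invoke the two consistency conditions: permutation invariance makes the $k$ partial derivatives $\pdiff{f_d}{x_i}(\vec{x}^*)$, $1\le i\le k$, all equal; the dimension-reduction property gives $f_d(\vec{x}^*)=f_k(t,\dots,t)=f_k(t)$, so $B=f_k(t)$, and identifies each such partial, via the chain rule applied to $t\mapsto f_k(t,\dots,t)$, with $f_k'(t)/k$; and the second symmetrizability condition makes the $d-k$ zero coordinates contribute nothing. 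The objective at $\vec{x}^*$ therefore equals $k^{1-a}\inp{\abs{f_k'(t)}/\Phi(t)}^a$, and substituting back --- using $\frac{1}{a}+\frac{1}{q}=1$, equivalently $(1-a)q/a=-1$, which is precisely the arithmetic behind the $\frac{1}{d}$ normalization in the definition of $\Xi_{\phi,q}$ --- collapses the powers of $k$ into a single factor $\frac{1}{k}$:
\[
  \abs{f_d^\phi(\vec{x})-f_d^\phi(\vec{y})}^q\le\frac{1}{k}\inp{\frac{\Phi(f_k(t))\abs{f_k'(t)}}{\Phi(t)}}^q\norm[q]{\vec{x}-\vec{y}}^q=\Xi_{\phi,q}(k,t)\,\norm[q]{\vec{x}-\vec{y}}^q.
\]
Since $\Xi_{\phi,q}(k,t)\le\sup_{x\ge0}\Xi_{\phi,q}(k,x)=\xi_{\phi,q}(k)$, this is the assertion with this $k$.

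The step I expect to be the main obstacle is the symmetrization reduction itself: one has to check that the mean value point $\vec{z}$ is genuinely admissible for the program (its coordinates lie in $\mathcal{D}$, and the equality constraint holds by the very definition of $B$), and one has to treat with care the coordinates of $\vec{x}^*$ equal to $0$ --- where $\Phi$ need not even be defined --- by reading the objective there through the limiting second condition of symmetrizability, so that the comparison between $\vec{z}$ and the equalized configuration is legitimate and the $k$-dependent constants are accounted for exactly. Once this is set up, the remaining ingredients (H\"older, the chain rule, and the exponent bookkeeping) are routine.
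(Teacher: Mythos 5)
Your proposal is correct and follows essentially the same route as the paper's proof: Lemma~\ref{lem:mean-value}, then H\"older with exponents $a$ and $q$, then the symmetrizability hypothesis to replace $\vec{z}$ by an equalized vector with $k$ nonzero coordinates, and finally the consistency conditions plus the exponent identity $(1-a)q/a=-1$ to recognize $\Xi_{\phi,q}(k,\cdot)$. Your extra care about the feasibility of $\vec{z}$ for the optimization program and the treatment of zero coordinates only makes explicit what the paper leaves implicit.
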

\begin{proof}
  We apply Lemma~\ref{lem:mean-value}.  Assuming $\vec{z}$ is as
  defined in that lemma, we have by H\"{o}lder's inequality
{%
  \begin{align}
    \abs{f_d^\phi(\vec{x}) - f_d^\phi(\vec{y})} &\leq
    \Phi(f_d(\vec{z}))\sum_{i=1}^d\frac{\abs{y_i -
        x_i}}{\Phi(z_i)}\abs{\pdiff{f_d}{z_i}}\nonumber\\
    &\leq \Phi(f_d(\vec{z}))\inp{\sum_{i=1}^d
      \inp{\frac{1}{\Phi(z_i)}\abs{\pdiff{f_d}{z_i}}}^a}^{1/a}
      \norm[q]{\vec{x} - \vec{y}}.\nonumber
  \end{align}%
}%
Since $\phi$ is symmetrizable with exponent $a$, we can replace
$\vec{z}$ in the above inequality with a vector $\vec{\tilde{z}}$ all
of whose non-zero entries are equal to some fixed real $\tilde{z}$.
Let $k \leq d$ be the number of non-zero entries in $\vec{\tilde{z}}$.
Using the consistency condition, we then get
\begin{align}
  \abs{f_d^\phi(\vec{x}) - f_d^\phi(\vec{y})} &\leq
  \Phi(f_k(\tilde{z}))
  \inp{\sum_{i=1}^k
    \inp{
      \frac{1}{k\Phi(\tilde{z})}
      \abs{f_k'(\tilde{z})}}^a
  }^{1/a}
  \norm[q]{\vec{x} - \vec{y}}\nonumber\\
  &= \frac{1}{k^{1-1/a}}
  \frac{
    \Phi(f_k(\tilde{z}))\abs{f_k'(\tilde{z})}
  }{\Phi(\tilde{z})}
  \norm[q]{\vec{x} - \vec{y}}.\nonumber
\end{align}
Raising both sides to the power $q$, and using $\frac{1}{a} +
\frac{1}{q} = 1$ and the definitions of the functions $\Xi$ and $\xi$,
we get the claimed inequality.
\end{proof}
Given a message $\phi$ satisfying the conditions of Lemma\nobreakspace
\ref {lem:tech}, we can easily prove the following lemma on the
propagation of errors in locally finite infinite trees.  Recall
that $F_\rho(\sigma)$ denotes the value computed by the recurrence at
the root $\rho$ under an initial condition $\sigma$.  The lemma
quantifies the dependence of $F_\rho(\sigma)$ on initial conditions
$\sigma$ which are fixed everywhere except at some cutset $C$, in
terms of the distance of $C$ from $\rho$.
\begin{lemma}
  \label{lem:general-tree}
  Let $T$ be a finite tree rooted at $\rho$. Let $C$ be a cutset in
  $T$ at distance at least $1$ from the root which does not contain
  any leaves, and let $C'$ be the
  cutset consisting of the children of vertices in $C$. Consider two
  arbitrary initial conditions $\sigma$ and $\tau$ on $T_{\leq C'}$
  which differ only on $C'$, and which assign values from the interval
  $[0, b]$.  Given a recurrence family $\inb{f_d}_{d\geq 1}$ , let $a$
  and $q$ be positive reals such that $\frac{1}{a} + \frac{1}{q} = 1$
  and suppose $\phi$ is a message that is symmetrizable with exponent
  $a$. We then have
  \begin{equation*}
    |F_\rho(\sigma) - F_\rho(\tau)|^q \leq
    \inp{
      \frac{M}{L}
    }^q
    \sum_{v \in C} \alpha^{|v|},
  \end{equation*}
  where $\alpha$ is as defined in eq. (\ref{eq:8}), and $L$ and $M$
  are defined as follows:
  \begin{displaymath}
    L \defeq \inf_{x \in (0, b)} \phi'(x)\text{; \quad}
    M \defeq \max_{v \in C}\abs{\phi(F_v(\sigma)) - \phi(F_v(\tau))}.
  \end{displaymath}
\end{lemma}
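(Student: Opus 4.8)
The plan is to transfer the whole problem onto the ``message scale'' $\phi(\cdot)$ and then run a bottom-up induction from the cutset $C$ to the root $\rho$, with the single-step contraction supplied entirely by Lemma~\ref{lem:tech}. First I would note that since $\phi'\ge L>0$ on $(0,b)$, the mean value theorem gives $|F_\rho(\sigma)-F_\rho(\tau)|\le \frac1L\,|\phi(F_\rho(\sigma))-\phi(F_\rho(\tau))|$, so after raising to the $q$-th power it suffices to prove
\[
  |\phi(F_\rho(\sigma))-\phi(F_\rho(\tau))|^q \;\le\; M^q\sum_{v\in C}\alpha^{|v|}.
\]
The second preliminary observation is the algebraic identity that, for an internal vertex $u$ with children $u_1,\dots,u_d$, inserting $\psi=\phi^{-1}$ into the recurrence~\eqref{eq:5} yields $\phi(F_u(\sigma))=f_d^\phi\bigl(\phi(F_{u_1}(\sigma)),\dots,\phi(F_{u_d}(\sigma))\bigr)$, and similarly for $\tau$. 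Since the $F$-values all lie in $[0,b]$ (by structural induction on the tree, using that the $f_d$ map $[0,b]^d$ into $[0,b]$), the two input vectors lie in $\phi([0,b])^d$ and Lemma~\ref{lem:tech} applies: there is some $k\le d$ with $|\phi(F_u(\sigma))-\phi(F_u(\tau))|^q\le \xi_{\phi,q}(k)\sum_{i=1}^d|\phi(F_{u_i}(\sigma))-\phi(F_{u_i}(\tau))|^q \le \alpha\sum_{i=1}^d|\phi(F_{u_i}(\sigma))-\phi(F_{u_i}(\tau))|^q$, where the last step uses the definition~\eqref{eq:8} of $\alpha$ (the degenerate case $k=0$ being trivial via condition~2 of symmetrizability).

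Writing $E_u\defeq |\phi(F_u(\sigma))-\phi(F_u(\tau))|^q$, I would then prove by induction on the height of $u$ inside $T_{\le C}$ the invariant $E_u\le M^q\sum_{v\in C\cap T_u}\alpha^{|v|-|u|}$. There are three cases. If $u\in C$, then $C\cap T_u=\{u\}$ because $C$ is an antichain, and $E_u\le M^q$ by the definition of $M$, so the invariant holds. If $C\cap T_u=\varnothing$ — which covers exactly the original leaves of $T$ that survive in $T_{\le C}$, necessarily lying strictly above $C$ since $C$ has no leaves — then $C'\cap T_u=\varnothing$ as well, so $\sigma$ and $\tau$ induce identical recurrences on $T_u$, giving $E_u=0$. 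Otherwise $u\notin C$ is internal, all its children lie in $T_{\le C}$, and $C\cap T_u=\bigsqcup_i (C\cap T_{u_i})$; combining the one-step bound above with the inductive hypothesis on each child and using $|u_i|=|u|+1$ gives
\[
  E_u\le \alpha\sum_{i=1}^d E_{u_i}\le \alpha\sum_{i=1}^d M^q\!\!\!\sum_{v\in C\cap T_{u_i}}\!\!\!\alpha^{|v|-|u|-1}= M^q\!\!\!\sum_{v\in C\cap T_u}\!\!\!\alpha^{|v|-|u|}.
\]
Specializing the invariant to $u=\rho$, where $C\cap T_\rho=C$ and $|v|-|\rho|=|v|$, and combining with the first reduction step, yields the claimed bound.

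All the arithmetic is displayed above; the substantive contraction is wholly encapsulated in Lemma~\ref{lem:tech}, so the work that actually needs care is bookkeeping rather than analysis. The points to verify carefully are: that no vertex of $T_{\le C}$ lies in the boundary set $C'$ (so the recurrence there is always the internal case of~\eqref{eq:5}); that $C'$ is indeed a cutset and $C\cap T_u$ decomposes cleanly over the children of an internal $u\notin C$; and the minor domain/degeneracy issues ($k=0$ in Lemma~\ref{lem:tech}, and the value $0$ for $\phi$, handled as elsewhere in this section). Accordingly, the main ``obstacle'' is simply organizing the induction so the exponents $\alpha^{|v|}$ emerge correctly — there is no hard estimate left to do here.
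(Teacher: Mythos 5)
Your proof is correct and follows essentially the same route as the paper's: reduce to the message scale via the lower bound $L$ on $\phi'$, then induct up the tree with the invariant $E_u \leq M^q\sum_{v\in C\cap T_u}\alpha^{|v|-|u|}$, using Lemma~\ref{lem:tech} together with $\xi_{\phi,q}(k)\leq\alpha$ for the one-step contraction. The only cosmetic difference is your case split (the paper's base cases are ``arity $0$'' and ``$u\in C$'' rather than ``$C\cap T_u=\varnothing$'' and ``$u\in C$'', which are equivalent here), and your aside that the empty-intersection case covers ``exactly the original leaves'' is a slight mischaracterization that does not affect the argument.
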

For a proof of this lemma, see Appendix~\ref{sec:proof-lemma-refl}. %

\section{A message for the hard core model}
\label{sec:spec-mess-hard}

We now instantiate the approach outlined in
Section~\ref{sec:decay-corr-saw} to prove Theorems\nobreakspace \ref
{thm:main-hard-core} for the hard core model.  Our message is the same
as that used in~\cite{li_correlation_2011}; %
we choose
  \begin{equation}
    \phi(x) \defeq \sinh^{-1}\inp{\sqrt{x}}\text{, so that }\Phi(x) \defeq
    \phi'(x) = \frac{1}{2\sqrt{x(1+x)}}.\label{eq:4}
  \end{equation}%
Notice that $\phi$ is a strictly increasing, continuously
differentiable function on $(0, \infty)$, and also satisfies the
technical condition that the derivative $\Phi$ be bounded away from
zero on any finite interval, as required in the definition of a
message.  Our improvements over the results
in~\cite{sinclair13:_spatial} depend upon proving the following
fact about the message $\phi$.
\begin{lemma}
  \label{lem:observ-symm}
  For any $a \geq 2$, the message $\phi$ as defined in
  eq.\nobreakspace \textup {(\ref {eq:4})} is symmetrizable with
  exponent $a$ with respect to the tree recurrence
  $\inb{f_{d,\lambda}}_{d\geq 1}$ of the hard core model.
\end{lemma}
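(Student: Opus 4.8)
The plan is to carry out the optimization problem appearing in the definition of symmetrizability explicitly for the hard core recurrence $f_{d,\lambda}(\vec x) = \lambda\prod_{i=1}^d(1+x_i)^{-1}$, working with the natural domain $\mathcal{D} = [0,\infty)$ of these maps. First I would record the two ingredients the definition needs. Since $\pdiff{f_{d,\lambda}}{x_i} = -f_{d,\lambda}(\vec x)/(1+x_i)$ and, by eq.~\textup{(\ref{eq:4})}, $\Phi(x)^{-1} = 2\sqrt{x(1+x)}$, we get $\frac{1}{\Phi(x_i)}\bigl|\pdiff{f_{d,\lambda}}{x_i}\bigr| = 2\,f_{d,\lambda}(\vec x)\sqrt{x_i/(1+x_i)}$, which tends to $0$ as $x_i\to 0^+$ with the other coordinates fixed; this is the second symmetrizability condition. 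For the first, write $B = f_{d,\lambda}(\vec x)$, so the objective is $(2B)^a\sum_{i=1}^d\bigl(x_i/(1+x_i)\bigr)^{a/2}$ and the constraint $f_{d,\lambda}(\vec x) = B$ reads $\prod_i(1+x_i) = \lambda/B$. Substituting $y_i := x_i/(1+x_i)\in[0,1)$ (so $1+x_i = (1-y_i)^{-1}$) turns the program into: maximize $\sum_{i=1}^d y_i^{p}$, where $p := a/2\ge 1$, subject to $\prod_{i=1}^d(1-y_i) = Q$ and $y_i\in[0,1)$, with $Q := B/\lambda$. If $Q\ge 1$ the only feasible point is $\vec 0$ and there is nothing to show, so assume $Q\in(0,1)$; the feasible set is compact, so a maximizer $\vec y^{\ast}$ exists, and it suffices to exhibit an optimal point all of whose nonzero entries are equal.

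Next I would pin down the structure of $\vec y^{\ast}$. Fixing the zero coordinates of $\vec y^{\ast}$ and applying Lagrange multipliers, every coordinate $y_i^{\ast}>0$ satisfies $h(y_i^{\ast}) = \nu/p$ for a common $\nu>0$, where $h(y) := y^{p-1}(1-y)$. As $h(0)=h(1)=0$ and $h'(y) = y^{p-2}\bigl((p-1)-py\bigr)$, the function $h$ is unimodal on $[0,1)$ with its single maximum at $y^{\diamond} := 1-\tfrac1p = 1-\tfrac2a$; hence $h(y)=\nu/p$ has at most two positive roots $y_-\le y^{\diamond}\le y_+$, and every positive entry of $\vec y^{\ast}$ lies in $\{y_-,y_+\}$. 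I would then rule out two positive entries equal to the \emph{small} root: if $y_i^{\ast}=y_j^{\ast}=v\in(0,y^{\diamond})$, keep $\prod_k(1-y_k)$ fixed and perturb $y_i = v+t$, $y_j = v-t-t^2/(1-v)+O(t^3)$; a second-order expansion gives $y_i^p+y_j^p = 2v^p + p\,v^{p-2}\tfrac{(p-1)-pv}{1-v}\,t^2 + O(t^3)$, whose $t^2$-coefficient is strictly positive for $v<y^{\diamond}$, contradicting optimality. So $\vec y^{\ast}$ has at most one entry equal to $y_-$, all its other positive entries being equal to $y_+$.

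The final step is a merge-and-average argument. If $\vec y^{\ast}$ has no entry at $y_-$, or only one positive entry, or $y_-=y_+$ (the degenerate case), it already has the required form; otherwise it has $m\ge 1$ entries equal to $y_+\in(y^{\diamond},1)$ and one entry equal to $y_-\in(0,y^{\diamond})$. Replacing one $y_+$-entry together with the $y_-$-entry by the single entry $\tilde y := 1-(1-y_+)(1-y_-) = y_+ + y_-(1-y_+)$ keeps $\prod_k(1-y_k)=Q$; and using $\tilde y - y_+ = y_-(1-y_+)$, the monotonicity of $s\mapsto s^{p-1}$, the stationarity relation $p\,y_+^{p-1}(1-y_+) = p\,y_-^{p-1}(1-y_-)$, and $y_-\le y^{\diamond}$ (i.e.\ $p(1-y_-)\ge 1$), one obtains $\tilde y^{p}-y_+^{p}\ge p\,y_+^{p-1}(1-y_+)\,y_- = p\,y_-^{p-1}(1-y_-)\,y_- \ge y_-^{p}$, so the objective does not decrease. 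Now all $m$ positive entries lie in $(y^{\diamond},1)$; setting $w_i := \log(1-y_i) < -\log p$ and $\Gamma(w) := (1-e^w)^p$, the computation $\Gamma''(w) = -p\,e^w(1-e^w)^{p-2}(1-p\,e^w)$ shows $\Gamma$ is concave on $\{w\le -\log p\}$, so since $\sum_i w_i = \log Q$ is fixed, Jensen's inequality gives $\sum_i\Gamma(w_i)\le m\,\Gamma(\tfrac1m\log Q)$, attained by the configuration with all $m$ positive entries equal to $1-Q^{1/m}$ (which is feasible, since $\tfrac1m\log Q<-\log p$ forces $1-Q^{1/m}>y^{\diamond}$). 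That configuration is thus optimal and has all nonzero entries equal, completing the proof.

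The step I expect to be the crux is getting past the non-concavity of the program: as a function of $\log(1+x_i)$ the summand $\bigl(x_i/(1+x_i)\bigr)^{a/2}$ is convex near $0$ whenever $a>2$, so a direct concavity/Jensen argument over the whole feasible set is impossible — this is precisely what restricted the analysis of~\cite{sinclair13:_spatial} to $a=2$. The point is that an optimal configuration can always be driven — first by the exchange step, then by the merge step — into the region where the transformed summand $\Gamma$ \emph{is} concave, without loss in the objective; and the delicate part there is the merge inequality $\tilde y^{p}\ge y_+^{p}+y_-^{p}$, which only comes out because we feed in both the first-order optimality relation $h(y_+)=h(y_-)$ and the bound $y_-\le 1-2/a$.
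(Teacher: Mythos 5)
Your proof is correct, but it reaches the conclusion by a genuinely different route from the paper's. The paper reduces everything to a two\hyphenchar\font=-1{}-variable exchange lemma (Lemma~\ref{lem:symm-basic}): writing $s_i=(1+x_i)^{-1}$ and $\gamma(s)=(1-s)^{r}$ with $r=a/2$, it shows by an elementary sign analysis of the auxiliary function $h(x)=Ax^{t+1}-x^t+x-A$ that, for a pair of coordinates with $s_is_j$ held fixed, $\gamma(s_i)+\gamma(s_j)$ is maximized either at $s_i=s_j$ or with one coordinate pushed to $s=1$ (i.e.\ $x=0$); applied at a global optimum, two unequal nonzero entries immediately yield a contradiction. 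You instead run a full variational analysis of the $d$-variable program: Lagrange conditions force the positive entries onto the (at most two) level-set roots $y_\pm$ of $y^{p-1}(1-y)$ straddling $y^\diamond=1-1/p$, a second-order perturbation excludes two entries at $y_-$, the merge inequality $\tilde y^{\,p}\ge y_+^p+y_-^p$ (which correctly combines convexity of $s^p$, the stationarity identity $h(y_+)=h(y_-)$, and $p(1-y_-)\ge1$) absorbs the lone small entry, and Jensen in the variables $\log(1-y_i)$ finishes on the region where the transformed summand is concave. The paper's argument is shorter and needs no optimality machinery beyond compactness, while yours exposes more structure — the two-level form of critical points and the precise location of the non-concavity — and your merge step is the exact analogue of the paper's ``set one coordinate to zero'' branch. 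Both hinge on the same dichotomy (equalize a pair, or send one member to the boundary), but the mechanisms proving that dichotomy are distinct, and each is complete.
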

The proof of the above lemma is quite technical and is deferred to
Appendix~\ref{sec:symmt-mess}.  The crucial advantage of
Lemma~\ref{lem:observ-symm} is the flexibility it allows in the choice
of exponent $a$.  The decay factor $\alpha$ in
Lemma~\ref{lem:general-tree} which governs the rate of decay in the
tree recurrences depends upon the choice of the exponent, and as we show
later in this section, it is possible to obtain an optimal decay rate
by choosing an appropriate exponent satisfying
Lemma~\ref{lem:observ-symm}.  In contrast, the simpler
concavity arguments used in~\cite{sinclair13:_spatial} restricted the
choice of the exponent to $a = 2$, and hence led to a suboptimal decay rate.

We now show how to choose $a$ and $q$ so as to obtain an optimal decay
rate.  Our choice will depend upon the vertex activity $\lambda$, and
to clarify this dependence, we first define the quantity $\Delta_c
\defeq \Delta_c(\lambda)$ as the unique solution of $\lambda_c(t) =
\lambda$ (the existence and uniqueness of $\Delta_c$ follows from the
well known fact that $\lambda_c$ is a strictly decreasing function and
maps the interval $(1, \infty)$ onto $(0, \infty)$).  Our choice of
$a$ and $q$ will then enforce the following conditions:
\begin{enumerate}
\item $a \geq 2$, so that that $\phi$ is symmetrizable with exponent
  $a$ and hence Lemma~\ref{lem:general-tree} is applicable, and
\item For all $d > 0$, $\xi_{\phi,q}(d) \leq
  \frac{1}{\Delta_c}$, so that we get sufficient stepwise decay when
  Lemma~\ref{lem:general-tree} is applied.
\end{enumerate}
The second condition is the key to making the proof work, and in order
to enforce it we will need to analyze the function $\xi_{\phi,q}(d)$
in some detail.  We begin with the following simple lemma, the special
case $q=2$ of which was proven in~\cite{sinclair13:_spatial}. The
lemma merely shows how to perform one of the maximizations needed in
the definition of the decay factor $\alpha$. In what follows, we drop
the subscript $\phi$ for simplicity of notation.
\begin{lemma}
  \label{lem:fixed-point}
  Consider the hard core model with any fixed vertex activity $\lambda
  > 0$. For any $q \geq 1$ and with $\phi$ as defined in eq.\nobreakspace \textup {(\ref {eq:4})},
  we have $\xi_q(d) =
  \Xi_q(d, \tilde{x}_\lambda(d))$, where $\tilde{x}_\lambda(d)$ is the unique
  solution to
  \begin{equation}
    d\tilde{x}_\lambda(d) = 1 + f_{d,\lambda}(\tilde{x}_\lambda(d)).
  \end{equation}
\end{lemma}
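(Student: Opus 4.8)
The plan is to reduce the two nested suprema in the definition of $\xi_q(d)$ to a single one‑variable maximization that has an explicit first‑order condition. Write $f := f_{d,\lambda}(x) = \lambda(1+x)^{-d}$ for the symmetric evaluation of the hard core recurrence, so that $f_{d,\lambda}'(x) = -\frac{d}{1+x}f$. Substituting the message $\Phi(y) = \frac{1}{2\sqrt{y(1+y)}}$ from \eqref{eq:4} into the ratio inside $\Xi_q$ and simplifying, one obtains
\[
\frac{\Phi\left(f_{d,\lambda}(x)\right)\left|f_{d,\lambda}'(x)\right|}{\Phi(x)}
= d\sqrt{\frac{x}{1+x}\cdot\frac{f_{d,\lambda}(x)}{1+f_{d,\lambda}(x)}} =: g_d(x),
\]
so that $\Xi_q(d,x) = \frac{1}{d}\,g_d(x)^q$. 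Because $q \ge 1$ and $g_d \ge 0$, maximizing $\Xi_q(d,\cdot)$ over $x \ge 0$ is equivalent to maximizing $g_d$, hence to maximizing $h_d(x) := \log\frac{x}{1+x} + \log\frac{f_{d,\lambda}(x)}{1+f_{d,\lambda}(x)}\,\bigl(= 2\log g_d(x) - 2\log d\bigr)$.

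Next I would differentiate $h_d$. Using $\left(\log\frac{x}{1+x}\right)' = \frac{1}{x(1+x)}$ and $\left(\log\frac{u}{1+u}\right)' = \frac{u'}{u(1+u)}$ (the prime denoting $\tfrac{d}{dx}$, consistent with the meaning of $f_{d,\lambda}'$), and applying the latter with $u = f_{d,\lambda}(x)$, $u' = -\frac{d}{1+x}u$, one gets
\[
h_d'(x) = \frac{1}{1+x}\left(\frac{1}{x} - \frac{d}{1+f_{d,\lambda}(x)}\right).
\]
The bracketed factor vanishes precisely when $dx = 1 + f_{d,\lambda}(x)$ — exactly the equation defining $\tilde{x}_\lambda(d)$. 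It is strictly decreasing on $(0,\infty)$, since $1/x$ decreases while $d/(1+f_{d,\lambda}(x))$ increases ($f_{d,\lambda}$ being strictly decreasing), and it runs from $+\infty$ as $x \to 0^+$ to $-d < 0$ as $x \to \infty$; hence there is a unique positive root $\tilde{x}_\lambda(d)$. (Equivalently, $x \mapsto dx - 1 - \lambda(1+x)^{-d}$ is continuous, strictly increasing, negative at $0$, and tends to $+\infty$.)

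Finally I would verify that this interior critical point is the global maximizer rather than a boundary artifact: $g_d(x) \to 0$ as $x \to 0^+$ (the factor $\sqrt{x/(1+x)}$ vanishes) and as $x \to \infty$ (the factor $\sqrt{f_{d,\lambda}(x)/(1+f_{d,\lambda}(x))}$ vanishes, since $f_{d,\lambda}(x) \to 0$), while $g_d > 0$ throughout $(0,\infty)$. As $h_d'$ carries the sign of its bracketed factor, it is positive on $(0,\tilde{x}_\lambda(d))$ and negative on $(\tilde{x}_\lambda(d),\infty)$, so $g_d$ strictly increases and then strictly decreases, making $\tilde{x}_\lambda(d)$ its unique maximizer on $[0,\infty)$. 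Therefore $\xi_q(d) = \sup_{x \ge 0}\Xi_q(d,x) = \Xi_q(d,\tilde{x}_\lambda(d))$, as claimed. I do not anticipate a genuine obstacle; the only point needing mild care is the boundary behaviour at $x = 0$ and $x = \infty$, which is what guarantees the supremum is actually attained, and the explicit vanishing of $g_d$ there settles it.
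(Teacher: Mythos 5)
Your proof is correct and follows essentially the same route as the paper: both reduce to the first-order condition $1+f_{d,\lambda}(x)-dx=0$ and use the strict monotonicity of that expression to get uniqueness of the critical point and confirm it as the global maximum (you differentiate the logarithm where the paper differentiates $\Xi_q$ directly, which is cosmetic). Your explicit check of the boundary behaviour at $x=0$ and $x\to\infty$ is a slightly more careful version of the paper's observation that the derivative is positive at $0$ and negative at infinity.
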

\begin{proof}
Plugging in $\Phi$ from eq.\nobreakspace \textup {(\ref {eq:4})} in the definition of $\Xi$, we get
\begin{displaymath}
  \Xi_q(d, x) = d^{q-1}\inp{\frac{x}{1+x}\frac{f_{d,\lambda}(x)}{1+f_{d,\lambda}(x)}}^q.
\end{displaymath}
Taking the partial derivative with respect to the second argument, we
get
\begin{displaymath}
  \Xi_q^{(0,1)}(d,x) = \frac{q\Xi_q(d,x)}{2x(1+x)\inp{1+f_{d,\lambda}(x)}}
  \insq{1 + f_{d,\lambda}(x)  - dx}.
\end{displaymath}
For fixed $d$, the quantity outside the square brackets is always
positive, while the expression inside the square brackets is strictly
decreasing in $x$.  Thus, any zero of the expression in the brackets
will be a unique maximum of $\Xi_q$.  The fact that such a zero exists
follows by noting that the partial derivative is positive at $x = 0$
and negative as $x \rightarrow \infty$.  Thus, $\Xi_q(d, x)$ is
maximized at $\tilde{x}_{\lambda}(d)$ as defined above, and hence
$\xi_q(d) = \Xi_q(d, \tilde{x}_\lambda(d))$, as claimed.
\end{proof}
\noindent We now choose $a$ and $q$ as follows:
\begin{equation}
  \frac{1}{q} = 1 - \frac{\Delta_c - 1}{2}\log\inp{1+
    \frac{1}{\Delta_c - 1}}; \;\; \frac{1}{a} =  1  - \frac{1}{q}.\label{eq:17}
\end{equation}
Note that since $\log(1+y) \leq y$ for all $y \geq 0$, we get that $q
\leq 2$ (and hence $a \geq 2$) by using $y = \frac{1}{\Delta_c -
  1}$, and noting that $\Delta_c > 1$.  Thus, the first condition
above is already satisfied.  The values above are chosen to make sure
that the second condition is satisfied as well, as we prove in the next
lemma.  Now that the exponents $a$ and $q$ are fixed, we define the
function $\nu_\lambda(d)$ as follows in order to emphasize dependence upon
$\lambda$:
\begin{equation*}
  \nu_\lambda(d) \defeq \xi_q(d).
\end{equation*}
\begin{lemma}\label{lem:tau-props}
  Fix $\lambda > 0$ and let $\Delta_c(\lambda) > 1$ be the unique
  solution to $\lambda_c(t) = \lambda$. The function $\nu_\lambda :
  \mathbb{R}^+ \rightarrow \mathbb{R}^{+}$ is maximized at $d =
  \Delta_c \defeq \Delta_c(\lambda)$.
  Further, $$\nu_\lambda(\Delta_c(\lambda)) =
  \frac{1}{\Delta_c(\lambda)}.$$
\end{lemma}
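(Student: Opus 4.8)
The plan is to use Lemma~\ref{lem:fixed-point} to reduce the statement to the study of a single real-variable function. That lemma gives $\nu_\lambda(d) = \xi_q(d) = \Xi_q(d,\tilde{x}_\lambda(d))$, where $\tilde{x}_\lambda(d)$ is the unique root of $d\tilde{x} = 1 + f_{d,\lambda}(\tilde{x})$, i.e.\ of $d\tilde{x} - 1 = \lambda(1+\tilde{x})^{-d}$; note this relation forces $\tilde{x}_\lambda(d) > 1/d > 0$, so $\tilde{x}_\lambda$ is a well-defined positive function on all of $\mathbb{R}^+$. Substituting the explicit forms of $\Phi$ from eq.~(\ref{eq:4}) and of $f_{d,\lambda}$ into the definition of $\Xi_q$, and simplifying using $f_{d,\lambda}(\tilde{x}_\lambda(d)) = d\tilde{x}_\lambda(d) - 1$ and $1/a + 1/q = 1$, I would first record the closed form
\[
  \nu_\lambda(d) \;=\; d^{\,q/2-1}\inp{\frac{d\tilde{x}_\lambda(d)-1}{1+\tilde{x}_\lambda(d)}}^{q/2}
  \;=\; d^{\,q/2-1}\,\lambda^{q/2}\,\bigl(1+\tilde{x}_\lambda(d)\bigr)^{-(d+1)q/2}.
\]

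Next I would evaluate this at the claimed maximizer. Since $\lambda = \lambda_c(\Delta_c) = \Delta_c^{\Delta_c}/(\Delta_c-1)^{\Delta_c+1}$, substituting $d=\Delta_c$ and $\tilde{x} = 1/(\Delta_c-1)$ into the defining relation makes both sides equal to $\Delta_c/(\Delta_c-1)$; by uniqueness of the root, $\tilde{x}_\lambda(\Delta_c) = 1/(\Delta_c-1)$ (the critical ``fixed point'' of the hard core recursion). Feeding this into the closed form gives $\tfrac{d\tilde{x}-1}{1+\tilde{x}}\big|_{d=\Delta_c} = 1/\Delta_c$ and hence $\nu_\lambda(\Delta_c) = \Delta_c^{q/2-1}\cdot\Delta_c^{-q/2} = 1/\Delta_c$, the asserted value; note that this step does not yet use the specific choice of $q$.

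The main work is to show $\Delta_c$ is a global maximum. For this I would differentiate $\log\nu_\lambda$, obtaining $\tilde{x}_\lambda'(d)$ by implicit differentiation of the defining relation. Writing $h(d) = \log(1+\tilde{x}_\lambda(d))$ and $P = d\tilde{x}_\lambda(d) - 1 = \lambda e^{-dh}$, the relation reads $de^{h} - d - 1 = \lambda e^{-dh}$; differentiating and using $e^{h} + P = (d+1)\tilde{x}_\lambda(d)$ and $P - d\tilde{x}_\lambda(d) = -1$, a sequence of cancellations collapses the derivative to
\[
  \frac{\mathrm{d}}{\mathrm{d}d}\log\nu_\lambda(d) \;=\; \frac1d\inp{\,q - 1 - \frac q2\cdot\frac{\log\bigl(1+\tilde{x}_\lambda(d)\bigr)}{\tilde{x}_\lambda(d)}\,}.
\]
This is where eq.~(\ref{eq:17}) enters: at $d=\Delta_c$ one has $\tfrac{\log(1+\tilde{x}_\lambda(d))}{\tilde{x}_\lambda(d)} = (\Delta_c-1)\log\!\bigl(1+\tfrac{1}{\Delta_c-1}\bigr)$, and eq.~(\ref{eq:17}) is exactly the condition that makes the bracket vanish there. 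To conclude it is a genuine maximum I would use two monotonicity facts: (i) every factor in the expression for $\tilde{x}_\lambda'(d)$ has a fixed (negative) sign, so $\tilde{x}_\lambda$ is strictly decreasing on $\mathbb{R}^+$; and (ii) $t \mapsto \log(1+t)/t$ is strictly decreasing on $(0,\infty)$. Composing, $\tfrac{\log(1+\tilde{x}_\lambda(d))}{\tilde{x}_\lambda(d)}$ is strictly increasing in $d$, so the bracket is strictly decreasing in $d$; hence it is positive for $d < \Delta_c$ and negative for $d > \Delta_c$, and $\nu_\lambda$ is strictly increasing on $(0,\Delta_c)$ and strictly decreasing on $(\Delta_c,\infty)$. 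Together with the value computed above, this proves both claims.

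The step I expect to be the main obstacle is the derivative computation: one must carry out the implicit differentiation carefully and verify that the contributions from the $d^{q/2-1}$ prefactor, from $\tilde{x}_\lambda'(d)$, and from $(1+\tilde{x}_\lambda(d))^{-(d+1)q/2}$ combine exactly into the stated one-line bracket. Everything else is either a direct substitution or a routine monotonicity argument; what makes the two halves fit together is precisely that criticality forces $\tilde{x}_\lambda(\Delta_c) = 1/(\Delta_c-1)$ and $q$ is chosen in eq.~(\ref{eq:17}) so that the critical point of $\nu_\lambda$ lands exactly there.
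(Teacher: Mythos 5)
Your proposal is correct and follows essentially the same route as the paper: the same closed form for $\nu_\lambda$ via Lemma~\ref{lem:fixed-point}, the same derivative formula $\frac{q}{d}\bigl[1-\tfrac1q-\tfrac{\log(1+\tilde{x})}{2\tilde{x}}\bigr]\nu_\lambda(d)$, the same monotonicity facts (that $\tilde{x}_\lambda$ decreases in $d$ and $t\mapsto\log(1+t)/t$ decreases), and the same verification that $\tilde{x}_\lambda(\Delta_c)=1/(\Delta_c-1)$ makes the bracket vanish under the choice of $q$ in eq.~(\ref{eq:17}). The derivative computation you flag as the main obstacle is in fact painless, because $\Xi_q^{(0,1)}(d,\tilde{x}_\lambda(d))=0$ by the definition of $\tilde{x}_\lambda(d)$, so the $\tilde{x}_\lambda'(d)$ contribution drops out entirely (the envelope argument the paper uses).
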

The proof of the above lemma is somewhat technical, and is deferred to
Appendix~\ref{sec:monot-prop-nu_l}. The lemma shows that when $\lambda
< \lambda_c(\Delta)$, the decay factor $\alpha < \frac{1}{\Delta}$.
As we observed above, the main ingredient in the proof is the specific
choice of the exponent $a$ in eq.~(\ref{eq:17}), which in turn is
allowed only because of the flexibility in the choice of $a$ allowed
by Lemma~\ref{lem:observ-symm}.
{%
We now proceed with the proof of Theorem\nobreakspace
\ref{thm:main-hard-core}. This only requires some standard
arguments, with the only new ingredient being the improved estimate
on decay factor proved in Lemma~\ref{lem:tau-props}.
\begin{proof}[Proof of Theorem\nobreakspace \ref {thm:main-hard-core}]
  Let $\family{G}$ be any family of finite or infinite graphs with
  connective constant $\Delta$.  We prove the result for any fixed
  $\lambda$ such that $\lambda < \lambda_c(\Delta)$.  For such
  $\lambda$, we have $\Delta_c(\lambda) > \Delta$ (since $\lambda_c$
  is a decreasing function).  Using Lemma\nobreakspace \ref
  {lem:tau-props} we then see that there is an $\epsilon > 0$ such
  that $\nu_{\lambda}(d)\Delta \leq 1 - \epsilon$ for all $d > 0$.

  We first prove that the hard core model with these parameters
  exhibits strong spatial mixing on this family of graphs. Let $G$ be
  any graph from $\family{G}$, $v$ any vertex in $G$, and consider
  any boundary conditions $\sigma$ and $\tau$ on $G$ which differ only
  at a distance of at least $\ell$ from $v$.  We consider the Weitz
  self-avoiding walk tree $\Ts{v, G}$
  rooted at $v$ (as defined in Section~\ref{sec:connective-constant}).
   As before, we denote again by $\sigma$ (respectively, $\tau$)
   the translation of the boundary condition $\sigma$ (respectively,
   $\tau)$ on $G$ to $\Ts{v, G}$.  From Weitz's theorem, we then have that
  $R_v(\sigma, G) = R_v(\mathcal{W} \cup \sigma, \Ts{v, G})$ (respectively,
  $R_v(\tau, G) = R_v(\mathcal{W} \cup \tau, \Ts{v, G})$).

  Consider first the case where $G$ is infinite.  Let $C_\ell$ denote
  the cutset in $\Ts{v, G}$ consisting of all vertices at distance
  $\ell$ from $v$. Since $G$ has connective constant at most $\Delta$,
  it follows that for $\ell$ large enough, we have $|C_\ell| \leq
  \Delta^\ell(1-\epsilon/2)^{-\ell}$.  Further, in the notation of
  Lemma\nobreakspace \ref {lem:general-tree}, $\nu_\lambda(d)\Delta =
  1-\epsilon$ implies that the decay factor $\alpha$ (defined in
  eq. (\ref{eq:8})) is at most $(1-\epsilon)/\Delta$.  We now apply
  Lemma\nobreakspace \ref {lem:general-tree}.  We first observe that
  given our message $\phi$, we can bound the quantities $L$ and $M$ in
  the lemma as
  \begin{displaymath}
    L = \frac{1}{2\sqrt{\lambda(1+\lambda)}}\quad\text{, and }
    \quad M = \sinh^{-1}(\sqrt{\lambda}).
  \end{displaymath}
  The bounds on $L$ and $M$ follows from the fact that the values of
  the occupation ratio computed at any internal node of the tree lie
  in the range $[0, \lambda]$.  Setting $c_0 = (L/M)^q$, we can the
  apply the lemma to get {%
    \begin{align*}
      \abs{R_v(\sigma, G) - R_v(\tau, G)}^q &=
      \abs{R_v(\mathcal{W} \cup \sigma, \Ts{v,G})
        - R_v(\mathcal{W} \cup \tau, \Ts{v,G})}^q\\
      &\leq c_0\sum_{u \in
        C_\ell}\inp{\frac{1-\epsilon}{\Delta}}^\ell\\
      &\leq c_0\inp{\frac{1-\epsilon}{1-\epsilon/2}}^\ell,\text{ using
        $|C_\ell| \leq \Delta^\ell(1-\epsilon/2)^{-\ell}$,}
    \end{align*}%
    which establishes strong spatial mixing in $G$, since
    $1-\epsilon < 1-\epsilon/2$.
  }

  We now consider the case when $\family{G}$ is a family of finite
  graphs, and $G$ is a graph from $\family{G}$ of $n$ vertices.  Since
  the connective constant of the family is $\Delta$, there exist
  constants $a$ and $c$ (not depending upon $G$) such that for $\ell
  \geq a \log n$, $\sum_{i=1}^\ell N(v, \ell) \leq c\Delta^\ell$.  We
  now proceed with the same argument as in the infinite case, but
  choosing $\ell \geq a \log n$.  The cutset $C_\ell$ is again chosen
  to be the set of all vertices at distance $\ell$ from $v$ in
  $\Ts{v,G}$, so that $|C_\ell| \leq c\Delta^{\ell}$.  As before, we
  then have for $\ell > a \log n$, {%
    \begin{align}
      \abs{R_v(\sigma, G) - R_v(\tau, G)}^q &= \abs{R_v(\mathcal{W}
        \cup \sigma, \Ts{v,G})
        - R_v(\mathcal{W} \cup \tau, \Ts{v,G})}^q\nonumber\\
      &\leq c_0\sum_{u \in
        C_\ell}\inp{\frac{1-\epsilon}{\Delta}}^\ell\nonumber\\
      &\leq c\cdot c_0\inp{1-\epsilon}^{\ell}
      \text{, using $|C_\ell| \leq c\Delta^\ell$,}\label{eq:12}
    \end{align}%
    which establishes the requisite strong spatial mixing bound.
  }

{%
  In order to prove the algorithmic part, we first recall an observation of
  Weitz~\cite{Weitz06CountUptoThreshold} that an FPTAS for the
  ``non-occupation'' probabilities $1 - p_v$ under arbitrary boundary
  conditions is sufficient to derive an FPTAS for the partition
  function.  We further note that if the vertex $v$ is not already
  fixed by a boundary condition, then  $1-p_v = \frac{1}{1 + R_v} \geq
  \frac{1}{1+\lambda}$, since $R_v$ lies in the interval $[0,\lambda]$
  for any such vertex. Hence, an additive approximation to $R_v$ with
  error $\mu$ implies a multiplicative approximation to $1-p_v$ within
  a factor of $1\pm \mu(1+\lambda)$.  Thus, an algorithm that
  produces in time polynomial in $n$ and $1/\mu$ an \emph{additive}
  approximation to $R_v$ with error at most $\mu$ immediately gives an
  FPTAS for $1-p_v$, and hence, by Weitz's observation, also for the
  partition function.%
} %
  To derive such an algorithm, we again use the tree $\Ts{v, G}$ considered
  above.  Suppose we require an additive approximation with error at
  most $\mu$ to $R_v(\sigma, G) = R_v(\sigma, \Ts{v, G})$.  We notice
  first that $R_v = 0$ if and only if there is a neighbor of $v$ that
  is fixed to be occupied in the boundary condition $\sigma$.  In this
  case, we simply return $0$.  Otherwise, we expand $\Ts{v, G}$ up to
  depth $\ell$ for some $\ell \geq a\log n$ to be specified later.
  Notice that this subtree can be explored in time
  $O\inp{\sum_{i=1}^\ell N(v,i)}$ which is $O(\Delta^\ell)$ since the
  connective constant is at most $\Delta$.
  {%

  We
  now consider two extreme boundary conditions $\sigma_+$ and
  $\sigma_-$ on $C_\ell$: in $\sigma_+$ (respectively, $\sigma_-$) all
  vertices in $C_\ell$ that are not already fixed by $\sigma$ are
  fixed to ``occupied'' (respectively, unoccupied).  The form of the
  recurrence ensures that the true value $R_v(\sigma)$ lies between
  the values $R_v(\sigma_+)$ and $R_v(\sigma_-)$.  We compute the
  recurrence for both these boundary conditions on the tree.  The
  analysis leading to eq.\nobreakspace \textup {(\ref {eq:12})}
  ensures that, since $\ell \geq a \log n$, we have
  \begin{displaymath}
    \abs{R_v(\sigma_+, G) - R_v(\sigma_-, G)} \leq M_1\exp(-M_2\ell)
  \end{displaymath}
  for some fixed positive constants $M_1$ and $M_2$. Now, assume
  without loss of generality that $R_v(\sigma_+) \geq R_v(\sigma_-)$.
  By the preceding observations, we then have $$R_v(\sigma) \leq
  R_v(\sigma_+) \leq R_v(\sigma) + M_1\exp(-M_2\ell).$$ By choosing
  $\ell = a\log n + O(1) +O(\log (1/\mu))$, we get the required
  $\pm\mu$ approximation.  Further, by the observation above, the
  algorithm runs in time $O\inp{\Delta^{\ell}}$, which is polynomial in
  $n$ and $1/\mu$ as required.
}
\end{proof}

We now prove Corollary~\ref{cor:hard-core-gndn}.
\begin{proof}[Proof of Corollary~\ref{cor:hard-core-gndn}]
  Since $\lambda < \lambda_c(d)$, there exists an $\epsilon > 0$ such
  that $\lambda < \lambda_c(d(1+\epsilon))$.  Fix $\beta > 0$.  In
  order to prove the corollary, we only need to show that graphs drawn
  from $\G(n, d/n)$ have connective constant at most $d(1+\epsilon)$
  with probability at least $1 - n^{-\beta}$.

  Recall that $N(v,\ell)$ is the number of self-avoiding walks of
  length $\ell$ starting at $v$.  Suppose $\ell \geq a\log n$, where
  $a$ is a constant depending upon the parameters $\epsilon$, $\beta$
  and $d$ which will be specified later.  We first observe that
  \begin{displaymath}
    \E{\sum_{i=1}^\ell N(v,i)} \leq \sum_{i=1}^\ell\inp{\frac{d}{n}}^i n^i
    \leq d^{\ell}\frac{d}{d-1},
  \end{displaymath}
  and hence by Markov's inequality, we have $\sum_{i=1}^\ell N(v,i)
  \leq d^\ell\frac{d}{d-1}(1+\epsilon)^\ell$ with probability at
  least $1 - (1+\epsilon)^{-\ell}$.  By choosing $a$ such that
  $a\log(1+\epsilon) \geq \beta + 2$, we see that this probability
  is at least $1 - n^{-\inp{\beta+2}}$.  By taking a union bound over
  all $\ell$ with $a\log n \leq \ell \leq n$ and over all vertices
  $v$, we see that the connective constant $\Delta$ is at most
  $d(1+\epsilon)$ with probability at least $1-n^{-\beta}$.  We
  therefore see that with probability at least $1-n^{-\beta}$, the
  conditions of Theorem~\ref{thm:main-hard-core} are
  satisfied. This completes the proof.
\end{proof}

}%

\section{A message for the monomer-dimer model}
\label{sec:spec-mess-monom}

In this section, we apply the general framework of
Section~\ref{sec:decay-corr-saw} to the monomer-dimer model.  As in
the case of the hard core model, the first step is to choose an
appropriate message.  Unfortunately, unlike the case of the hard core
model where we could show that an already known message was
sufficient, we need to find a new message function in this case.  We
claim that the following message works:
\begin{equation}
  \phi(x) \defeq \frac{1}{2}\log\inp{\frac{x}{2-x}}
  \text{, so that } \Phi(x) \defeq
  \phi'(x) = \frac{1}{x(2-x)}.\label{eq:6}
\end{equation}
Note that $\phi$ is strictly increasing and continuously
differentiable on the interval $(0, 1]$, and its derivative is bounded
away from $0$ on that interval.  Thus, $\phi$ satisfies the conditions
required in the definition of a message (note that the bound $b$ used
in the definition is $1$ in the case of the monomer-dimer model).
Now, in order to apply Lemma~\ref{lem:general-tree}, we first study
the symmetrizability of $\phi$ in the following technical lemma.
\begin{lemma}
  \label{lem:observ-symm-monomer}
  Fix $r \in (1, 2]$. The message $\phi$ as defined in
  eq. (\ref{eq:6}) is symmetrizable with exponent $r$ with respect to
  the tree recurrences $\inb{f_{d,\gamma}}_{d \geq 1}$ of the
  monomer-dimer model.
\end{lemma}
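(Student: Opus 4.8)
The plan is to verify the two conditions in the definition of symmetrizability directly from the closed forms of $f_{d,\gamma}$ and $\phi$. Abbreviating $f\defeq f_{d,\gamma}$, one computes $\pdiff{f}{x_i} = -\gamma f^2$, and since $1/\Phi(x_i) = x_i(2-x_i)$,
\begin{displaymath}
  \frac{1}{\Phi(x_i)}\abs{\pdiff{f}{x_i}} = \gamma f^2\,x_i(2-x_i).
\end{displaymath}
Condition~(2) is then immediate: as $x_i\to 0^+$ the factor $x_i(2-x_i)$ tends to $0$ while $f\in(0,1]$ stays bounded, for any fixed values of the remaining coordinates.

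For condition~(1) we fix $d$ and a feasible $B$. The equality $f(\vec{x})=B$ simultaneously fixes $\sum_i x_i = \frac{1-B}{\gamma B}=:S$ and forces $f\equiv B$, so up to the constant factor $\gamma^r B^{2r}$ the program becomes: maximize $\sum_{i=1}^d h(x_i)$ over $\inb{\vec{x}\in[0,1]^d : \sum_i x_i = S}$, where $h(x)\defeq\inp{x(2-x)}^r$; we must exhibit a maximizer all of whose nonzero coordinates are equal. (Here $r$ plays the role of the exponent ``$a$'' in the definition.) First I would record the shape of $h$ via a computation of $h''$: $h$ is strictly increasing on $[0,1]$ with $h(0)=0$, and has a single inflection point $x_0 = 1-(2r-1)^{-1/2}\in(0,1)$, being strictly convex on $[0,x_0]$ and strictly concave on $[x_0,1]$. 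The hypothesis $r\le 2$ enters through $x_0<\frac12$, i.e.\ $2x_0<1<2(1-x_0)$.

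The argument for condition~(1) is a smoothing (exchange) argument. Among all maximizers, pick one, $\vec{x}^\star$, that first minimizes the number of nonzero coordinates and then minimizes $\sum_i (x_i^\star)^2$ (such minima are attained, by compactness of the relevant set of maximizers). If $\vec{x}^\star$ has two unequal nonzero coordinates $x_i^\star\neq x_j^\star$, set $c=x_i^\star+x_j^\star$ and consider $\theta(t)\defeq h(t)+h(c-t)$ on $I_c\defeq[\max(0,c-1),\min(1,c)]$. The crux is a one-variable lemma: for every $c\in(0,2]$, $\theta$ attains its maximum over $I_c$ either at the midpoint $t=c/2$ or at an endpoint of $I_c$, and whenever the maximum is at an endpoint the midpoint is a \emph{strictly smaller} local maximum and $\theta$ has no other interior local maximum. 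Granting this, a maximizing pair for $\theta$ yields, after substitution into $\vec{x}^\star$, another maximizer of $\sum_i h$: the choice $(c/2,c/2)$ strictly decreases $\sum_i x_i^2$ (since $x_i^\star\neq x_j^\star$), and the choice $(0,c)$ (possible when $c\le 1$) strictly decreases the number of nonzero coordinates---both contradicting the minimality of $\vec{x}^\star$. In the remaining endpoint case ($c>1$, giving $(c-1,1)$) the ``no other interior local maximum'' clause forces one of $x_i^\star,x_j^\star$ to equal $1$; hence the nonzero coordinates of $\vec{x}^\star$ consist of some copies of $1$ together with at most one other value $v\in(0,1)$, and pairing a coordinate equal to $1$ with a coordinate equal to $v$ and using the elementary inequality $2h\inp{(1+v)/2}>1+h(v)$ (valid for $v\in(0,1)$ and $r\le 2$) contradicts the optimality of that pair. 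Thus $v$ does not occur, $\vec{x}^\star$ is of the form $(1,\dots,1,0,\dots,0)$, which already has all nonzero coordinates equal.

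The main obstacle is the one-variable lemma. I would prove it by analyzing $\theta'(t)=h'(t)-h'(c-t)$ and $\theta''(t)=h''(t)+h''(c-t)$: since $h''$ changes sign exactly once, at $x_0$, the two summands of $\theta''$ have mirror-image sign patterns, and a case split on whether $c\le 2x_0$ or $c\ge 2x_0$---exploiting $2x_0<1$ so that the feasible interval $I_c$ meets the inflection in a controlled way---shows that on $I_c$ the function $\theta$ is either U-shaped (maximum at an endpoint) or has $t=c/2$ as its unique interior local maximum with all other interior critical points being minima. This sign bookkeeping, the compactness remark, and the verification of $2h\inp{(1+v)/2}>1+h(v)$ together constitute the technical content to be relegated to the appendix; it is in this step that the bound $r\le 2$ is genuinely needed, since for large $r$ the function $h$ is sufficiently strongly S-shaped that $\theta$ acquires an interior maximum and the conclusion of condition~(1) fails.
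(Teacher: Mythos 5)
Your overall architecture matches the paper's: the same reduction of the symmetrizability program to maximizing $\sum_i h(x_i)$ with $h(x)=\inp{x(2-x)}^r$ subject to $\sum_i x_i$ fixed and $x_i\in[0,1]$, followed by a two-coordinate exchange argument driven by a one-variable lemma about $\theta(t)=h(t)+h(c-t)$. The differences lie in that lemma. The paper's Lemma~\ref{lem:symm-basic-monomer} proves a sharper statement: the maxima of $\theta$ lie only at the midpoint or at the $(0,c)$-type endpoint, and the $(c-1,1)$-type endpoint is excluded outright when $c>1$ by checking that $\theta'$ points inward there. Consequently no coordinate of an optimizer is ever forced to equal $1$, and your extra patch---the lexicographic extremal choice of maximizer together with the inequality $2h\inp{(1+v)/2}>1+h(v)$---becomes unnecessary, though your version of the exchange is logically valid as stated.

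The real issue is your sketch of the one-variable lemma itself, which you correctly identify as the crux. Knowing only that $h''$ changes sign once, at $x_0=1-(2r-1)^{-1/2}$, does not control the number of sign changes of $\theta''(t)=h''(t)+h''(c-t)$ on the half-interval $[c/2,\min(1,c)]$: away from the midpoint one summand is negative and the other positive, and their relative magnitudes cannot be compared from the sign pattern alone, so the ``U-shaped or midpoint-maximal'' dichotomy does not follow from this bookkeeping. What closes the argument in the paper is a third-order fact: $h'''$ is strictly increasing on $[0,1]$ for $1<r\le 2$ (this is where the hypothesis $r\le 2$ is genuinely used---not through $x_0<\tfrac12$, which holds for all $r<\nfrac{5}{2}$). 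That monotonicity gives $\theta'''>0$ to the right of the midpoint, hence $\theta''$ is strictly increasing there and changes sign at most once from negative to positive; combined with $\theta'(c/2)=0$ this yields exactly your dichotomy, with all other interior critical points being minima. So your plan is salvageable, but the appendix you defer to would have to upgrade the second-derivative sign analysis to this third-derivative monotonicity (or an equivalent), which is essentially the paper's computation.
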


We defer the proof of the above lemma to
Appendix~\ref{sec:symm-mess-monomer}.
As in the case of the hard core
model, we will need to make a careful choice of the exponent $r$ in order to
obtain an optimal decay factor. %
We begin with a technical lemma which characterizes the
behavior of the function $\xi$ used in the definition of the decay
factor.  For ease of notation, we drop the subscript $\phi$ from the
notation for $\xi$.
\begin{lemma}
  \label{lem:p-value}
  Consider the monomer-dimer model with edge activity $\gamma$, and
  let $\phi$ be the message chosen in (\ref{eq:6}).  For any $q > 1$,
  we have $\xi_q(d) = \Xi_q(d, \tilde{p}_\gamma(d))$, where
  $\tilde{p}_\gamma(d)$ satisfies $\Xi_q^{(0,1)}(d,
  \tilde{p}_\gamma(d)) = 0$ and is given by
  \begin{displaymath}
    \tilde{p}_\gamma(d) \defeq \frac{\sqrt{1+4\gamma d} - 1}{2\gamma d}.
  \end{displaymath}
\end{lemma}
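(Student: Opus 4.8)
The plan is to mirror the argument used for the hard core model in the proof of Lemma~\ref{lem:fixed-point}: write $\Xi_q(d,x)$ in closed form, show that as a function of its second argument it is strictly unimodal with a single critical point in the interior of the domain, and identify that point by solving a quadratic.

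First I would substitute the monomer-dimer recurrence $f_{d,\gamma}(x) = (1+\gamma d x)^{-1}$ and the message derivative $\Phi(x) = \frac{1}{x(2-x)}$ from \eqref{eq:6} into the definition of $\Xi_q$. The useful simplification is that
\[
  f_{d,\gamma}(x)\bigl(2 - f_{d,\gamma}(x)\bigr) = \frac{1 + 2\gamma d x}{(1+\gamma d x)^2}
  \qquad\text{and}\qquad
  \abs{f_{d,\gamma}'(x)} = \frac{\gamma d}{(1+\gamma d x)^2},
\]
so that the factors $(1+\gamma d x)^2$ cancel and one is left with the clean form
\[
  \Xi_q(d,x) = \frac{1}{d}\inp{\frac{\gamma d\, x(2-x)}{1 + 2\gamma d x}}^q,
\]
valid for $x$ in the domain $(0,1]$ of the recurrence. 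Next I would compute the partial derivative in the second argument by logarithmic differentiation; combining the resulting fractions over a common denominator gives
\[
  \Xi_q^{(0,1)}(d,x) = \frac{2q\,\Xi_q(d,x)}{x(2-x)(1 + 2\gamma d x)}\,\bigl(1 - x - \gamma d x^2\bigr),
\]
which parallels the structure of the formula obtained for the hard core model: a strictly positive prefactor times a bracket that is strictly monotone in $x$.

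Indeed, on $(0,1]$ the prefactor is strictly positive, while $1 - x - \gamma d x^2$ is strictly decreasing in $x$, equals $1$ at $x = 0$, and is negative at $x = 1$; hence it has a unique zero in $(0,1)$, which is therefore the unique maximizer of $\Xi_q(d,\cdot)$, giving $\xi_q(d) = \Xi_q(d,\tilde p_\gamma(d))$ with $\Xi_q^{(0,1)}(d,\tilde p_\gamma(d)) = 0$. Solving $\gamma d x^2 + x - 1 = 0$ for its positive root yields $\tilde p_\gamma(d) = \frac{\sqrt{1+4\gamma d}-1}{2\gamma d}$, and the elementary estimate $\sqrt{1+4\gamma d} \le 1 + 2\gamma d$ confirms $\tilde p_\gamma(d) \in (0,1)$, so it is genuinely an interior point of the domain. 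The only step that needs real care — and hence the main potential pitfall — is the algebraic reduction of $\Xi_q$: one must expand $f_{d,\gamma}(2-f_{d,\gamma})$ correctly so that it cancels against $\abs{f_{d,\gamma}'}$, after which the emergence of the quadratic $1 - x - \gamma d x^2$ upon differentiation, and everything downstream, is routine.
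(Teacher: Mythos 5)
Your proposal is correct and follows essentially the same route as the paper: both compute $\Xi_q(d,x)=d^{q-1}\bigl(\tfrac{\gamma x(2-x)}{1+2\gamma d x}\bigr)^q$, differentiate to isolate the factor $1-x-\gamma d x^2$ against a positive prefactor, and solve the quadratic for its unique root in $(0,1)$. Your explicit check that $\tilde p_\gamma(d)\in(0,1)$ via $\sqrt{1+4\gamma d}\le 1+2\gamma d$ is a small extra care the paper leaves implicit, but otherwise the arguments coincide.
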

\begin{proof}
  Plugging in $\Phi$ from eq.\nobreakspace \textup {(\ref {eq:6})} in
  the definition of $\Xi$, we get
\begin{displaymath}
  \Xi_q(d, x) = d^{q-1}
  \inp{
    \frac{\gamma x(2-x)f_{d,\gamma}(x)}{2-f_{d,\gamma}(x)}
  }^q =
  d^{q-1}\inp{
    \frac{\gamma x(2-x)}{1 + 2\gamma d x}
  }^q\text{, since $f_{d,\gamma}(x) = \frac{1}{1 + \gamma d x}$.}
\end{displaymath}
Taking the partial derivative with respect to the second argument, we
get
\begin{displaymath}
  \Xi_q^{(0,1)}(d,x) = \frac{2q\Xi_q(d,x)}{x(2-x)(1+2\gamma d x)}
  \insq{1 - x - \gamma d x^2}.
\end{displaymath}
For fixed $d$, and $0\leq x \leq 1$, the quantity outside the square
brackets is always positive, while the expression inside the square
brackets is strictly decreasing in $x$.  Thus, any zero of the
expression in the brackets in the interval $[0,1]$ will be a unique
maximum of $\Xi_q$.  By solving the quadratic, we see that
$\tilde{p}_\gamma(d)$ as defined above is such a solution.  Thus,
$\Xi_q(d, x)$ is maximized at $\tilde{p}_{\gamma}(d)$ as defined
above, and hence $\xi_q(d) = \Xi_q(d, \tilde{p}_\gamma(d))$.
\end{proof}

Given the edge activity $\gamma$ and an upper bound $\Delta$ on the
connective constant of the graph family being considered, we now
choose $D > \max(\Delta, 3/(4\gamma))$.  We claim that we can get
the required decay factor by choosing
\begin{equation}
  \frac{1}{r} = 1 - \frac{1}{\sqrt{1 + 4\gamma D}};
  \qquad \frac{1}{q} = 1 - \frac{1}{r} = \frac{1}{\sqrt{1+4\gamma D}}.\label{eq:11}
\end{equation}
Note that the choice of $D$ implies that $1 < r \leq 2$, so that
$\phi$ is symmetrizable with respect to $r$. The following lemma shows
that this choice of $r$ indeed gives us the required decay factor.  As
in the case of the hard core model, we emphasize the dependence of the
decay parameter on the model parameters by setting
\begin{displaymath}
  \nu_\gamma(d) \defeq \xi_q(d),
  \text{where $q$ is as chosen in eq.~(\ref{eq:11})}.
\end{displaymath}

\begin{lemma}\label{lem:tau-props-monomer}
  Fix $\gamma > 0$ and $D > 3/{4\gamma}$, and let $q$ be as chosen in
  (\ref{eq:11}). Then the function $\nu_\gamma : \mathbb{R}^+
  \rightarrow \mathbb{R}^{+}$ is maximized at $d = D$.
  Further, the decay factor $\alpha$ is given by
  \[
  \alpha = \nu_\gamma(D) = \frac{1}{D}
  \inp{
    1- \frac{2}{
      1 + \sqrt{1 + 4 \gamma D}
    }
  }^q.
  \]
\end{lemma}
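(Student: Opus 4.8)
The plan is to reduce the statement to a one-variable calculus problem: feed the characterization from Lemma~\ref{lem:p-value} into the formula for $\Xi_q$, obtain a closed form for $\nu_\gamma(d)=\xi_q(d)=\Xi_q(d,\tilde p_\gamma(d))$, and then maximize over $d$. The key simplification is the substitution $s\defeq\sqrt{1+4\gamma d}$ (so $4\gamma d=s^2-1$). A direct check then gives $\tilde p_\gamma(d)=\frac{2}{s+1}$, hence $2-\tilde p_\gamma(d)=\frac{2s}{s+1}$ and $1+2\gamma d\,\tilde p_\gamma(d)=s$; substituting these into $\Xi_q(d,x)=d^{q-1}\bigl(\frac{\gamma x(2-x)}{1+2\gamma dx}\bigr)^q$ and using $\gamma=(s^2-1)/(4d)$ collapses the bracket to $\frac{s-1}{d(s+1)}$, giving
\[
  \nu_\gamma(d)\;=\;\frac{1}{d}\left(\frac{s-1}{s+1}\right)^{q}\;=\;\frac{1}{d}\left(1-\frac{2}{1+\sqrt{1+4\gamma d}}\right)^{q}.
\]
Since $\Xi_q(d,x)$ makes sense for every real $d>0$, this also serves to extend $\nu_\gamma$ to all of $\mathbb{R}^+$.

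Next I would maximize $\nu_\gamma$ over $d>0$ through its logarithmic derivative. Writing $\log\nu_\gamma(d)=-\log d+q\log(s-1)-q\log(s+1)$ and using $s'=2\gamma/s$ together with $s^2-1=4\gamma d$, the two logarithmic terms contribute $q\cdot\frac{2s'}{s^2-1}=\frac{q}{ds}$, so
\[
  \diff{\log\nu_\gamma}{d}\;=\;-\frac{1}{d}+\frac{q}{ds}\;=\;\frac{q-s}{d\,s}.
\]
Because $s=\sqrt{1+4\gamma d}$ is strictly increasing in $d$, and because the choice of exponents in eq.~\textup{(\ref{eq:11})} gives exactly $q=\sqrt{1+4\gamma D}$, we have $s(d)<q$ for $d<D$ and $s(d)>q$ for $d>D$. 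Hence $\diff{\log\nu_\gamma}{d}$ is positive on $(0,D)$ and negative on $(D,\infty)$, so $\nu_\gamma$ is strictly increasing and then strictly decreasing, and therefore attains its unique maximum over $\mathbb{R}^+$ at $d=D$; no boundary analysis is needed, since the sign of the derivative already pins down the global maximizer.

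Finally I would evaluate at $d=D$, where $s=q$, to obtain
\[
  \nu_\gamma(D)\;=\;\frac{1}{D}\left(\frac{q-1}{q+1}\right)^{q}\;=\;\frac{1}{D}\left(1-\frac{2}{1+\sqrt{1+4\gamma D}}\right)^{q},
\]
which is exactly the claimed value; and since the decay factor is $\alpha=\sup_{d\ge1}\xi_q(d)=\sup_{d\ge1}\nu_\gamma(d)$ and $\nu_\gamma$ is maximized over $\mathbb{R}^+$ at $D$ (with $D\ge1$, as $D>\Delta\ge1$ in the intended setting), this supremum equals $\nu_\gamma(D)$. I do not expect a genuine obstacle: the whole argument is elementary once $d$ is treated as a continuous variable. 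The only two ideas doing real work are the substitution $s=\sqrt{1+4\gamma d}$, which is essentially forced by the shape of $\tilde p_\gamma(d)$ in Lemma~\ref{lem:p-value} and simultaneously tames $\tilde p_\gamma$, $2-\tilde p_\gamma$, and $1+2\gamma d\,\tilde p_\gamma$; and the observation that $\operatorname{sign}\diff{\log\nu_\gamma}{d}=\operatorname{sign}(q-s)$, which is precisely what makes the tailored choice $q=\sqrt{1+4\gamma D}$ in eq.~\textup{(\ref{eq:11})} place the maximizer at $d=D$ and make the value come out to the stated formula.
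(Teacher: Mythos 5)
Your proof is correct and follows essentially the same route as the paper's: both arguments reduce to showing that $\frac{\mathrm{d}\nu_\gamma}{\mathrm{d}d}$ has the sign of $\sqrt{1+4\gamma D}-\sqrt{1+4\gamma d}$, so that $\nu_\gamma$ increases on $(0,D)$ and decreases on $(D,\infty)$ and the choice $q=\sqrt{1+4\gamma D}$ places the unique maximum at $d=D$. The only difference is packaging: the paper differentiates via the chain rule and uses the vanishing of $\Xi_q^{(0,1)}$ at $\tilde p_\gamma(d)$ to avoid ever writing a closed form, whereas you first compute the closed form $\nu_\gamma(d)=\frac{1}{d}\left(\frac{s-1}{s+1}\right)^{q}$ with $s=\sqrt{1+4\gamma d}$ and then take the logarithmic derivative; your route has the minor advantage of making the final evaluation of $\nu_\gamma(D)$ (which the paper leaves implicit) completely explicit.
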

\begin{proof}
  We consider the derivative of $\nu_\gamma(d)$ with respect to $d$.
  Recalling that $\nu_\gamma(d) = \xi(d)= \Xi(d,
  \tilde{p}_\gamma(d))$ and using the chain rule, we have
  \begin{align}
    \nu_\gamma'(d) &= \Xi^{(1,0)}(d,\tilde{p}) +
    \Xi^{(0,1)}(d,\tilde{p})\diff{\tilde{p}}{d}\nonumber\\
    &=\Xi^{(1,0)}(d,\tilde{p}), \text{ since $\Xi^{(0,1)}(d,\tilde{p})
      = 0$ by definition of $\tilde{p}$}\nonumber\\
    &=\frac{\Xi(d, \tilde{p})}{
      d(1+2\gamma d \tilde{p})
    }\insq{
      q - 1  - 2\gamma d \tilde{p}
    }%
    =\frac{\Xi(d, \tilde{p})}{
      d(1+2\gamma d \tilde{p})
    }\insq{
      \sqrt{1 + 4\gamma D} - \sqrt{1 + 4\gamma d}
    },\label{eq:13}
  \end{align}
  where we in the last line we substitute the values
  $\tilde{p}_\gamma(d) = (\sqrt{1 + 4\gamma d} - 1)/(2\gamma d)$ from
  Lemma~\ref{lem:p-value} and $q = \sqrt{1 + 4\gamma D}$ from
  eq.~(\ref{eq:11}).  Now, we note that in eq.~(\ref{eq:13}), the
  quantity outside the square brackets is always positive, while the
  quantity inside the square brackets is a strictly decreasing
  function of $d$ which is positive for $d < D$ and negative for $d >
  D$.  It follows that $\nu_\gamma'(d)$ has a unique zero at $d = D$
  for $d \geq 0$, and this zero is a global maximum of $\nu_\gamma$.
\end{proof}

We are now ready to prove our main result for the monomer-dimer model,
Theorem~\ref{thm:main-monomer-dimer} from the introduction.  Given
Lemmas~\ref{lem:general-tree} and \ref{lem:tau-props-monomer}, only
some standard arguments are needed to finish the proof.
\begin{proof}[Proof of Theorem~\ref{thm:main-monomer-dimer}]
  Let $\family{F}$ be any family of finite graphs with connective
  constant at most $\Delta$.  Given the vertex activity $\gamma$ of
  the monomer-dimer model, we choose $D = \max(\Delta, 3/(4\gamma))$.
  Using Lemma~\ref{lem:tau-props-monomer}, we then see that the decay
  factor $\alpha$ appearing in Lemma~\ref{lem:general-tree} can be
  chosen to be
  \begin{displaymath}
    \alpha = \frac{1}{D}\inp{1 - \frac{2}{1 + \sqrt{1 + 4\gamma D}}}^q.
  \end{displaymath}
  Now, let $G$ be any graph (with $n$ vertices) from $\family{F}$, and
  let $v$ be a vertex in $G$.  As observed in
  Section~\ref{sec:prob-likel-rati}, %
  it is sufficient to construct an
  FPTAS for $p_v(G)$ in order to derive an FPTAS for the partition function.

  Consider the self-avoiding walk tree $\Ts{v, G}$ rooted at $v$ (as
  defined in Section~\ref{thm:saw-tree}).  From Godsil's
  theorem~(Theorem~\ref{thm:saw-tree}), we know that $p_v(G) =
  p_v(\Ts{v, G})$.
  Let $C_\ell$ denote the cutset in $\Ts{v, G}$ consisting of all
  vertices at distance $\ell$ from $v$. Since $\mathcal{F}$ has connective
  constant at most $\Delta$, there exist constants $a$ and $c$ such
  that if $\ell \geq a \log n$, we have $\sum_{i=1}^\ell N(v, \ell)
  \leq c\Delta^\ell$.  We will now apply Lemma~\ref{lem:general-tree}
  with $q$ as defined in eq.~(\ref{eq:11}). We first observe that
  the quantities $L$ and $M$ in the lemma can be taken to be
  \begin{displaymath}
    L = 1,\quad\text{and,}\quad M = \frac{1}{2}\log (1 + 2\gamma n),
  \end{displaymath}
  since the degree of any
  vertex in $G$ is at most $n$.\footnote{Since the degree of every
    vertex $v$ in the graph is $n$, every boundary condition sigma
    satisfies $1 \geq F_v(\sigma) \geq \frac{1}{1+\gamma n}$.
    Substituting these bounds in the definition of $M$ in
    Lemma~\ref{lem:general-tree} yields the claimed bound.}
  Now, defining $c_0 \defeq (M/L)^q$, %
  we have %
    \begin{align}
      \abs{F_v(\vec{0}_\ell) - F_v(\vec{1}_\ell)}^q
      &\leq c_0 \sum_{u \in C_\ell} \alpha^\ell
      \leq c\cdot c_0\cdot\inp{\alpha \Delta}^{\ell}
      \text{, using $|C_\ell| \leq c\Delta^\ell$,}\nonumber\\
      &\leq c\cdot c_0 \cdot
      \inp{
        1 - \frac{2}{
          1 + \sqrt{1 + 4\gamma D
          }
        }
      }^{q\ell}
      \text{, using $D \geq \Delta$ after substituting for
        $\alpha$.}\label{eq:15}
    \end{align}%
    Raising both sides to the power $1/q$ and substituting for $c_0$ and $q$, we
    then have
    \begin{equation}
      \abs{F_v(\vec{0}_\ell) - F_v(\vec{1}_\ell)} \leq
      \frac{1}{2} c^{1/\sqrt{1 + 4\gamma D}} \cdot \log (1+2\gamma n) \cdot
      \inp{
        1 - \frac{2}{
          1 + \sqrt{1 + 4\gamma D
          }
        }
      }^{\ell}.\label{eq:16}
    \end{equation}
  To analyze the running time, we note that in
  order to obtain a $(1 \pm \epsilon)$ multiplicative approximation to
  $p_v(G)$, it is sufficient to obtain a $\pm\epsilon/(1 +\gamma n)$
  additive approximation; this is because $p_v(G) \geq 1/(1 +\gamma
  n))$ since the degree of each vertex in $G$ is at most $n$.  Now, as
  observed in Section~\ref{sec:trunc-recurr-with},
  $p_v(G)$ always lies between the quantities $F_v(\vec{0}_\ell)$ and
  $F_v(\vec{1}_\ell)$, so in order to obtain a $\pm\epsilon/(1+\gamma n)$
  approximation, it is sufficient to choose $\ell \geq a \log n$
  large enough so that the right hand side of eq.~(\ref{eq:16}) is
  at most $\epsilon/(1 + \gamma n)$.
  Denoting by $\beta$ the quantity in the parenthesis on the right hand
  side of eq.~(\ref{eq:16}), we can ensure this by choosing
  \begin{displaymath}
    \ell \geq \frac{1}{\log (1/\beta)}
    \insq{
      \log\frac{1+\gamma n}{\epsilon} +
      \log \log\inp{\sqrt{1 + 2\gamma n}} + \frac{1}{
        \sqrt{1 + 4\gamma D}
      }
      \log c
    }.
  \end{displaymath}
  Further, given such an $\ell$, the running time of the algorithm is
  $O(\sum_{i=1}^\ell N(v, \ell)) = O(\Delta^\ell)$, since this is the
  time it takes to expand the self-avoiding walk tree up to depth $\ell$.
  Noting that $1/(\log(1/\beta)) = \sqrt{\gamma \Delta} + \Theta(1)$, we
  obtain an algorithm running in time
  \begin{displaymath}
    ((1+ \gamma n)/\epsilon)^{
      O(\sqrt{\gamma\Delta}\cdot \log\Delta)
    }
  \end{displaymath}
  which provides a $(1\pm\epsilon)$ multiplicative approximation
  for $p_v(G)$.  Recalling the arguments in
  Section~\ref{sec:from-prob-part}, we see that this yields an
  algorithm for approximating the partition function up to a
  multiplicative factor of $(1 \pm \epsilon)$ with the same asymptotic
  exponent in the running time. This completes the proof.
\end{proof}
{
\begin{remark}
  Note that eq.~(\ref{eq:16}) can be interpreted as showing that (when
  $\Delta > 1/(4\gamma)$), strong spatial mixing holds with rate $1 -
  \frac{2}{1 + \sqrt{1+ 4\gamma \Delta}}$ on graphs of connective
  constant at most $\Delta$, and the factor $\sqrt{\gamma\Delta}$ in
  the exponent of our runtime is a direct consequence of this fact (in
  particular, strong spatial mixing at rate $c$ translates into the
  exponent being proportional to $\log c$).  Recall that Bayati
  \emph{et al.}~\cite{bayati_simple_2007} obtained an algorithm with
  the same runtime but only for graphs with \emph{maximum degree}
  $\Delta +1$ (which is a strict subset of the class of graphs with
  connective constant $\Delta$).  This was due to the fact that their
  analysis essentially amounted to proving that eq.~(\ref{eq:16}) hold
  for the special case of graphs of maximum degree $\Delta + 1$.
  Using an observation of Kahn and Kim~\cite{kahn_random_1998} that
  the rate of spatial mixing on the infinite $d$-ary tree is $1 -
  1/\Theta(\sqrt{\gamma\Delta})$, they concluded that such a runtime
  was the best possible for algorithms that use decay of correlations
  in this direct fashion.  We note here that since the infinite
  $d$-ary tree also has connective constant exactly $d$, this
  observation also implies that the rate of strong spatial mixing
  obtained in eq.~(\ref{eq:16}) is optimal for graphs of connective
  constant $\Delta$ (in fact, the rate of strong spatial mixing on the
  $d$-ary tree is exactly $1-\frac{2}{1 + \sqrt{1+ 4\gamma d}}$).
\end{remark}
}

\section{Branching factor and uniqueness of the Gibbs measure on
  general trees}
\label{sec:more-results-trees}
\label{sec:branch-numb-uniq}
We close with an application of our results to finding thresholds for
the uniqueness of the Gibbs measure of the hard core model on locally
finite infinite trees.  Our bounds will be stated in terms of the
\emph{branching factor}, which has been shown to be the appropriate
parameter for establishing phase transition thresholds for symmetric
models such as the ferromagnetic Ising, Potts and Heisenberg
models~\cite{lyons_ising_1989,pemantle_robust_1999}.
We begin with a general definition of the notion of uniqueness of
Gibbs measure (see, for example, the survey article of
Mossel~\cite{mossel_survey:_2004}).  Let $T$ be a locally finite
infinite tree rooted at $\rho$, and let $C$ be a cutset in $T$.
Consider the hard core model with vertex activity $\lambda >0$ on $T$.
We define the \emph{discrepancy} $\delta(C)$ of $C$ as follows.  Let
$\sigma$ and $\tau$ be boundary conditions in $T$ which fix the state
of the vertices on $C$, but not of any vertex in $T_{<C}$.  Then,
$\delta(C)$ is the maximum over all such $\sigma$ and $\tau$ of the
quantity $R_\rho(\sigma, T) - R_\rho(\tau,T)$.
\begin{definition}
  \textbf{(Uniqueness of Gibbs measure).} The hard core model with
  vertex activity $\lambda > 0$ is said to exhibit \emph{uniqueness of
    Gibbs measure} on $T$ if there exists a sequence of cutsets
  $\inp{B_i}_{i=1}^\infty$ such that $\lim_{i\rightarrow\infty}
  d(\rho, B_i) \rightarrow \infty$ and such that
  $\lim_{i\rightarrow\infty}\delta(B_i) = 0$.
\end{definition}
\begin{remark}
  Our definition of \emph{uniqueness} here is similar in form to those
  used by Lyons~\cite{lyons_ising_1989} and Pemantle and Steif~\cite{pemantle_robust_1999}.  Notice, however, that the recurrence
  for the hard core model implies that the discrepancy is
  ``monotonic'' in the sense that if cutsets $C$ and $D$ are such that
  $C < D$ (i.e., every vertex in $D$ is the descendant of some vertex
  in $C$) then $\delta(C) > \delta(D)$.  This ensures that the choice
  of the sequence $\inp{B_i}_{i=1}^\infty$ in the definition above is
  immaterial.  For example, uniqueness is defined by
  Mossel~\cite{mossel_survey:_2004} in terms of the cutsets $C_\ell$
  consisting of vertices at distance exactly $\ell$ from the root.
  However, the above observation shows that for the hard core model,
  Mossel's definition is equivalent to the one presented here.
\end{remark}

We now define the notion of the branching factor of an infinite tree.
\begin{definition}
  \textbf{(Branching
    factor~\cite{lyons_ising_1989,lyons_random_1990,pemantle_robust_1999}).}
  Let $T$ be an infinite tree.  The branching factor $\br{T}$ is
  defined as follows:
  \begin{equation*}
    \br{T} \defeq \inf\inb{b > 0\left| \inf_C\sum_{v\in C}b^{-|v|} = 0
      \right.},
  \end{equation*}
  where the second infimum is taken over all cutsets $C$.
\end{definition}
To clarify this definition, we consider some examples.  If $T$ is a
$d$-ary tree, then $\br{T} = d$.  Further, by taking the second
infimum over the cutsets $C_\ell$ of vertices at distance $\ell$ from
the root, it is easy to see that the branching factor is never more
than the connective constant.  Further, Lyons~\cite{lyons_random_1990}
observes that in the case of spherically symmetric trees, one can define
the branching factor as
$\liminf_{\ell\rightarrow\infty}N(\rho,\ell)^{\ell}$.

We are now ready to state and prove our results on the uniqueness of
the hard core model on general trees.
\begin{theorem}\label{thm:uniq-tree}
  Let $T$ be an infinite tree rooted at $\rho$ with branching factor
  $b$.  The hard core model with vertex activity $\lambda > 0$
  exhibits uniqueness of Gibbs measure on $T$ if $\lambda <
   \lambda_c(b)$.
\end{theorem}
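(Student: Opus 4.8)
The plan is to transcribe the strong spatial mixing argument from the proof of Theorem~\ref{thm:main-hard-core}, with the cutsets $C_\ell$ (vertices at a fixed distance) replaced by cutsets furnished by the branching factor. First note that $\lambda < \lambda_c(b)$ together with the strict monotonicity of $\lambda_c$ gives $\Delta_c(\lambda) > b$, so we may fix a real $b'$ with $b < b' < \Delta_c(\lambda)$. Taking the exponents $a,q$ for this value of $\lambda$ as in eq.~(\ref{eq:17}), Lemma~\ref{lem:observ-symm} shows that the message $\phi$ of eq.~(\ref{eq:4}) is symmetrizable with exponent $a \geq 2$, and Lemma~\ref{lem:tau-props} shows that the associated decay factor satisfies $\alpha = \sup_{d \geq 1}\nu_\lambda(d) \leq \max_{d > 0}\nu_\lambda(d) = 1/\Delta_c(\lambda) < 1/b'$.

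Next I would invoke the definition of the branching factor: since $\br{T} = b < b'$, we have $\inf_{C}\sum_{v \in C} (b')^{-|v|} = 0$, so there is a sequence of cutsets $(B_i)_{i \geq 1}$ with $\sum_{v \in B_i}(b')^{-|v|} < 1/i$. Because $(b')^{-\delta_{B_i}} \leq \sum_{v \in B_i}(b')^{-|v|} < 1/i$, the distances $d(\rho, B_i) = \delta_{B_i}$ tend to infinity, so in particular each $B_i$ (for $i$ large) lies at distance at least $1$ from $\rho$. Moreover, since $\alpha < 1/b'$ we get $\sum_{v \in B_i}\alpha^{|v|} < \sum_{v \in B_i}(b')^{-|v|} < 1/i$, which tends to $0$ as $i \to \infty$.

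It then remains to control the discrepancy $\delta(B_i)$. Here the argument is: fixing the hard core configuration on an entire cutset $B$ makes, by the Markov property on trees, the occupation ratio at $\rho$ depend only on the finite tree $T_{\leq B}$ together with the fixed states on $B$; thus $R_\rho(\sigma, T)$ equals the value $F_\rho$ produced by the truncated recurrence on $T_{\leq B}$ under an initial condition supported on $B$. Using the equivalence of Section~\ref{sec:hard-core-model-1} (``$v$ occupied'' $\leftrightarrow$ ``the occupation ratio of the parent of $v$ equals $0$''), any two such boundary conditions $\sigma$, $\tau$ can be replaced by initial conditions differing only on a set $\hat B \subseteq B \cup \{\text{parents of }B\}$, all of whose assigned values equal $0$; every vertex of $\hat B$ then carries an occupation ratio in $[0,\lambda]$, so the obstruction $R_v = \infty$ disappears (and $\hat B$ need not avoid the leaves of $T$). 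Applying Lemma~\ref{lem:general-tree} (or its mild variant in which the differing cutset itself carries bounded values) with $L = \inf_{x \in (0,\lambda)}\phi'(x) = \frac{1}{2\sqrt{\lambda(1+\lambda)}}$ and $M = \max_v \abs{\phi(F_v(\sigma)) - \phi(F_v(\tau))} \leq \sinh^{-1}(\sqrt{\lambda})$, and using $|w| \geq |v| - 1$ whenever $w \in \hat B$ lies above $v \in B_i$, one obtains
\[
  \delta(B_i)^q \;\leq\; \inp{\frac{M}{L}}^q \sum_{v \in \hat B}\alpha^{|v|}
  \;\leq\; \inp{1 + \frac{1}{\alpha}}\inp{\frac{M}{L}}^q \sum_{v \in B_i}\alpha^{|v|}.
\]
By the previous paragraph the right-hand side tends to $0$, so $\delta(B_i) \to 0$ while $d(\rho, B_i) \to \infty$, which is exactly the definition of uniqueness of the Gibbs measure on $T$.

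The step I expect to be the main (though minor) obstacle is the bookkeeping in the last paragraph: turning the uniqueness discrepancy --- defined by conditioning on the full configuration of a cutset, which formally forces some occupation ratios to $\infty$ and may place leaves of $T$ on the cutset --- into the bounded-initial-condition form demanded by Lemma~\ref{lem:general-tree}, the only genuine work being to track the one-level shift produced by the ``occupied $\leftrightarrow$ parent set to $0$'' translation. Everything else is a direct port of the decay estimate of Lemma~\ref{lem:tau-props} from the cutsets $C_\ell$ to the branching-factor cutsets $B_i$.
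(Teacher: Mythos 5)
Your proposal is correct and follows essentially the same route as the paper: use Lemma~\ref{lem:tau-props} to get a decay factor $\alpha$ strictly below $1/b$, invoke the definition of the branching factor to produce cutsets $B_i$ with $\sum_{v\in B_i}\alpha^{|v|}\to 0$ (which also forces $d(\rho,B_i)\to\infty$), and apply Lemma~\ref{lem:general-tree} to bound $\delta(B_i)$. The only difference is cosmetic: you parametrize via $b<b'<\Delta_c(\lambda)$ where the paper writes $\alpha=1/(b(1+\epsilon))$, and you spell out the boundary-condition-to-initial-condition bookkeeping that the paper leaves implicit.
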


\begin{proof}
  As before, we apply Lemma ~\ref{lem:general-tree} specialized to the
  message in eq.~(\ref{eq:4}) with the values of $a$ and $q$ as chosen
  in eq. (\ref{eq:17}), where as before $\Delta_c$ is the unique
  positive real number satisfying $\lambda = \lambda_c(b)$. Since
  $\lambda < \lambda_c(b)$, we then see from
  Lemma~\ref{lem:tau-props} that the decay factor $\alpha <
  \frac{1}{b}$.  Hence, there is an $\epsilon > 0$ such that $\alpha =
  \frac{1}{b(1+\epsilon)}$ for some $\epsilon > 0$.  Applying
  Lemma~\ref{lem:general-tree} to an arbitrary cutset $C$ as in the
  proof of Theorem~\ref{thm:main-hard-core}, we then get
  \begin{equation}
    \delta(C)^q \leq M_0\sum_{v \in C}\insq{(1+\epsilon)b}^{-|v|},\label{eq:25}
  \end{equation}
  where $M_0$ is a constant.  Since $b(1+\epsilon) > \br{T}$, the definition of $\br{T}$ implies
  that we can find a sequence $\inp{B_i}_{i=1}^\infty$ of cutsets such
  that
  \begin{equation*}
    \lim_{i\rightarrow\infty}\sum_{v \in
      B_i}\insq{(1+\epsilon)b}^{-|v|} = 0.%
  \end{equation*}
  Further, such a sequence must satisfy
  $\lim_{i\rightarrow\infty}d(\rho, B_i) = \infty$, since otherwise
  the limit above would be positive.  Combining with
  eq.~(\ref{eq:25}), this shows that
  $\lim_{i\rightarrow\infty}\delta(B_i) = 0$, which completes the
  proof.
\end{proof}

\section{Concluding remarks}
Our results show that the connective constant is a natural notion of
``average degree'' of a graph that captures the spatial mixing
property of the hard core and monomer-dimer models on the graph.  We
can then ask if such a correspondence holds also for other spin
systems. More precisely, we start with results that establish decay of
correlations in a graph when the parameters of the spin system lie in
a region~\cite{sinclair_approximation_2012, li_correlation_2011,
  Weitz06CountUptoThreshold}, say $R(d)$, determined only by the
maximum degree $d$ of the graph, and ask whether we can claim that the
maximum degree $d$ in those results can be replaced by $\Delta+1$,
where $\Delta$ is the connective constant of the graph
(Theorem~\ref{thm:main-hard-core}, for example, achieves precisely
this goal for the hard core model).

It turns out that the only other spin system for which we can prove an
exact analogue of Theorem~\ref{thm:main-hard-core} is the zero field Ising
model, (we omit the proof, but refer to Lyons~\cite{lyons_ising_1989},
who proved a similar result for the uniqueness of Gibbs measure of the
zero field Ising model on arbitrary trees).  However, it seems that
for the anti-ferromagnetic Ising model with field, one cannot hope to
prove an exact analog of Theorem~\ref{thm:main-hard-core}. (It is, however,
possible to obtain a weaker result which is analogous to Theorem 1.3
in \cite{sinclair13:_spatial}, in the sense that its hypotheses
require an upper bound on the maximum degree as well.)

Pursuant to the above observations, we are thus led to ask if there is
a natural notion of ``average degree''---different from the connective
constant---using which one can show decay of correlations for the
anti-ferromagnetic Ising model with field on graphs of unbounded
degree.  More precisely, we would want such a notion to be powerful
enough to yield an analog of Corollary~\ref{cor:hard-core-gndn} for the
anti-ferromagnetic Ising model with field on random sparse graphs.

The problem of identifying conditions under which a given spin system
exhibits uniqueness of Gibbs measure on a specific lattices also has
received much attention in the statistical physics literature.  For
example, it was predicted by Gaunt and
Fisher~\cite{gaunt_hardsphere_1965} and by Baxter, Enting and
Tsang~\cite{baxter_hard-square_1980} that the hard core model on
$\Z^2$ should exhibit uniqueness of Gibbs measure as long as $\lambda
< 3.79$.  However, as we pointed out above, the condition $\lambda <
1.6875$ obtained by Weitz~\cite{Weitz06CountUptoThreshold} was the
best known rigorous upper bound until the work of Restrepo \emph{et
  al.}~\cite{restrepo11:_improv_mixin_condit_grid_count} and Vera
\emph{et al.}~\cite{vera13:_improv_bound_phase_trans_hard}, and these
latter papers also employed Weitz's idea of analyzing the
self-avoiding walk (SAW) tree. Although the analysis in this paper is not as
strictly specialized to the case of $\Z^2$ as were those of Restrepo
\emph{et al.} and Vera \emph{et al.}, it still follows Weitz's
paradigm of analyzing a SAW tree.

A natural question therefore arises: can the method of analyzing decay
of correlations on the SAW tree establish uniqueness of Gibbs measure
of the hard core model on $\Z^2$ under the condition $\lambda < 3.79$
predicted in ~\cite{gaunt_hardsphere_1965} and
\cite{baxter_hard-square_1980}?  One reason to suspect that this might
not be the case is that the SAW tree approach typically establishes
not only that uniqueness of the Gibbs measure holds on the lattice but
also that \emph{strong} spatial mixing holds on the SAW tree---the
latter condition has algorithmic implications and is potentially
stronger than the uniqueness of the Gibbs measure.
It is in keeping with this
suspicion that the best bounds we obtain for $\Z^2$ are still quite
far from the conjectured bound of $\lambda < 3.79$.  This contrast
becomes even more pronounced when we consider the triangular lattice
$\mathbb{T}$: Baxter~\cite{baxter_hard_1980} solved the hard core
model on $\mathbb{T}$ exactly (parts of Baxter's solution were later
made rigorous by Andrews~\cite{andrews_hard-hexagon_1981}) using
combinatorial tools, and showed that the uniqueness of Gibbs measure
holds when $\lambda < 11.09$.  On the other hand, the SAW tree
approach---using published estimated of the connective estimates of
$\mathbb{T}$ as a black box---can only show so far that strong spatial
mixing holds under the condition $\lambda < 0.961$ (see
Table~\ref{fig:1}).  However, it must be pointed out that it may be
possible to significantly improve this bound by analyzing the
connective constant of the Weitz SAW tree of $\mathbb{T}$, as done in
Appendix~\ref{sec:descr-numer-results} for $\Z^2$.

\paragraph{Acknowledgments.} We thank Elchanan Mossel, Allan Sly, Eric
Vigoda and Dror Weitz for helpful discussions.%

{

}

\appendix

\section{Description of numerical results}
\label{sec:descr-numer-results}

In this section, we describe the derivation of the numerical bounds in
Table~\ref{fig:1}.  As in~\cite{sinclair13:_spatial}, all of the
bounds are direct applications of Theorem~\ref{thm:main-hard-core}
using published upper bounds on the connective constant for the
appropriate graph (except for the starred bound of 2.538 for the case of
$\mathbb{Z}^2$, which we discuss in greater detail below).  The exact
connective constant is not known for the Cartesian lattices $\Z^2,
\Z^3, \Z^4, \Z^5$ and $\Z^6$, and the triangular lattice~$\mathbb{T}$,
and we use the rigorous upper and lower bounds available in the
literature~\cite{madras96:_self_avoid_walk,
  weisstein:_self_avoid_walk_connec_const}.  In contrast, for the
honeycomb lattice, Duminil-Copin and
Smirnov~\cite{duminil-copin_connective_2012} rigorously established
the connective constant to be $\mathbb{H}$ is $\sqrt{2 + \sqrt{2}}$ in
a recent breakthrough, and this is the bound we use for that lattice.
In order to apply Theorem~\ref{thm:main-hard-core} for a given lattice
of connective constant at most $\Delta$, we simply need to compute
$\lambda_c(\Delta) = \frac{\Delta^\Delta}{(\Delta-1)^{(\Delta+1)}}$,
and the monotonicity of $\lambda_c$ guarantees that the lattice
exhibits strong spatial mixing as long as $\lambda <
\lambda_c(\Delta)$.

We now consider the special case of $\mathbb{Z}^2$. As we pointed out
in the introduction, any improvement in the connective constant of a
lattice (or that of the Weitz SAW tree corresponding to the lattice)
will immediately lead to an improvement in our bounds.  In fact, as we
discuss below, Weitz's construction allows for significant
freedom in the choice of the SAW tree.  We show here that using
a tighter combinatorial analysis of the connective constant of a
suitably chosen Weitz SAW tree of $\mathbb{Z}^2$, we can improve upon
the bounds obtained by Restrepo \emph{et
al.}~\cite{restrepo11:_improv_mixin_condit_grid_count} and Vera
\emph{et al.}~\cite{vera13:_improv_bound_phase_trans_hard} using
sophistical methods tailored to the special case of $\mathbb{Z}^2$.
Our basic idea is to exploit the fact that the Weitz SAW tree adds
additional boundary conditions to the canonical SAW tree of the
lattice.  Thus, it allows a strictly smaller number of self-avoiding
walks than the canonical SAW tree, and therefore can have a smaller
connective constant than that of the lattice itself.  Further, as
in~\cite{sinclair13:_spatial}, the proof of
Theorem~\ref{thm:main-hard-core} only uses the Weitz SAW tree, and
hence the bounds obtained there clearly hold if the connective
constant of the Weitz SAW tree is used in place of the connective
constant of the lattice.

The freedom in the choice of the Weitz SAW tree---briefly alluded to
above---also offers the opportunity to incorporate another tweak which
can potentially increase the effect of the boundary conditions on the
connective constant.  In Weitz's construction, the boundary conditions
on the SAW tree are obtained in the following way (see Theorem 3.1 in
\cite{Weitz06CountUptoThreshold}).  First, the neighbors of each
vertex are ordered in a completely arbitrary fashion: this ordering
need not even be consistent across vertices.  Whenever a loop, say
$v_0, v_1, \ldots, v_l, v_0$ is encountered in the construction of the
SAW tree, the occurrence of $v_0$ which closes the loop is added to
the tree along with a boundary condition which is determined by the
ordering at $v_0$: if the neighbor $v_1$ (which ``started'' the loop)
happens to be smaller than $v_l$ (the last vertex before the loop is
discovered) in the ordering, then the last copy of $v_0$ appears in
the tree fixed as ``occupied'', while otherwise, it appears as
``unoccupied''.

The orderings at the vertices need not even be fixed in advance, and
different copies of the vertex $v$ appearing in the SAW tree can have
different orderings, as long as the ordering at a vertex $v$ in the
tree is a function only of the path from the root of the tree to $v$.
We now specialize our discussion to $\Z^2$.  The simplest such
ordering is the ``uniform ordering'', where we put an ordering on the
cardinal directions \emph{north}, \emph{south}, \emph{east} and
\emph{west}, and order the neighbors at each vertex in accordance with
this ordering on the directions.  This was the approach used by
Restrepo \emph{et
  al}.~\cite{restrepo11:_improv_mixin_condit_grid_count}.

However, it seems intuitively clear that it should be possible to
eliminate more vertices in the tree by allowing the ordering at a
vertex $v$ in the tree to depend upon the path taken from the origin
to $v$.  We use a simple implementation of this idea by using a
``relative ordering'' which depends only upon the last step of such a
path.  In particular, there are only three possible options available
at a vertex $v$ in the tree (except the root): assuming the parent of
$v$ in the tree is $u$: the first is to go \emph{straight}, i.e., to
proceed to the neighbor of $v$ (viewed as a point in $\mathbb{Z}^2$
which lies in the same direction as the vector $v - u$, where $v$ and
$u$ are again viwed as points in $\mathbb{Z}^2$).  Analogously, we can
also turn \emph{left} or \emph{right} with respect to this direction.
Our ordering simply stipulates that \emph{straight} $>$ \emph{right}
$>$ \emph{left}.

To upper bound the connective constant of the Weitz SAW tree, we use
the standard method of \emph{finite memory self-avoiding
  walks}~\cite{madras96:_self_avoid_walk}---these are walks which are
constrained only to not have cycles of length up to some finite length
$L$. Clearly, the number of such walks of any given length $\ell$
upper bounds $N(v, \ell)$.  In order to bring the boundary conditions
on the Weitz SAW tree into play, we further enforce the constraint
that the walk is not allowed to make any moves which will land it in a
vertex fixed to be ``unoccupied'' by Weitz's boundary conditions (note
that a vertex $u$ can be fixed to be ``unoccupied'' also because one
of its children is fixed to be ``occupied'': the independence set
constraint forces $u$ itself to be ``unoccupied'' in this case, and
hence leads to additional pruning of the tree by allowing the other
children of $u$ to be ignored).  Such a walk can be in one of a finite
number $k$ (depending upon $L$) of states, such that the number of
possible moves it can make to state $j$ while respecting the above
constraints is some finite number $M_{ij}$.  The $k\times k$ matrix $M
= (M_{ij})_{i,j\in[k]}$ is called the \emph{branching
  matrix}~\cite{restrepo11:_improv_mixin_condit_grid_count}. We
therefore get $N(v,\ell) \leq \vec{e_1}^TM^\ell\vec{1}$, where
$\vec{1}$ denotes the all $1$'s vector, and $\vec{e_1}$ denotes the
co-ordinate vector for the state of the zero-length walk.

Since the entries of $M$ are non-negative, the Perron-Frobenius
theorem implies that one of the maximum magnitude eigenvalues of
the matrix $M$ is a positive real number $\gamma$.  Using Gelfand's
formula (which states that $\gamma =
\lim_{\ell\rightarrow\infty}\norm{M^\ell}^{1/\ell}$, for any fixed
matrix norm) with the $\ell_\infty$ norm to get the last equality, we
see that
\begin{displaymath}
  \limsup_{\ell\rightarrow\infty} N(v,\ell)^{1/\ell} \leq
  \limsup_{\ell\rightarrow\infty}(\vec{e_1}^TM^\ell\vec{1})^{1/\ell}
  \leq \limsup_{\ell\rightarrow\infty}\norm[\infty]{M^\ell}^{1/\ell}
  = \gamma.
\end{displaymath}
Hence, the largest real eigenvalue $\gamma$ of $M$ gives a bound on
the connective constant of the Weitz SAW tree.

Using the matrix $M$ corresponding to walks in which cycles of length
at most $L=26$ are avoided, we get that the connective constant of the
Weitz SAW tree is at most $2.433$ (we explicitly construct the matrix
$M$ and then use Matlab to compute its largest eigenvalue). Using this
bound for $\Delta$, and applying Theorem~\ref{thm:main-hard-core} as
described above, we get the bound $2.529$ for $\lambda$ in the
notation of the table, which is better than the bounds obtained by
Restrepo \emph{et
  al.}~\cite{restrepo11:_improv_mixin_condit_grid_count} and Vera
\emph{et al.}~\cite{vera13:_improv_bound_phase_trans_hard}.  With
additional computational optimizations we can go further and analyze
self avoiding walks avoiding cycles of length at most $L=30$. The
first optimization is merging ``isomorphic'' states (this will
decrease the number of states and hence the size of $M$ significantly,
allowing computation of the largest eigenvalue): formally, the state
of a SAW will be a suffix of length $s$ such that the Manhattan
distance between the final point and the point $s$ steps in the past
is less than $L-s$ (note that the state of a vertex in the SAW tree
can be determined from the state of its parent and the last step), and
two states are isomorphic if they have the same neighbors at the next
step of the walk. The second optimization is computing the largest
eigenvalue using the power method. For $L=30$ we obtain that the
connective constant of the Weitz SAW tree is at most $2.429$, which on
applying Theorem~\ref{thm:main-hard-core} yields the bound $2.538$ for
$\lambda$, as quoted in Table~\ref{fig:1}.

\section{Proofs omitted from Section~\ref{sec:decay-corr-saw}}
\label{sec:proof-lemma-refl}
We include here a proof of Lemma~~\ref{lem:mean-value}
for the convenience of the reader.
\begin{proof}[Proof of Lemma~\ref{lem:mean-value}]
  Define $H(t) \defeq f_{d,\lambda}^\phi(t\vec{x} + (1-t)\vec{y})$ for $t \in
  [0,1]$.  By the scalar mean value theorem applied to $H$, we have
  \begin{equation*}
    f_{d,\lambda}^\phi(\vec{x}) - f_{d,\lambda}^\phi(\vec{y}) = H(1) -
    H(0) = H'(s)\text{, for some $s \in [0,1]$}.
  \end{equation*}
  Let $\psi$ denote the inverse of the message $\phi$: the derivative
  of $\psi$ is given by $\psi'(y) = \frac{1}{\Phi(\psi(y))}$,
  where $\Phi$ is the derivative of $\phi$.  We now define the vector
  $\vec{z}$ by setting $z_i = \psi(sx_i + (1-s)y_i)$ for $1 \leq i
  \leq d$.  We then have
  \begin{align}
    \abs{f_{d,\lambda}^\phi(\vec{x}) - f_{d,\lambda}^\phi(\vec{y})} &=
    \abs{H'(s)} = \abs{\ina{\nabla{f_{\lambda, d}^\phi(s\vec{x} +
        (1-s)\vec{y})},
        \vec{x} - \vec{y}}}\nonumber\\
    &= \Phi(f_{d,\lambda}(\vec{z}))
    \abs{
      \sum_{i=1}^d
      \frac{x_i - y_i}{\Phi(z_i)}
      \pdiff{f_{d,\lambda}}{z_i}
    }, \quad\text{using the chain rule}\nonumber\\
    &\leq
    \Phi\inp{f_{d,\lambda}(\vec{z})}
    \sum_{i=1}^d
    \frac{\abs{y_i-x_i}}{\Phi(z_i)}
    \abs{\pdiff{f_{d,\lambda}}{z_i}},
    \quad\text{as claimed.}\nonumber
  \end{align}
  We recall that for simplicity, we are using here the somewhat
  non-standard notation $\pdiff{f}{z_i}$ for the value of the partial
  derivative $\pdiff{f}{R_i}$ at the point $\vec{R} = \vec{z}$.
\end{proof}

We now give the proof of the Lemma~\ref{lem:general-tree}.  The proof
is syntactically identical to the proof of a similar lemma in
\cite{sinclair13:_spatial}, and the only difference (which is of
course crucial for our purposes) is the use of the more specialized
Lemma~\ref{lem:tech} in the inductive step.  %
\begin{proof}[Proof of Lemma~\ref{lem:general-tree}]
  Recall that given a vertex $v$ in $T_{\leq C}$, $T_v$ is the subtree
  rooted at $v$ and containing all the descendants of $v$, and
  $F_v(\sigma)$ is the value computed by the recurrence at the root
  $v$ of $T_v$ under the initial condition $\sigma$ restricted to
  $T_v$.  We will denote by $C_v$ the restriction of the cutset $C$ to
  $T_v$.

  By induction on the structure of $T_\rho$, we will now show that for
  any vertex $v$ in $T_\rho$ which is at a distance $\delta_v$ from
  $\rho$, and has arity $d_v$, one has
  \begin{equation}\label{eq:7}
    |\phi(F_v(\sigma)) - \phi(F_v(\tau))|^q
    \leq M^q\sum_{u \in C_v} \alpha^{|u|-\delta_v}.
  \end{equation}
  To see that this implies the claim of the lemma, we observe that
  since $F_\rho(\sigma)$ and $F_\rho(\tau)$ are in the interval $[0,
  b]$, we have $|F_v(\sigma) - F_v(\tau)| \leq
  \frac{1}{L}|\phi(F_v(\sigma)) - \phi(F_v(\tau))|$.  Hence, taking $v
  = \rho$ in eq.~(\ref{eq:7}),  the claim of the lemma follows from
  the above observation.

  We now proceed to prove eq.~(\ref{eq:7}). The base case of the
  induction consists of vertices $v$ which are either of arity $0$ or
  which are in $C$.  In the first case (which includes the case where
  $v$ is fixed by both the initial conditions to the same value), we
  clearly have $F_v(\sigma) = F_v(\tau)$, and hence the claim is
  trivially true.  In the second case, we have $C_v = \inb{v}$, and
  all the children of $v$ must lie in $C'$.  Thus, in this case, the
  claim is true by the definition of $M$.

  We now proceed to the inductive case.  Let $v_1, v_2, \ldots
  v_{d_v}$ be the children of $v$, which satisfy eq.~(\ref{eq:7}) by
  induction.  In the remainder of the proof, we suppress the
  dependence of $\xi$ on $\phi$ and $q$. Applying Lemma~\ref{lem:tech}
  followed by the induction hypothesis, we then have, for some
  positive integer $k\leq d_v$
  \begin{align*}
    \abs{\phi(R_v(\sigma)) - \phi(R_v(\tau))}^q &\leq
    \xi(k)\sum_{i=1}^{d_v}\abs{\phi(R_{v_i}(\sigma)) -
      \phi(R_{v_i}(\tau))}^q\text{, using Lemma~\ref{lem:tech}}\\
    &\leq M^q\xi(k)\sum_{i=1}^{d_v}\sum_{u \in C_{v_i}}\alpha^{|u| -
      \delta_{v_i}}\text{, using the
      induction hypothesis}\\
    &\leq M^q\sum_{u \in C_v}\alpha^{|u| - \delta_{v}}\text{,
      using $\xi(k) \leq \alpha$ and $\delta_{v_i} = \delta_{v}
      + 1$}.
  \end{align*}
  This completes the induction.
\end{proof}

\section{Proofs omitted from Section~\ref{sec:spec-mess-hard}}
\label{sec:proofs-omitted-from-4}
\subsection{Maximum of $\nu_\lambda$ and implications for strong
  spatial mixing}
\label{sec:monot-prop-nu_l}
In this section, we prove Lemma~\ref{lem:tau-props}.
\begin{proof}[Proof of Lemma~\ref{lem:tau-props}]
  We first prove that given $\lambda$, $\tilde{x}_\lambda(d)$ is a
  \emph{decreasing} function of $d$.  For ease of notation, we
  suppress the dependence of $\tilde{x}_\lambda(d)$ on $d$ and
  $\lambda$.  From Lemma~\ref{lem:fixed-point}, we know that
  $\tilde{x}$ is the unique positive solution of $d\tilde{x} = 1  +
  f_d(\tilde{x})$.  Differentiating the equation with respect to
  $d$ (and denoting $\diff{\tilde{x}}{d}$ by $\tilde{x}'$), we have
  \begin{displaymath}
    \tilde{x} + d\tilde{x}' = -f_d(\tilde{x})\insq{\frac{d\tilde{x}'}{1+\tilde{x}}  + \log(1+\tilde{x})}
  \end{displaymath}
  which in turn yields
  \begin{align}
    \tilde{x}' &=
    -\frac{(1+\tilde{x})\insq{f_d(\tilde{x})\log(1+\tilde{x}) +
        \tilde{x}}}{d(1+d)\tilde{x}}.\label{eq:21}
  \end{align}
  Since $\tilde{x} \geq 0$, this shows that $\tilde{x}$ is a decreasing
  function of $d$.

  We now consider the derivative of $\nu_\lambda(d)$ with respect to
  $d$.  Recalling that $\nu_\lambda (d) = \xi(d)=
  \Xi(d, \tilde{x}_\lambda(d))$ and then using the chain rule, we have
  \begin{align}
    \nu_\lambda'(d) &= \Xi^{(1,0)}(d,\tilde{x}) +
    \Xi^{(0,1)}(d,\tilde{x})\diff{\tilde{x}}{d}\nonumber\\
    &=\Xi^{(1,0)}(d,\tilde{x}), \text{ since $\Xi^{(0,1)}(d,\tilde{x})
      = 0$ by definition of $\tilde{x}$}\nonumber\\
    &=\Xi(d,\tilde{x})\insq{\frac{q-1}{d} -
      \frac{q\log(1+\tilde{x})}{2(1+f_{d,\lambda}(\tilde{x}))}}\nonumber\\
    &=\frac{q\Xi(d, \tilde{x})}{d}\insq{1 - \frac{1}{q} -
      \frac{\log\inp{1+\tilde{x}}}{2\tilde{x}}}.\label{eq:18}
  \end{align}
  Here, we use $1 + f_{d,\lambda}(\tilde{x}) = d\tilde{x}$ to get the
  last equality.  We now note that the quantity inside the square
  brackets is an increasing function of $\tilde{x}$, and hence a
  decreasing function of $d$.  Since $\Xi(d, \tilde{x})$ is positive,
  this implies that there can be at most one positive zero of
  $\nu_\lambda(d)$, and if such a zero exists, it is the unique
  maximum of $\nu_\lambda(d)$.

  We now complete the proof by showing that $\nu_\lambda'(d) = 0$ for
  $d = \Delta_c(\lambda)$.  At such a $d$, we
  have $\lambda = \lambda_c(d) = \frac{d^d}{(d-1)^{d+1}}$.  We then
  observe that $\tilde{x}(d) = \frac{1}{d-1}$, since
  \begin{displaymath}
    1 + f_d(\tilde{x}) =
    1 + \frac{d^d}{(d-1)^{d+1}}\cdot\frac{(d-1)^d}{d^d} =
    \frac{d}{d-1} = d\tilde{x}.
  \end{displaymath}
  As an aside, we note that this is not a coincidence. Indeed, when
  $\lambda = \lambda_c(d)$, $\tilde{x}$ as defined above is well known
  to be the unique fixed point of $f_d$, and the potential function
  $\Phi$ was chosen in \cite{li_correlation_2011} in part to make sure
  that at the critical activity, the fixed point is also the
  maximizer of (an analog of) $\Xi(d, \cdot)$.

  We now substitute the value of $\frac{1}{q}$ and $\tilde{x}$ at $d =
  \Delta_c$ to verify that
  \begin{displaymath}
    \nu_\lambda(\Delta_c) = \frac{q\Xi\inp{\Delta_c,
        \frac{1}{\Delta_c - 1}}}{2\Delta_c}\insq{(\Delta_c -  1)\log\inp{1+ \frac{1}{\Delta_c -
          1}} - (\Delta_c - 1)\log\inp{1+\frac{1}{\Delta_c - 1}}} = 0,
  \end{displaymath}
  as claimed.  Substituting these values of $d$ and $\tilde{x}$, along
  with the earlier observation that $f_{\Delta_c}(\tilde{x}) =
  \tilde{x} = \frac{1}{\Delta_c - 1}$, into the definition of
  $\nu_\lambda$, we have
  \begin{align*}
    \nu_\lambda(\Delta_c) = \Xi\inp{\Delta_c, \frac{1}{\Delta_c - 1}}
    &= \Delta_c^{q-1} \inp{\frac{\tilde{x}}{1+\tilde{x}}
      \frac{f_{\Delta_c}(\tilde{x})}{1
        + f_{\Delta_c}(\tilde{x})}}^{q/2}\\
    &=\frac{1}{\Delta_c},
  \end{align*}
  which completes the proof.
\end{proof}

\subsection{Symmterizability of the message}
\label{sec:symmt-mess}
In this section, we prove Lemma~\ref{lem:observ-symm}.  We start with
the following technical lemma.
\begin{lemma}
  Let $r \geq 1$, $0 < A < 1$, $\gamma(x) \defeq (1-x)^r$ and $g(x)
  \defeq \gamma(Ax) + \gamma(A/x)$.  Note that $g(x) = g(1/x)$, and
  $g$ is well defined in the interval $[A, 1/A]$.  Then all the maxima
  of the function $g$ in the interval $[A, 1/A]$ lie in the set
  $\inb{1/A, 1, A}$. \label{lem:symm-basic}
\end{lemma}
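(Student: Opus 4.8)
The plan is to show that $g$ has no local maximum in the \emph{open} interval $(A,1/A)$ other than (possibly) $x=1$; since $g$ is continuous on $[A,1/A]$, any global maximizer is then either an interior local maximum (hence $x=1$) or an endpoint, so the set of maximizers lies in $\{A,1,1/A\}$. We may assume $r>1$: when $r=1$, $g(x)=2-A(x+x^{-1})$ is concave on $(0,\infty)$ with its unique maximum at $x=1\in\{A,1,1/A\}$. So fix $1<r\le 2$. On $(A,1/A)$ both $1-Ax$ and $1-A/x$ lie in $(0,1-A^2)$, so $g$ is smooth there, and $g(x)=g(1/x)$; by this symmetry it suffices to handle critical points in $(1,1/A)$, and I will show every such critical point is a strict local \emph{minimum}.

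A direct computation gives $g'(x)=rA\bigl[(x-A)^{r-1}x^{-r-1}-(1-Ax)^{r-1}\bigr]$. Writing $p:=Ax$ and $q:=A/x$ (so $pq=A^2$ and $p/q=x^2$), one checks that for $x\in(A,1/A)$ one has $g'(x)>0\iff m(q)>m(p)$, where $m(z):=z(1-z)^{r-1}$. Since $m'(z)=(1-z)^{r-2}(1-rz)$, $m$ is strictly increasing on $(0,1/r)$ and strictly decreasing on $(1/r,1)$; hence $m(p)=m(q)$ with $p\neq q$ and $p,q\in(0,1)$ forces $q<1/r<p$. In particular a critical point $x_0\in(1,1/A)$ of $g$ yields $p_0:=Ax_0$, $q_0:=A/x_0$ with $q_0<1/r<p_0<1$ and $m(p_0)=m(q_0)$.

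Differentiating once more and using $g'(x_0)=0$ (which makes $(x_0-A)^{r-1}x_0^{-r-1}$ and $(1-Ax_0)^{r-1}$ equal, to a common positive value), one obtains after simplification
\[
g''(x_0)=rA\,(1-Ax_0)^{r-1}\left[\frac{r-1}{x_0-A}-\frac{r+1}{x_0}+\frac{A(r-1)}{1-Ax_0}\right],
\]
and multiplying the bracket by $x_0$ and rewriting it in terms of $p_0,q_0$ gives
\[
g''(x_0)>0\iff \frac{1}{1-q_0}+\frac{p_0}{1-p_0}>\frac{r+1}{r-1}\iff \frac{p_0}{1-p_0}+\frac{q_0}{1-q_0}>\frac{2}{r-1}.
\]
So the whole statement reduces to the inequality: \emph{if $0<q<1/r<p<1$, $r\in(1,2]$ and $m(p)=m(q)$, then $\tfrac{p}{1-p}+\tfrac{q}{1-q}>\tfrac{2}{r-1}$.} Clearing denominators, this is $(p+q)(1+rpq)>2(r+1)pq$, equivalently $\chi(p)+\chi(q)>2\chi(1)$ with $\chi(z):=z^{-1}+rz$; since $\chi(1/r)=\chi(1)=r+1$, the inequality becomes an equality precisely at the ``corner'' $p=q=1/r$, as it must. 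Proving this inequality is the crux.

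For the inequality I would first dispose of the easy range: $\tfrac{p}{1-p}\ge\tfrac{2}{r-1}$ as soon as $p\ge\tfrac{2}{r+1}$, and $\tfrac{1}{1-q}>1$ always, so $\tfrac{1}{1-q}+\tfrac{p}{1-p}>1+\tfrac{2}{r-1}=\tfrac{r+1}{r-1}$ holds whenever $p\ge\tfrac{2}{r+1}$, using nothing about the constraint. It then remains to treat the narrow window $\tfrac1r<p<\tfrac{2}{r+1}$ (nonempty only when $p$, hence also $q$, is close to $1/r$), and here the constraint $p(1-p)^{r-1}=q(1-q)^{r-1}$ is essential: when $r=2$ it simplifies to $p+q=1$, whereupon $(p+q)(1+rpq)>2(r+1)pq$ reads $pq<\tfrac14$, which is immediate since $p\neq q$ and $p+q=1$; for general $r\in(1,2)$ one carries out the analogous computation, expanding $(p+q)(1+rpq)-2(r+1)pq$ and using the constraint (for instance via a Taylor expansion about $p=q=1/r$, where the leading correction can be shown to have the correct sign using $r\le 2$). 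Granting this, every critical point of $g$ in $(1,1/A)$, and by symmetry in $(A,1)$, is a strict local minimum, so $g$ has no interior local maximum except possibly $x=1$, and hence $g$ attains its maximum on $[A,1/A]$ only at points of $\{A,1,1/A\}$. The main obstacle is exactly this last inequality in the window near $p=q=1/r$, where it is tight and one must extract the sign of the next-order term from $m(p)=m(q)$ together with the hypothesis $r\le 2$.
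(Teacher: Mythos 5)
Your overall strategy --- show that every critical point of $g$ in the open interval $(1,1/A)$ is a strict local minimum, so that the maximum over $[1,1/A]$ sits at an endpoint, then invoke the symmetry $g(x)=g(1/x)$ --- is sound in outline and genuinely different from the paper's argument, which never touches $g''$: it substitutes $t=\frac{r+1}{r-1}$, shows that $g'$ has the sign of $h(x)=Ax^{t+1}-x^t+x-A$, and tracks the sign of $h$ via $h'$ and $h''$ to conclude that $g'$ changes sign at most once on $(1,1/A)$, and then only from negative to positive. Your reductions up to the criterion $g''(x_0)>0\iff \frac{p_0}{1-p_0}+\frac{q_0}{1-q_0}>\frac{2}{r-1}$ (with $m(p_0)=m(q_0)$, $q_0<1/r<p_0$, $m(z)=z(1-z)^{r-1}$) check out. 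But as written the proof has two genuine gaps and one algebra error.

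First, the lemma is stated for \emph{all} $r\ge 1$, and it is applied in Lemma~\ref{lem:observ-symm} with $r=a/2$ where $a$ is the H\"{o}lder exponent from eq.~(\ref{eq:17}); that exponent tends to $\infty$ as $\Delta_c\to 1^+$, so you cannot ``fix $1<r\le 2$'' (you may be conflating this lemma with Lemma~\ref{lem:symm-basic-monomer} for the monomer--dimer model, which does carry the hypothesis $r\le 2$). Second, even within $r\in(1,2]$ the crux inequality on the window $1/r<p<2/(r+1)$ is only verified at $r=2$; for general $r$ you write ``one carries out the analogous computation\dots Granting this,'' which is exactly the hard part, and the mechanism you propose for it explicitly uses $r\le 2$, so it cannot close the case actually needed. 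Third, clearing denominators in $\frac{p}{1-p}+\frac{q}{1-q}>\frac{2}{r-1}$ gives $(r+1)(p+q)>2(1+rpq)$, not $(p+q)(1+rpq)>2(r+1)pq$; the two happen to coincide in the one instance you check ($r=2$, $p+q=1$, where both reduce to $pq<\tfrac14$), which masks the slip, but they are inequivalent in general (try $r=2$, $p=0.5$, $q=0.1$). If you want to salvage this route, the clean target is $\psi(p)+\psi(q)>2\psi(1/r)$ with $\psi(z)=z/(1-z)$ on the level set $\inb{m(p)=m(q)}$, for every $r>1$; until that is proved the argument is incomplete, and the paper's first-derivative analysis remains the complete one.
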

Before proving the lemma, we observe the following simple
consequence. Consider $0 \leq s_1, s_2 \leq 1$ such that $s_1s_2$
is constrained to be some fixed constant $C < 1$.  Then, applying the
lemma with $A = \sqrt{C}$ we see that $\gamma(s_1)$ + $\gamma(s_2)$ is
maximized either when $s_1 = s_2$ or when one of them is $1$ and the
other is $C$.
\begin{proof}[Proof of Lemma~\ref{lem:symm-basic}]
  Note that when $r = 1$, $g(x) = 2 - A(x + 1/x)$, which is maximized
  at $x = 1$.  We therefore assume $r > 1$ in the following.

  We consider the derivative $g'(x)
  =Ar\insq{(1-A/x)^{r-1}\frac{1}{x^2} - (1-Ax)^{r-1}}$.  Note that $g(x) =
  g(1/x)$ and that $g'(x)$ and $g'(1/x)$ have opposite signs, so it is
  sufficient to study $g$ in the range $[1, 1/A]$.  We now note that
  in the interior of the intervals of interest $g'$ always has the
  same sign as
  $$h(x) \defeq A x^{t+1} - x^t + x - A,$$ where $t \defeq
  \frac{r+1}{r-1} > 1$ for $r>1$. We therefore only need to study the
  sign of $h$ in the interval $I \defeq [1, 1/A]$.  We note that $h(1)
  = 0$, and consider the derivatives of $h$.
  \begin{align*}
    h'(x) &= A(t+1)x^t - tx^{t-1} + 1,\\
    h''(x) &= t(t+1)x^{t-2}\insq{Ax - \frac{1}{r}}.
  \end{align*}
  Note that $h'(1) = (t+1)[A - 1/r]$.  We now break the analysis into
  two cases.
  \begin{description}
  \item[Case 1: $A \geq 1/r$.] In this case, we have $h''(x) > 0$ for
    $x$ in the interior of the interval $I$, and $h'(1) \geq 0$.  This
    shows that $h'(x) > 0$ for $x$ in the interior of $I$, so that $h$
    is strictly increasing in this interval.  Since $h(1) = 0$,
    this shows that $h$ (and hence $g'$) are positive in the interior
    of $I$.  Thus, $g$ is maximized in $I$ at $x = 1/A$.
  \item[Case 2: $A < 1/r$.] We now have $h'(1) < 0$ and $h''(1) < 0$.
    Further, defining $x_0 = \frac{1}{Ar}$, we see that $h''$ is
    negative in $[1, x_0)$ and positive in $(x_0, 1/A]$ (and $0$ at
    $x_0$).  Since $h'(1) < 0$, this shows that $h'$ is negative in
    $[1, x_0]$, and hence can have no zeroes there.  Further, we see
    that $h'$ is strictly increasing in $[x_0, 1/A]$, and hence can
    have at most one zero $x_1$ in $[x_0, 1/A]$.

    If no such zero exists, then $h'$ is negative in $I$.  In this
    case, we see that $h$ (and hence $g'$) is negative in the interior
    of $I$, and hence $g$ is maximized at $x = 1$.  We now consider
    the case where there is a zero $x_1$ of $h'$ in $[x_0, 1/A]$.  By
    the sign analysis of $h''$, we know that $h'$ is negative in $[1,
    x_1)$ and positive in $(x_1, 1]$.  We thus see that $h$ is
    decreasing (and negative) in $(1, x_1)$ and increasing in $(x_1,
    1/A)$.  It can therefore have at most one zero $x_2$ in $(x_1,
    1/A]$.  If no such zero exists, then $h$ (and hence $g'$) is
    negative in the interior of $I$, and hence $g$ is maximized at $x
    = 1$.  If such a zero $x_2$ exists in $(x_1, 1/A]$, then---because
    $h$ is increasing in $(x_1, 1/A)$ and negative in $(1, x_1]$---$h$
    (and hence $g'$) is negative in $(1, x_2)$ and positive in $(x_2,
    1/A)$, which shows that $g$ is maximized at either $x = 1$ or at
    $x = 1/A$.
  \end{description}
\end{proof}

We now prove Lemma~\ref{lem:observ-symm}.
\begin{proof}[Proof of Lemma~\ref{lem:observ-symm}]
  We first verify the second condition in the definition of
  symmetrizability:
  \begin{displaymath}
    \lim_{x\rightarrow 0^+} \frac{1}{(1+x)\Phi(x)} =
    \lim_{x\rightarrow 0^+} 2\sqrt{\frac{x}{1+x}} = 0.
  \end{displaymath}
  We now recall the program used in the definition of
  symmetrizability, with the definitions of $\Phi$ and $f_d$ substituted, and
  with $r = a/2$:
  \begin{align}
    \max\qquad &\sum_{i=1}^d \inp{\frac{x}{1+x}}^{r},\qquad \text{where}\nonumber\\
    &\lambda\prod_{i=1}^d\frac{1}{1+x_i} = B\label{eq:1}\\
    &x_i \geq 0,\qquad 1\leq i\leq d\nonumber
  \end{align}
  Note that eq. (\ref{eq:1}) implies that $x_i \leq \lambda/B - 1$, so
  that the feasible set is compact.  Thus, if the feasible set is
  non-empty, there is at least one (finite) optimal solution to the
  program.  Let $\vec{y}$ be such a solution.  Suppose without loss of
  generality that the first $k$ co-ordinates of $\vec{y}$ are non-zero
  while the rest are $0$.  We claim that $y_i = y_j \neq 0$ for all $1
  \leq i \leq j$ and $y_i = 0$ for $i > k$.

  To show this, we first define another vector $\vec{s}$ by setting
  $s_i = \frac{1}{1+ x_i}$.  Note that $s_i = s_j$ if and only if $x_i
  = x_j$ and $s_i = 1$ if and only if $x_i = 0$.  Note that the
  constraint in eq. (\ref{eq:1}) is equivalent to
  \begin{equation}
    \prod_{i=1}^ds_i = B/\lambda. \label{eq:3}
  \end{equation}
  Now suppose that there exist $i\neq j$ such that $y_iy_j \neq 0$ and
  $y_i \neq y_j$.  We then have $s_i \neq s_j$ and $0 < s_1,
  s_2 < 1$.  Now, since $r = a/2 \geq 1$ when $a \geq 2$,
  Lemma~\ref{lem:symm-basic} implies that at least one of the
  following two operations, performed while keeping the product
  $s_is_j$ fixed (so that the constraints in
  eqs. (\ref{eq:1},\ref{eq:3}) are satisfied), will increase the value of
  the sum $\gamma(s_i) + \gamma(s_j) =
  \inp{\frac{y_i}{1+y_i}}^r + \inp{\frac{y_j}{1+y_j}}^r$:
  \begin{enumerate}
  \item Making $s_i = s_j$, or
  \item Making $y_i = 0$ (so that $s_i = 1$).
  \end{enumerate}
  Thus, if $\vec{y}$ does not have all its non-zero entries equal, we
  can increase the value of the objective function while maintaining
  all the constraints.  This contradicts the fact that $\vec{y}$ is a
  maximum, and completes the proof.
\end{proof}

\section{Proofs omitted from Section~\ref{sec:spec-mess-monom}}

\subsection{Symmetrizability of the message}
\label{sec:symm-mess-monomer}

In this section, we prove Lemma~\ref{lem:observ-symm-monomer}.  As in
the case of the hard core model, we begin with an auxiliary technical
lemma.
\begin{lemma}
  Let $r$ and $a$ satisfy $1 < r \leq 2$ and $0 < a < 1$
  respectively. Consider the functions $\gamma(x) \defeq x^r(2-x)^r$
  and $g(x) \defeq \gamma(a-x) + \gamma(a+x)$.  Note that $g$ is even
  and is well defined in the interval $[-A, A]$, where $A \defeq
  \min(a, 1-a)$.  Then all the maxima of the function $g$ in the
  interval $[-A, A]$ lie in the set $\inb{-a, 0,
    a}$. \label{lem:symm-basic-monomer}
\end{lemma}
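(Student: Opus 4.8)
The plan is to follow the structure of the proof of Lemma~\ref{lem:symm-basic}: reduce the location of the maxima of $g$ to a zero-count analysis for $g'$, and carry that analysis out via a convenient substitution. Since $g$ is even it suffices to work on $[0,A]$. Put $c\defeq 1-a\in(0,1)$ and substitute $z_i=1-x_i$; writing the two perturbed coordinates as $x_1=a-x$, $x_2=a+x$ gives $g(x)=\psi(c-x)+\psi(c+x)$ with $\psi(z)\defeq(1-z^2)^r=\gamma(1-z)$, so that
\[
g'(x)=\psi'(c+x)-\psi'(c-x)=-2r\bigl[Q(c+x)-Q(c-x)\bigr],\qquad Q(z)\defeq z(1-z^2)^{r-1}=-\tfrac{1}{2r}\psi'(z).
\]
A short computation gives $Q'(z)=(1-z^2)^{r-2}\bigl(1-(2r-1)z^2\bigr)$, so on $[0,1]$ the function $Q$ is nonnegative, vanishes at $0$ and $1$, and is unimodal with strict maximum at $w\defeq1/\sqrt{2r-1}$; for $1<r\le2$ we have $w^2=1/(2r-1)\in[1/3,1)$, hence $w\in(1/2,1)$. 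In particular $g'(0)=0$, and $g''(0)=-4rQ'(c)$ has the sign of $c^2-w^2$, so $x=0$ is a local maximum of $g$ precisely when $c<w$ (equivalently $a>1-w$) and a local minimum when $c>w$.

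The heart of the argument is to count the zeros of $g'$ in $(0,A)$. I would do this through $\Theta(x)\defeq\log Q(c+x)-\log Q(c-x)$, which on $(0,A)$ is well defined, has the opposite sign to $g'(x)$, satisfies $\Theta(0)=0$, and tends to $\pm\infty$ as $x\to A^{-}$ (since then $c+x\to1$ or $c-x\to0$). Here $\Theta'(x)=\ell(c+x)+\ell(c-x)$ with $\ell\defeq(\log Q)'$, and the key point is that $\ell$ is the rational function $\ell(z)=\dfrac{(2r-1)(w^2-z^2)}{z(1-z^2)}$. Clearing the (positive) common denominator and writing $\sigma\defeq2c$ and $\pi\defeq(c+x)(c-x)=c^2-x^2$, one finds that $\Theta'(x)=0$ is equivalent to the quadratic equation $\pi^2+(3w^2-1)\pi+w^2(1-\sigma^2)=0$ in $\pi$. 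As $x$ runs over $(0,A)$, the quantity $\pi$ runs over $(\sigma-1,\ \sigma^2/4)$ when $a\le1/2$ (so $\sigma\ge1$) and over $(0,\ \sigma^2/4)$ when $a>1/2$ (so $\sigma<1$). Using $3w^2\ge1$ (from $w^2\ge1/3$): for $\sigma\ge1$ the quadratic has product of roots $w^2(1-\sigma^2)\le0$, hence at most one nonnegative root, so at most one root in the allowed range; for $\sigma<1$ its roots, if real, have positive product and nonpositive sum, hence are both $\le0$ and none lies in $(0,\sigma^2/4)$. So $\Theta'$ has at most one zero in $(0,A)$, and none when $a>1/2$; hence $\Theta$ is monotone or has a single turning point on $[0,A]$, and since $\Theta(0)=0$ and $\Theta(A^{-})=\pm\infty$ it follows that $\Theta$, and therefore $g'$, has at most one zero in $(0,A)$.

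To finish, I would combine this zero count with the boundary signs of $g'$, which are immediate: $g'(0^{+})$ has the sign of $c-w$, while $g'(A)>0$ if $a<1/2$ and $g'(A)<0$ if $a>1/2$. If $a\le1-w$ then $c\ge w$, so $g'$ is positive at both ends and (by parity) has no zero in between: $g$ is increasing on $[0,A]$ with maximum at $A=a$. If $1-w<a\le1/2$ then $g'$ is negative near $0$ and positive near $A$, so it changes sign exactly once: $g$ decreases then increases on $[0,A]$, with maximum at $0$ or at $A=a$. If $a>1/2$ then $g'<0$ throughout $(0,A)$: $g$ is strictly decreasing, with maximum at $x=0$. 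In all cases the maximizers lie in $\{-a,0,a\}$ --- when $a\le1/2$ the endpoints $\pm A$ equal $\pm a$, and when $a>1/2$ the set is $\{0\}$.

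The step I expect to take the most thought is the substitution in the second paragraph: without the observation that $\ell=(\log Q)'$ is rational with numerator proportional to $w^2-z^2$ --- which collapses $\Theta'(x)=0$ to a quadratic in $\pi=c^2-x^2$ --- bounding the zeros of $g'$ reverts to the kind of multi-branch calculus used for Lemma~\ref{lem:symm-basic}, made more delicate here because the perturbation $x\mapsto a\pm x$ is additive rather than multiplicative, so one would otherwise have to differentiate $g$ twice or thrice and split cases according to the relative positions of $c$, $w$, and the inflection point of $\log Q$. I would look first for this simplification and fall back on the direct derivative cascade only if it did not appear.
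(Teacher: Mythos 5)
Your argument is correct, and it reaches the key intermediate fact by a genuinely different route from the paper. The paper controls the sign changes of $g'$ via a derivative cascade: it shows $\gamma'''$ is strictly increasing on $[0,1]$ for $1<r\le 2$, deduces $g'''>0$ on $(0,A]$, hence that $g''$ is strictly increasing, and then runs a two-case analysis on the sign of $g''(0)$ (with a separate computation showing $g'(1-a)<0$ when $a>1/2$). You instead pass to $\psi(z)=(1-z^2)^r$, observe that $(\log Q)'$ with $Q=-\psi'/(2r)$ is a rational function with numerator proportional to $w^2-z^2$, and collapse $\Theta'(x)=0$ to a quadratic in $\pi=c^2-x^2$; I checked the symmetric-function expansion and the root-location analysis (product of roots $w^2(1-\sigma^2)$, sum $1-3w^2\le 0$, with $r\le 2$ entering only through $3w^2\ge 1$), and they are right. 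Both proofs ultimately establish that $g'$ changes sign at most once on $(0,A)$ and finish with the same endpoint-sign bookkeeping; your reduction localizes the use of $r\le 2$ to one inequality and yields a sharper qualitative picture ($g$ strictly decreasing when $a>1/2$, V-shaped or monotone when $a\le 1/2$), at the cost of a more delicate algebraic computation, whereas the paper's route is more elementary calculus with a longer case split. Two cosmetic points: at the degenerate parameters $c=w$ and $a=1/2$ your stated boundary signs of $g'$ become equalities (e.g.\ $g'(A)=0$ and $g'$ is in fact negative just inside $A$ when $a=1/2$), and your claim that $\Theta(A^-)=\pm\infty$ fails when $a=1/2$ and $r=2$; neither affects the conclusion, since the ``at most one zero of $g'$ in $(0,A)$'' count together with $\Theta(0)=0$ already forces the maximum onto $\inb{0,A}$ in those cases, but you should phrase those steps so the degenerate cases are visibly covered.
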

The lemma has the following simple consequence.  Let $ 0 \leq
s_1,s_2\leq 1$ be such that $(s_1 + s_2)/2$ is constrained to be some
fixed constant $a \leq 1$.  Then, applying the lemma with $s_1 = a-x,
s_2 = a + x$, we see that $\gamma(s_1) + \gamma(s_2)$ is maximized
either when $s_1 = s_2 = a$ or when one of them is $0$ and the other
is $2a$ (the second case can occur only when $a \leq 1/2$).

\begin{proof}[Proof of Lemma~\ref{lem:symm-basic-monomer}]
  Since $g$ is even, we only need to analyze it in the interval $[0,
  A]$, and show that restricted to this interval, its maxima lie in
  $\inb{0, a}$.

  We begin with an analysis of the third derivative of $\gamma$, which
  is given by
  \begin{equation}
    \gamma'''(x) = -4 r (r-1) (1-x)(1 - (1-x)^2)^{r-2}
    \insq{
      \frac{
        3-(2r -1)(1-x)^2
      }{
        1 - (1-x)^2
      }
    }.\label{eq:9}
  \end{equation}
  Our first claim is that $\gamma'''$ is strictly increasing in the
  interval $[0, 1]$ when $1 < r \leq 2$.  In the case when $r = 2$,
  the last two factors in eq. (\ref{eq:9}) simplify to constants, so that
  $\gamma'''(x) = -12r(r-1)(1-x)$, which is clearly strictly
  increasing.  When $1 < r < 2$, the easiest way to prove the claim is
  to notice that each of the factors in the product on the right hand
  side of is a strictly increasing non-negative function of $y = 1- x$
  when $x \in [0, 1]$ (the fact that the second and third factors are
  increasing and non-negative requires the condition that $r < 2
  $). Thus, because of the negative sign, $\gamma'''$ itself is a
  strictly \emph{decreasing} function of $y$, and hence a strictly
  \emph{increasing} function of $x$ in that interval.

  We can now analyze the behavior of $g$ in the interval $[0, A]$. We
  first show that when $a > 1/2$, so that $A = 1 - a \neq a$, $g$ does
  not have a maximum at $x = A$ when restricted to $[0,A]$.  We will
  achieve this by showing that when $1 > a > 1/2$, $g'(1-a) < 0$. To
  see this, we first compute $\gamma'(x) =
  2rx^{r-1}(2-x)^{r-1}(1-x)$, and then observe that
  \begin{align*}
    g'(1-a) &= \gamma'(1) - \gamma'(2a - 1)\\
    &= -\gamma'(2a-1) < 0\text{, since $0 < 2a -1 < 1$.}
  \end{align*}

  We now start with the observation that $g'''(x) = \gamma'''(a + x) -
  \gamma'''(a-x)$, so that because of the strict monotonicity of
  $\gamma'''$ in $[0,1]$ (which contains the interval $[0, A]$), we
  have $g'''(x) > 0$ for $x \in (0, A]$.  We note that this implies
  that $g''(x)$ is strictly increasing in the interval $[0, A]$. We
  also note that $g'(0) = 0$.  We now consider two cases.

  \begin{description}
  \item[Case 1: $g''(0) \geq 0$] Using the fact that $g''(x)$ is
    strictly increasing in the interval $[0, A]$ we see that $g''(x)$
    is also positive in the interval $(0, A]$ in this case.  This, along with
    the fact that $g'(0) = 0$, implies that $g'(x) > 0$ for $x \in (0,
    A]$, so that $g$ is strictly increasing in $[0, A]$ and hence is
    maximized only at $x = A$.  As proved above, this implies that the
    maximum of $g$ must be attained at $x = a$ (in other words, the
    case $g''(0) \geq 0$ cannot arise when $a > 1/2$ so that $A = 1 -a
    \neq a$).

  \item[Case 2: $g''(x) < 0$] Again, using the fact that $g''(x)$ is
    strictly increasing in $[0, A]$, we see that there is at most one
    zero $c$ of $g''$ in $[0, A]$.  If no such zero exists, then $g''$
    is negative in $[0, A]$, so that $g'$ is strictly decreasing in
    $[0, A]$.  Since $g'(0) = 0$, this implies that $g'$ is also
    negative in $(0, A)$ so that the unique maximum of $g$ in $[0, A]$
    is attained at $x = 0$.

    Now suppose that $g''$ has a zero $c$ in $(0, A]$.  As before, we
    can conclude that $g'$ is strictly negative in $[0, c]$, and
    strictly increasing in $[c, A]$.  Thus, if $g'(A) < 0$, $g'$ must
    be negative in all of $(0, A]$, so that $g$ is again maximized at
    $x = 0$ as in Case 1.  The only remaining case is when there
    exists a number $c_1 \in (c, A]$ such that $g'$ is negative in
    $(0, c_1)$ and positive in $(c_1, A]$.  In this case, we note that
    $g'(A) \geq 0$, so that---as observed above--we cannot have $A
    \neq a$.  Further, the maximum of $g$ in this case is at $x = 0$
    if $g(0) > g(A)$, and at $x = A$ otherwise.  Since we already
    argued that $A$ must be equal to $a$ in this case, this shows that
    the maxima of $g$ in $[0, A]$ again lie in the set $\inb{0,
      a}$. \qedhere
 \end{description}
\end{proof}

We now prove Lemma~\ref{lem:observ-symm-monomer}.
\begin{proof}[Proof of Lemma~\ref{lem:observ-symm-monomer}]
  We first verify the second condition in the definition of
  symmetrizability:
  \begin{displaymath}
    \lim_{p_i\rightarrow 0}
    \frac{1}{\Phi\inp{p_i}}
    \abs{\pdiff{f_{d,\gamma}}{p_i}}
    = \lim_{p_i\rightarrow 0}
    \frac{\gamma p_i(2-p_i)}{
      \inp{1 + \gamma\sum_{j=1}^d p_j}^2
    } = 0.
  \end{displaymath}
  We now recall the program used in the definition of symmetrizability
  with respect to exponent $r$, with the definitions of $\Phi$ and
  $f_{d,\gamma}$ substituted:
  \begin{align*}
    \max\qquad &\gamma^r f_{d,\gamma}(\vec{p})^{2r} \sum_{i=1}^d
    p_i^r(2-p_i)^r, \qquad \text{where}\\
    &\frac{1}{1 + \gamma \sum_{i=1}^d p_i} = B\\
    &0 \leq p_i \leq 1,\qquad 1\leq i\leq d
  \end{align*}
  Since we are only interested in the values of $\vec{p}$ solving the
  program, we can simplify the program as follows:
  \begin{align}
    \max\qquad &\sum_{i=1}^d p_i^r(2-p_i)^r,
    \qquad \text{where}\nonumber\\
    &\sum_{i=1}^d p_i = B' \defeq \frac{1-B}{\gamma B}
    \nonumber\\
    &0 \leq p_i \leq 1,\qquad 1\leq i\leq d\nonumber
  \end{align}
  We see that the feasible set is compact. Thus, if it is also
  non-empty, there is at least one (finite) optimal solution to the
  program.  Let $\vec{y}$ be such a solution.  Suppose without loss of
  generality that the first $k$ co-ordinates of $\vec{y}$ are non-zero
  while the rest are $0$.  We claim that $y_i = y_j \neq 0$ for all $1
  \leq i \leq j \leq k$.

  For if not, let $i\neq j$ be such that $y_iy_j \neq 0$ and $y_i \neq
  y_j$.  Let $y_i + y_j = 2a$. The discussion following
  Lemma~\ref{lem:symm-basic-monomer} implies that at least one of the
  following two operations, performed while keeping the sum $y_i +
  y_j$ fixed and ensuring that $y_i,y_j \in [0, 1]$ (so that all the
  constraints in the program are still satisfied), will increase the
  value of the sum $\gamma(y_i) + \gamma(y_j) = y_i^r(2-y_i)^r +
  y_j^r(2-y_j)^r$:
  \begin{enumerate}
  \item Making $y_i = y_j$, or
  \item Making $y_i = 0$ (so that $y_j = 2a$).  This case is possible
    only when $2a \leq 1$.
  \end{enumerate}
  Thus, if $\vec{y}$ does not have all its non-zero entries equal, we
  can increase the value of the objective function while maintaining
  all the constraints.  This contradicts the fact that $\vec{y}$ is a
  maximum, and completes the proof.
\end{proof}

\end{document}